\newcommand{\footremember}[2]{%
    \footnote{#2}
    \newcounter{#1}
    \setcounter{#1}{\value{footnote}}%
}
 \def\newblock{\ }%
\def\R{\mathbb{R}}
\def\supp{\mathop{\rm supp}}
\def\Lip{\mathop{\rm Lip}}
\def\argmin{\mathop{\rm arg\, min}}
\def\P{{\mathcal P}}
\newcommand{\proofpart}[2]{%
  \par
  \addvspace{\medskipamount}%
  \noindent\emph{Step #1: #2}\par\nobreak
  \addvspace{\smallskipamount}%
  \@afterheading
}
\providecommand{\keywords}[1]{\textbf{\textit{Keywords---}} #1}
\newtheorem{definition}{Definition}
\newtheorem{theorem}{Theorem}
\newtheorem{corollary}{Corollary}
\newtheorem{proposition}{Proposition}
\newtheorem{lemma}{Lemma}
\newtheorem{remark}{Remark}
\newtheorem{assumption}{Assumption}
\title{Robustness and Approximation of Discrete-time Mean-field Games under Discounted Cost Criterion}
\author{%
  U\u{g}ur Ayd{\i}n\footremember{alley}{Department of Mathematics, University of Connecticut, Storrs, CT, 06269, USA, ugur.aydin@uconn.edu}%
  \and Naci Saldi\footremember{trailer}{Department of Mathematics, Bilkent University, \c{C}ankaya, Ankara, 06800, TURKEY, naci.saldi@bilkent.edu.tr}%
  }
\begin{document}
\maketitle
\begin{abstract}
    In this paper, we investigate the robustness of stationary mean-field equilibria in the presence of model uncertainties, specifically focusing on infinite-horizon discounted cost functions. To achieve this, we initially establish convergence conditions for value iteration-based algorithms in mean-field games. Subsequently, utilizing these results, we demonstrate that the mean-field equilibrium obtained through this value iteration algorithm remains robust even in the face of system dynamics misspecifications. We then apply these robustness findings to the finite model approximation problem in mean-field games, showing that if the state space quantization is fine enough, the mean-field equilibrium for the finite model closely approximates the nominal one. 
\end{abstract}
\keywords{Mean-field games, robustness, approximation, discounted cost}
\section{Introduction}
This paper deals with the robustness of mean-field games (MFGs) under infinite-horizon discounted cost function. In such game models, an individual agent interacts with a large population of other agents and competes against their collective behavior, which is represented by a mean-field term. This mean-field term is supposed to converge to the stationary distribution of the state of a typical single agent as the number of agents approaches infinity. In the limiting scenario, a typical agent faces a stochastic control problem similar to that of a single agent, but with a constraint on the state distribution at each time step. This constraint ensures that the state distribution aligns with the behavior of the entire population (i.e. mean-field term) when each agent applies the optimal policy against the mean-field term. This condition, which establishes stability between policy and state distribution, is known as the mean-field equilibrium.

The theory of mean-field games (MFGs) was introduced independently by Lasry and Lions in their paper \cite{LaLi07}, where they coined the term mean-field games, and by Huang, Malham'{e}, and Caines in their paper \cite{HuMaCa06}, where they approached it as stochastic dynamic games. These studies focus on continuous-time non-cooperative differential games that involve a large but finite number of agents, whose interaction becomes asymptotically negligible as their number increases. These works explore the infinite limits of these games to establish approximate Nash equilibria. In the realm of continuous-time differential games, the characterization of mean-field equilibrium is based on coupled Hamilton-Jacobi-Bellman (HJB) equations and Kolmogorov-Fokker-Planck (FPK) equations. For further exploration of continuous-time mean-field games with various models and cost functions, including games involving major-minor players, risk-sensitive games, games with Markov jump parameters, and Linear Quadratic Gaussian (LQG) games, we recommend referring to studies such as \cite{HuCaMa07,TeZhBa14,Hua10,BeFrPh13,Ca11,CaDe13,GoSa14,MoBa16}.

Compared to the extensive body of work in continuous-time frameworks, there is relatively less research available in the literature concerning discrete-time mean-field games. These studies primarily focus on scenarios where the state space is discrete, either finite or countable, and the agents' interactions are solely determined by their cost functions. In other words, the mean-field term does not impact the evolution of the agents' states. For instance, in \cite{GoMoSo10}, researchers investigate a mean-field game model with a finite state space. In \cite{AdJoWe15}, discrete-time mean-field games are examined under an infinite-horizon discounted cost criterion, and this study extends to unbounded state spaces.
Another line of research delves into discrete-time mean-field games with linear state dynamics, and this topic is explored in works such as \cite{ElLiNi13,MoBa15,MoBa16-cdc,NoNa13}. Additionally, references like \cite{Bis15,Sal19,WiAl05,Wie19} delve into discrete-time mean-field games while considering the average cost optimality criterion. In \cite{SaBaRaMOR2}, the authors investigate a discrete-time risk-sensitive mean-field game involving Polish state and action spaces. References \cite{SaBaRaSIAM,SaBaRaMOR1} tackle discrete-time mean-field games with Polish state and action spaces, exploring the discounted cost optimality criterion for both fully-observed and partially-observed scenarios.

The previously mentioned papers, excluding those focusing on linear models, have primarily addressed the existence of mean-field equilibrium without proposing algorithms that guarantee convergence to this equilibrium. In a recent paper \cite{anahtarci2020value}, authors tackled this issue in the context of mean-field games with abstract state and action spaces, considering both discounted cost and average cost criteria. They developed a value iteration algorithm and provided proof of its convergence to the mean-field equilibrium. Nevertheless, the robustness of discrete-time mean-field games, specifically when it comes to algorithms like value iteration used to calculate the mean-field equilibrium, remains relatively unexplored.

Robustness of mean-field games is mostly studied in the context of continuous-time mean-field games. In \cite{bauso2016robust}, the authors investigate mean field games that incorporate uncertainty in both states and payoffs. Their study revolves around a population of players, each characterized by individual states influenced by a standard Brownian motion and an additional disturbance term. The authors establish a mean field system tailored to these robust games. In \cite{MoBa16}, the authors investigate robust mean-field games in coupled Markov jump linear systems (MJLSs). Each of the $N$ agents operates under its own MJLS with distinct infinitesimal generators, influenced by both control inputs and individual disturbances. Utilizing robust mean field game theory, they design efficient decentralized controllers and analyze worst-case disturbance effects. By employing these controllers and their corresponding worst-case disturbances, which constitute a saddle-point solution to a generic stochastic differential game for MJLSs, the authors approximate the actual mean field behavior using a deterministic function. In \cite{MoBa17}, authors consider linear-quadratic risk-sensitive and robust mean-field games. In first problem, each agent minimizes an exponentiated cost function reflecting risk-sensitive behavior. In second problem, each agent minimizes a worst-case risk-neutral cost function due to an adversary's presence in their dynamics. They demonstrate that the mean-field equilibria in problem 1 and problem 2 exhibit a partial equivalence, with individual Nash strategies sharing the same control laws. However, there is a distinction in the mean-field terms of problem 1 and problem 2. Furthermore, they establish that as the parameter characterizing the robustness tends towards infinity, the two mean-field equilibria converge to become identical and equivalent to the risk-neutral case, similar to the principles seen in one-agent risk-sensitive and robust control theory.

For the approximation problem to numerically compute the mean-field equilibrium, there are various works available in the literature which are again mostly geared towards the continuous-time setup. In \cite{AcCa10}, authors propose discrete approximations by finite difference methods of the mean field system (coupled PDEs), both in the stationary case an non-stationary case. In their work \cite{AcCaCa12}, the authors focus on a specific class of mean-field games known as \emph{mean-field game planning problems} and introduce discrete approximations for these problems. Additionally, they explore the application of Newton's method to solve these discrete approximations. In \cite{NoRiDi17}, authors explore computational techniques for stationary mean-field games (MFG) and propose two algorithms. The first method uses gradient flow technique, which are derived from the variational characterization of specific MFG problems. The second one relies on monotonicity properties of MFG. We refer the reader to the lecture notes \cite{Lau21} for a more comprehensive overview of numerical methods for mean-field games. 

For finite horizon discrete time mean-field type control problems, model-based machine learning techniques have been employed to construct approximate models to the original mean-field control model in \cite{jusup2023safe,pasztor2023efficient} to approximate the optimal value of the original control model, which is the closest work to ours in terms of scope. However their techniques merely gives a policy and state measure that allows one to approximate the optimal value of the associated control problem rather than the mean-field equilibrium itself; whereas our work gives conditions on the approximations of the original model so that their associated mean-field equilibria can approximate both the state measure and optimal policy of the mean-field equilibrium of the original model. 

In the context of partially observed  Markov decision processes (MDPs), robustness problem for the optimal value of the control problem is also studied in \cite{baker2016continuity,kara2019robustness} under setwise and total variation convergence assumptions on transition probabilities due to the partially observed nature of the information structure. It is important to note that MFGs and MDPs have distinct objectives. While MFGs seek person-by-person optimal solutions, MDPs aim to achieve global optimal solutions. Consequently, our results cannot be directly compared to these existing works.

In terms of value iteration algorithms, using regularized (or softmax) policies instead of the bona-fide optimal policy gained traction in the literature recently due to their relaxed requirements \cite{anahtarci2023q, zaman2023oracle, cui2021approximately}. Due to their nature, the resulting mean-field policy in such case deviates from the original one, which can become uncontrollable under robust setting.

\subsection{Contributions}
In this paper, we explore the robustness of stationary mean-field equilibria when confronted with uncertainties in the model. Our approach involves first establishing conditions for the convergence of value iteration-based algorithms within the context of mean-field games. Subsequently, we leverage these results to illustrate that the mean-field equilibrium derived from this value iteration algorithm remains robust even in scenarios where the system dynamics are not precisely known. We then apply these insights into robustness to address the finite model approximation problem in mean-field games, demonstrating that finely quantized state spaces yield mean-field equilibria for the finite model that closely approximate the nominal solution. A comprehensive overview of the paper's contributions is provided below.

\begin{itemize}
\item[1.] In Lemma~\ref{lemma:b}, we obtain the Lipschitz coefficients of the components of the mean-field equilibrium operator, iterative application of which gives the mean-field equilibrium. Using this result, in Theorem~\ref{thrm1}, it is then shown that there exists an unique fixed point of the mean-field equilibrium operator, where this unique fixed point corresponds to a mean-field equilibrium. Moreover, it is proved that this unique fixed point can be computed by repeated application of mean-field equilibrium operator.
\item[2.] In Theorem~\ref{thrm}, we demonstrate that if we have MFGs with converging transition probabilities and assume that the value iteration algorithm works, and the nominal model has an unique equilibrium, then the mean-field equilibria obtained for the perturbed MFGs converge to those of the limiting case. To guarantee the convergence of the value iteration algorithm for both the perturbed MFGs and the nominal model, it is crucial to rely on the outcome described in item 1. This is the reason behind our comprehensive establishment of the value iteration algorithm's convergence.
\item[3.] In Section~\ref{q-mfgs}, we construct the quantized mean-field games. Then, in Lemma~\ref{lemma:c}, we establish the Lipschitz continuity properties of the components of the quantized games, which enables us to ensure the convergence of the value iteration algorithm for the quantized games (see Lemma~\ref{lemma:e}). In Section~\ref{e-mfgs}, we extend the quantized games to the original state space and prove in Proposition~\ref{prop:d} the convergence of the value iteration algorithm for the extended games. Finally, in Theorem~\ref{thrm4}, we establish that mean-field equilibria for the extended games converge to the mean-field equilibrium of the nominal model. 
\item[4.] In Section~\ref{g-robust}, we improve Theorem \ref{thrm} and achieve the most comprehensive robustness result in mean-field games. Nevertheless, the applicability of this result might be limited unless specific conditions are established to guarantee the convergence of the value iteration algorithm for both the approximating and nominal models.
\end{itemize}

\section{Mean-field Games}

A discrete-time mean-field game is specified by
$
\bigl( X, A, p, c, \mu_0 \bigr), \nonumber
$
where $X$ and $A$ are the state and the action spaces, respectively. Here, $X$ is a Polish space (complete separable metric space) with the metric $d_{X}$ and $A$ is a compact and convex subset of a finite dimensional Euclidean space $\R^d$ with the Euclidean norm $\|\cdot\|$. The measurable function $p : X \times A \times \P(X) \to \P(X)$ denotes the transition probability of the next state given the previous state-action pair and the state-measure. The measurable function $c: X \times A \times \P(X) \rightarrow [0,\infty)$ is the one-stage cost function. The measure $\mu_0$ is the initial state distribution. 

We introduce the history spaces $H_0 = X$ and $H_{t}=(X\times A)^{t}\times X$ for $t=1,2,\ldots$, all endowed with product Borel $\sigma$-algebras. In this model, a policy is a sequence $\varphi=\{\pi_{t}\}$ of stochastic kernels on $A$ given $H_{t}$. In this paper, we only consider stationary policies. A stationary policy $\pi$ is a stochastic kernel on $A$ given $X$; that is, $\pi:X \rightarrow \P(A)$ is a measurable function. Let $\Pi$ denote the set of all  stationary policies. By the Ionescu Tulcea Theorem \cite{HeLa96}, a stationary policy $\pi$ and an initial measure $\mu_0$ define a unique probability measure $P^{\pi}$ on $(X \times A)^{\infty}$. The expectation with respect to $P^{\pi}$ is denoted by $E^{\pi}$.

Let us fix a state-measure $\mu \in \P(X)$ that describes the collective behavior of the other agents. 
A stationary policy $\pi^{*} \in \Pi$ is said to be optimal for $\mu$ if
$
J_{\mu}(\pi^{*},\mu_0) = \inf_{\pi \in \Pi} J_{\mu}(\pi,\mu_0) \nonumber
$
where 
\begin{align}
J_{\mu}(\pi,\mu_0) &= E^{\pi}\biggl[ \sum_{t=0}^{\infty} \beta^t c(x(t),a(t),\mu) \bigg| x(0) \sim \mu_0 \biggr] \nonumber  
\end{align}
is the discounted cost of stationary policy $\pi$ under the state-measure $\mu$. Here, $\beta \in (0,1)$ is the discount factor. In this model, the evolution of the states and actions is given by
\begin{align}
x(0) \sim \mu_0, \,\,\,
x(t) \sim p(\,\cdot\,|x(t-1),a(t-1),\mu), \text{ } t\geq1, \,\,\,
a(t) \sim \pi(\,\cdot\,|x(t)), \text{ } t\geq0. \nonumber
\end{align}
Now, define the set-valued mapping $\Psi : \P(X) \rightarrow 2^{\Pi}$  as 
$\Psi(\mu) = \{\pi \in \Pi: \pi \text{ is optimal for }  \mu \text{ }\text{ and } \text{ } \mu_0 = \mu\};$
that is, given $\mu$, the set $\Psi(\mu)$ is the set of optimal stationary policies for $\mu$ when the initial distribution is $\mu$ as well.  

Conversely, we define another set-valued mapping $\Lambda : \Pi \to 2^{\P(X)}$ as follows: given $\pi \in \Pi$, the state-measure $\mu_{\pi}$ is in $\Lambda(\pi)$ if it is a fixed point of the equation
\begin{align}
\mu_{\pi}(\,\cdot\,) = \int_{X \times A} p(\,\cdot\,|x,a,\mu_{\pi})  \, \pi(da|x) \, \mu_{\pi}(dx). \nonumber
\end{align}
In other words, $\mu_{\pi}$ is the invariant distribution of the Markov transition probability $P(\cdot|x) = \int_{A} p(\,\cdot\,|x,a,\mu_{\pi}) \, \pi(da|x)$. 

The notion of equilibrium for mean-field games is defined via these mappings $\Psi$, $\Lambda$ as follows.

\begin{definition}
A pair $(\pi_*,\mu_*) \in \Pi \times \P(X)$ is a stationary \emph{mean-field equilibrium} if $\pi_* \in \Psi(\mu_*)$ and $\mu_* \in \Lambda(\pi_*)$. In other words, $\pi_*$ is an optimal  stationary policy given the state-measure $\mu_*$ and $\mu_*$ is the state distribution under the policy $\pi_*$. 
\end{definition}

In the literature, the existence of mean-field equilibria has been established for the discounted cost in \cite{SaBaRaSIAM}. 

\section{Fixed Points of the Value Iteration for Mean-field Games}

For any measurable function $u: X \times A \rightarrow \R$, let $u_{\min}(x) \coloneqq \inf_{a \in A} u(x,a)$ and $u_{\max}(x) \coloneqq \sup_{a \in A} u(x,a)$. Let $w:X \times A \rightarrow [1,\infty)$ be a continuous weight function. For any measurable $v: X \times A \rightarrow \R$, we define $w$-norm of $v$ as
$$
\|v\|_w \coloneqq \sup_{(x,a) \in X \times A} \frac{|v(x,a)|}{w(x,a)}. 
$$ 
For any measurable $u: X \rightarrow \R$, we define $w_{\max}$-norm of $u$ as
\begin{align}
\|v\|_{w_{\max}} &\coloneqq \sup_{x \in X} \frac{|u(x)|}{w_{\max}(x)}. \nonumber
\end{align}
Let $B(X,K)$ be the set of real-valued measurable functions with $w_{\max}$-norm less than $K$. Let $C(X)$ be the set of real-valued continuous functions on $X$ and let $C_{w_{\max}}(X)$ denote the set of elements in $C(X)$ with finite $w_{\max}$-norm. For each $g \in C(X)$, let
\begin{align}
\|g\|_{\Lip} \coloneqq \sup_{\substack{(x,y)\in X\times X \\ x \neq y}} \frac{|g(x)-g(y)|}{d_{X}(x,y)}. \nonumber
\end{align}
If $\|g\|_{\Lip}$ is finite, then $g$ is called Lipschitz continuous with Lipschitz constant $\|g\|_{\Lip}$. $\Lip(X)$ denotes the set of all Lipschitz continuous functions on $X$, i.e.,
$
\Lip(X) \coloneqq \{g \in C(X): \|g\|_{\Lip} < \infty \} 
$
and $\Lip(X,K)$ denotes the set of all $g \in \Lip(X)$ with $\|g\|_{\Lip} \leq K$. For any $\mu, \nu \in \P(X)$, we denote by $C(\mu,\nu)$ the set of couplings between $\mu$ and $\nu$; that is, $\xi \in \P(X\times X)$ is an element of $C(\mu,\nu)$ if $\xi(\cdot \times X) = \mu(\cdot)$ and $\xi(X\times\cdot) = \nu(\cdot)$. The \emph{Wasserstein distance of order $1$} \cite[Definition 6.1]{Vil09} between two probability measures $\mu$ and $\nu$ over $X$ is defined as
$$
W_1(\mu,\nu) = \inf\left\{ \int_{X \times X} d_{X}(x,y) \, \xi(dx,dy): \xi \in C(\mu,\nu) \right\}. 
$$
By using Kantorovich-Rubinstein duality, we can also write Wasserstein distance of order $1$ \cite[p. 95]{Vil09} as follows:
\begin{align}
W_1(\mu,\nu) \coloneqq \sup \biggl\{\biggl|\int_{X} g d\mu - \int_{X} g d\nu\biggr|: g \in \Lip(X,1)\biggr\}. \nonumber
\end{align}
For compact $X$, the Wasserstein distance of order $1$ metrizes the weak topology on the set of probability measures $\P(X)$ (see \cite[Corollary 6.13, p. 97]{Vil09}). However, in general, it is stronger than weak topology. 

For each fixed $\gamma \in \mathbb R_{>0}$, we set a  domain for $Q$-functions as follows
$$ \mathcal C_{\gamma} = \bigg \{ Q:X \times A \rightarrow [0,\infty): \| Q \|_{w} \leq \frac{\gamma M}{1-\beta \alpha}, \| Q_{\min} \|_{\Lip} \leq \frac{\gamma L_2}{1-\beta K_2} \bigg \}$$
where constants $M$, $\alpha$, $L_2$, and $K_2$ are defined below. We define mean-field equilibrium operator as $H_{\gamma}: \mathcal C_{\gamma} \times \mathcal P(X) \rightarrow \mathcal C_{\gamma} \times \mathcal P(X) :(Q,\mu) \rightarrow (H^{\gamma}_1(Q,\mu),H^{\gamma}_2(Q,\mu))$, where the iterations of the $Q$-functions is handled by the operator
$$H^{\gamma }_1(Q,\mu)(x,a) = \gamma c(x,a,\mu) +\beta \int _X Q_{\min}(y)p(dy|x,a,\mu)$$ and by using the function  
$f_{\gamma }(x,Q,\mu)= \underset{a \in A}{\mathrm{argmin}} H^{\gamma}_1(Q,\mu)(x,a)$, we define the following ergodicity operator to handle the iterations on the state-measures
$$H^{\gamma}_2(Q,\mu)(\cdot) = \int_{X} p(\cdot|x, f_{\gamma}(x,Q,\mu),\mu)\mu(dx).$$

Following \cite[Assumption 1]{anahtarci2020value}, we will work under  the following set of assumptions for the operator $H_{1}$ to obtain its fixed point:

\begin{assumption}
\label{assump1}
\begin{itemize}
\item [ ]
\item[a)] The cost function $c$ is continuous and the transition probability $p$ is weakly continuous. Moreover, we have the following bounds:
\begin{align}
& c(x,a,\mu)\leq Mw(x,a)
\\ & \| c(\cdot,\cdot,\mu) - c(\cdot,\cdot,\tilde {\mu}) \|_w \leq L_1 W_1(\mu,\tilde{\mu})
\\ & \sup_{a, \mu} | c(x,a,\mu)-c(\tilde x, a, \mu) | \leq L_2d_X(x,\tilde x)
\\ & \int_X w_{\max}(y) p(dy|x,a,\mu) \leq \alpha w(x,a)
\\ & \sup_x W_1(p(\cdot|x,a,\mu),p(\cdot|x,\tilde a, \tilde {\mu})) \leq K_1(\|a-\tilde a \|+W_1(\mu,\tilde{\mu}))
\\ & \sup_\mu W_1(p(\cdot|\tilde x,\tilde a,\mu),p(\cdot|\tilde x,\tilde a, {\mu})) \leq K_2(d_X(x,\tilde x)+\|a-\tilde a \|)
\end{align}

\item[b)] For any $(x,Q,\mu) \in X \times \mathcal C_1 \times \mathcal P(X)$, $H^1_1(Q,\mu)(x,\cdot)$ is
 $\rho$-strongly convex.

\item[c)] On $\mathcal C_1,$ for any $(x,Q,\mu),(\tilde x, \tilde Q, \tilde {\mu}) \in X \times \mathcal C_1 \times \mathcal P(X)$, there exists a fixed constant $K_F$ such that the gradient of $H^1_1(Q,\mu)(x,\cdot),$ i.e. $\nabla H^1_1(Q,\mu)(x,\cdot),$ satisfies 
\begin{align*}
& \sup_{a \in A} \| \nabla H^1_1(Q,\mu)(x,a) - \nabla H^1_1(\tilde Q, \tilde{\mu})(\tilde x,a) \| 
\leq K_F(d_X(x,\tilde x)+\|Q_{\min}-\tilde{Q}_{\min} \|_{w_{\max}}+W_1(\mu,\tilde{\mu}))
\end{align*}
\end{itemize}
\end{assumption}

\smallskip

To use the Banach fixed point theorem, we first have to show that the operators $H_{\gamma}$ are well defined. The following lemma makes a connection between fixed points of the operator $H_{\gamma}$ for different $\gamma$ values, which will allow us to show that $H_{\gamma}$ are well defined.

\begin{lemma}\label{lemma:a}
If $\left(Q, \mu (\cdot) =\int_{X}p(\cdot|x,f_{\gamma}(x,Q,\mu),\mu)\mu(dx)\right)\in \mathcal C_{\gamma} \times \mathcal P(X)$ 
is a fixed point of $H_{\gamma},$ then $f_{\gamma}(x,Q,\mu)=f_{\gamma '}(x,\frac {\gamma '}{\gamma} Q,\mu)$ and 
$$\bigg(\frac {\gamma '}{\gamma} Q, \mu(\cdot)=\int_{X \times A}p(\cdot|x,f_{\gamma '}(x, \frac {\gamma '}{\gamma} Q,\mu),\mu)\mu(dx)\bigg) \in \mathcal C_{\gamma '} \times \mathcal P(X)$$ 
is a fixed point of $H_{\gamma '},$ i.e. there exists a fixed point of $H_{\gamma }$ if and only if there exists a fixed point for any other $H_{\gamma '}.$ Furthermore, for any $(Q,\mu) \in \mathcal C_{\gamma} \times \mathcal P(X)$
$$ H_{\gamma}(Q,\mu) = \bigg(\frac {\gamma }{\gamma '} H^{\gamma '}_1\bigg (\frac {\gamma '}{\gamma } Q,\mu\bigg ),H^{\gamma '}_2\bigg (\frac {\gamma '}{\gamma} Q,\mu\bigg )\bigg).$$

\end{lemma}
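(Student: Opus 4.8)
The whole lemma rests on a single homogeneity identity, so the plan is to isolate and prove that first. The key observation is that $(\gamma,Q)$ enters $H^{\gamma}_1$ only through a positively homogeneous combination: the map $Q\mapsto Q_{\min}$ satisfies $(\lambda Q)_{\min}=\lambda Q_{\min}$ for every $\lambda>0$, and $H^{\gamma}_1(Q,\mu)(x,a)$ is linear in the pair formed by the leading coefficient $\gamma$ and by $Q_{\min}$ inside the integral. Writing $\lambda=\gamma'/\gamma$ and substituting, I would compute directly
$$H^{\gamma'}_1\left(\frac{\gamma'}{\gamma}Q,\mu\right)(x,a)=\gamma' c(x,a,\mu)+\beta\int_X \frac{\gamma'}{\gamma}Q_{\min}(y)\,p(dy|x,a,\mu)=\frac{\gamma'}{\gamma}H^{\gamma}_1(Q,\mu)(x,a),$$
which, after dividing by $\gamma'/\gamma$, is precisely the first coordinate of the displayed ``Furthermore'' identity. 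This is the only genuine computation in the argument.

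Next I would read off the two remaining identities from this one. Since $\gamma'/\gamma>0$, multiplying the objective $H^{\gamma}_1(Q,\mu)(x,\cdot)$ by this positive constant leaves its minimizer unchanged, giving $f_{\gamma'}(x,\frac{\gamma'}{\gamma}Q,\mu)=f_{\gamma}(x,Q,\mu)$. Here I would invoke Assumption~\ref{assump1}(b): the strong convexity of $H^{1}_1$, which by the scaling identity transfers to strong convexity of every $H^{\gamma}_1$ on the appropriate domain, so each minimizer is a singleton and the arg min equality is an honest equality of points rather than of sets. Because $H^{\gamma}_2$ depends on $Q$ only through $f_{\gamma}$, this immediately yields $H^{\gamma'}_2(\frac{\gamma'}{\gamma}Q,\mu)=H^{\gamma}_2(Q,\mu)$, completing the displayed formula for $H_{\gamma}(Q,\mu)$.

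For the fixed-point correspondence I would start from a fixed point, i.e. $Q=H^{\gamma}_1(Q,\mu)$ and $\mu=H^{\gamma}_2(Q,\mu)$, and feed it through the two identities: $H^{\gamma'}_1(\frac{\gamma'}{\gamma}Q,\mu)=\frac{\gamma'}{\gamma}H^{\gamma}_1(Q,\mu)=\frac{\gamma'}{\gamma}Q$ and $H^{\gamma'}_2(\frac{\gamma'}{\gamma}Q,\mu)=H^{\gamma}_2(Q,\mu)=\mu$, which says exactly that $(\frac{\gamma'}{\gamma}Q,\mu)$ is fixed by $H_{\gamma'}$. I would also verify that $\frac{\gamma'}{\gamma}Q$ lands in the correct domain $\mathcal{C}_{\gamma'}$: scaling the two defining bounds of $\mathcal{C}_{\gamma}$ by $\gamma'/\gamma$ sends $\frac{\gamma M}{1-\beta\alpha}$ to $\frac{\gamma' M}{1-\beta\alpha}$ and, using $(\frac{\gamma'}{\gamma}Q)_{\min}=\frac{\gamma'}{\gamma}Q_{\min}$, sends the Lipschitz bound $\frac{\gamma L_2}{1-\beta K_2}$ to $\frac{\gamma' L_2}{1-\beta K_2}$, so membership holds. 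Swapping the roles of $\gamma$ and $\gamma'$ (the map $Q\mapsto\frac{\gamma'}{\gamma}Q$ is invertible) gives the converse, hence the ``if and only if''; the same domain computation is what will later let well-definedness of $H_1$ be transported to every $H_{\gamma}$.

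The main obstacle is conceptual rather than technical: recognizing the positive homogeneity in the pair $(\gamma,Q)$, after which everything is bookkeeping. The one step requiring genuine care is the arg min identity, since $f_{\gamma'}(x,\frac{\gamma'}{\gamma}Q,\mu)=f_{\gamma}(x,Q,\mu)$ is only meaningful once single-valuedness is secured; I would therefore be explicit in citing the strong-convexity assumption there and in confirming that the scaled argument $\frac{\gamma'}{\gamma}Q$ remains in a domain on which that assumption is in force.
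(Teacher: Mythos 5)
Your proposal is correct and follows essentially the same route as the paper's proof: both hinge on the positive homogeneity identity $H^{\gamma'}_1(\tfrac{\gamma'}{\gamma}Q,\mu)=\tfrac{\gamma'}{\gamma}H^{\gamma}_1(Q,\mu)$, the invariance of the (unique, by strong convexity) minimizer under positive scaling, and the resulting transfer of fixed points between the scaled domains. Your explicit verification that $\tfrac{\gamma'}{\gamma}Q\in\mathcal C_{\gamma'}$ is slightly more detailed than the paper's one-line remark, but the argument is the same.
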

\begin{proof}
Since 
\[
H^{\gamma}_1(Q,\mu)(x,a) = Q(x,a) = \gamma  c(x, a,\mu) +\beta \int _X Q_{\min}(y)p(dy|x,a,\mu),
\] 
we have
\begin{align}
\frac {\gamma '}{\gamma} H^{\gamma}_1(Q,\mu)(x,a) &
= \gamma ' c(x, a,\mu)  +\beta \int _X \frac {\gamma '}{\gamma} Q_{\min}(y)p(dy|x, a,\mu) \nonumber 
\\& = \frac {\gamma '}{\gamma}  Q(x,a) = H^{\gamma '}_1\bigg (\frac {\gamma '}{\gamma} Q,\mu \bigg)(x,a) \nonumber
\end{align}
and \( 
\frac{\gamma '}{\gamma} Q \in \mathcal C_{\gamma '}.
\)
The function $H^{\gamma}_1(Q,\mu)(x,\cdot)=\gamma H^{1}_1(\frac 1{\gamma}  Q,\mu)(x,\cdot)$ is strictly convex, thereby it has a unique minimizer on $A$. Hence multiplying it with positive constants will preserve the minimizer; that is 
\[
f_{\gamma}(x,Q,\mu) = f_{1}(x,\frac{1}{\gamma}Q,\mu) = f_{\gamma '}(x,\frac {\gamma '}{\gamma} Q, \mu).
\] 
Thus $(\frac {\gamma '}{\gamma} Q , \mu)$ is a fixed point of $H_{\gamma '}.$ 
\end{proof}

As a consequence of the lemma above, we can demonstrate that $H_{\gamma}$ is well-defined and satisfies the conditions in Assumption \ref{assump1} with slightly different Lipschitz coefficients. This will be essential for showing the convergence of iterations of the operators $H_{\gamma}$ later on.

\begin{lemma}\label{welldef}
Suppose Assumption \ref{assump1} holds. Then, for any $\gamma \in \mathbb R_{>0}$, $H_1^{\gamma}$ is well defined and satisfies the conditions in Assumption~\ref{assump1}-(b),(c) with possibly different coefficients $\rho$ and $K_F$.
\end{lemma}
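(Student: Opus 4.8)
The plan is to split the statement into two essentially independent tasks. \emph{Well-definedness} means showing that $H_1^{\gamma}$ maps $\mathcal C_{\gamma}\times\mathcal P(X)$ back into $\mathcal C_{\gamma}$, i.e. that the set $\mathcal C_{\gamma}$ is invariant under $H_1^{\gamma}$. The transfer of Assumption~\ref{assump1}-(b),(c) to general $\gamma$, on the other hand, I would deduce almost immediately from the scaling identity $H_1^{\gamma}(Q,\mu)(x,\cdot)=\gamma\,H_1^{1}(\frac{1}{\gamma}Q,\mu)(x,\cdot)$ already recorded in the proof of Lemma~\ref{lemma:a}.

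For invariance, fix $(Q,\mu)\in\mathcal C_{\gamma}\times\mathcal P(X)$ and set $u=H_1^{\gamma}(Q,\mu)$. I would check the three defining properties of $\mathcal C_{\gamma}$ in turn. Nonnegativity of $u$ is immediate from $c\geq 0$, $Q_{\min}\geq 0$, $\beta>0$, and the fact that $p$ is a probability measure. For the $w$-norm bound I would first observe that $\|Q\|_w\leq\frac{\gamma M}{1-\beta\alpha}$ forces $Q_{\min}(y)\leq\frac{\gamma M}{1-\beta\alpha}w_{\max}(y)$ (take the infimum over $a$ in $Q(y,a)\leq\|Q\|_w\,w(y,a)\leq\|Q\|_w\,w_{\max}(y)$), and then combine this with $\int w_{\max}(y)\,p(dy|x,a,\mu)\leq\alpha w(x,a)$ and $c\leq Mw$ from Assumption~\ref{assump1}-(a) to obtain
$$|u(x,a)|\leq \gamma M w(x,a)+\beta\frac{\gamma M}{1-\beta\alpha}\alpha\,w(x,a)=\frac{\gamma M}{1-\beta\alpha}w(x,a).$$
The key structural point is that $\frac{\gamma M}{1-\beta\alpha}$ is the fixed point of the affine map $t\mapsto \gamma M+\beta\alpha t$, so the threshold in $\mathcal C_{\gamma}$ is engineered precisely to reproduce itself.

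The Lipschitz estimate on $u_{\min}$ is the delicate step and, I expect, the main obstacle, since it must simultaneously use the inf-stability inequality, the cost regularity, and the transition regularity via duality, all tuned so that the self-consistent constant comes back unchanged. I would start from $|u_{\min}(x)-u_{\min}(\tilde x)|\leq\sup_{a}|u(x,a)-u(\tilde x,a)|$, bound the cost contribution by $\gamma L_2\,d_X(x,\tilde x)$ using the third inequality of Assumption~\ref{assump1}-(a), and handle the transition term by Kantorovich--Rubinstein duality: since $Q\in\mathcal C_{\gamma}$ gives $\|Q_{\min}\|_{\Lip}\leq\frac{\gamma L_2}{1-\beta K_2}$, the difference $\bigl|\int Q_{\min}\,dp(\cdot|x,a,\mu)-\int Q_{\min}\,dp(\cdot|\tilde x,a,\mu)\bigr|$ is at most $\|Q_{\min}\|_{\Lip}\,W_1\bigl(p(\cdot|x,a,\mu),p(\cdot|\tilde x,a,\mu)\bigr)\leq\frac{\gamma L_2}{1-\beta K_2}K_2\,d_X(x,\tilde x)$, the Wasserstein bound being the last inequality of Assumption~\ref{assump1}-(a) with $a=\tilde a$. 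Adding the two contributions returns exactly $\frac{\gamma L_2}{1-\beta K_2}d_X(x,\tilde x)$, closing the invariance.

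For the remaining two conditions I would exploit $H_1^{\gamma}(Q,\mu)(x,\cdot)=\gamma\,H_1^{1}(\frac{1}{\gamma}Q,\mu)(x,\cdot)$ together with the observation that $Q\in\mathcal C_{\gamma}$ implies $\frac{1}{\gamma}Q\in\mathcal C_{1}$, so the hypotheses of (b),(c) apply at the rescaled argument. Multiplying by the positive constant $\gamma$ scales the Hessian lower bound by $\gamma$, giving $\rho$-strong convexity with $\rho'=\gamma\rho$. For (c), using $(\frac{1}{\gamma}Q)_{\min}=\frac{1}{\gamma}Q_{\min}$, the gradient difference is bounded by $\gamma K_F\bigl(d_X(x,\tilde x)+\frac{1}{\gamma}\|Q_{\min}-\tilde Q_{\min}\|_{w_{\max}}+W_1(\mu,\tilde\mu)\bigr)$; the only subtlety is the mismatched coefficient on the $Q$-argument, which I would absorb by taking $K_F'=K_F\max(\gamma,1)$ to recover the required single-constant form.
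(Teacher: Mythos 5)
Your proof is correct, and for the transfer of Assumption~\ref{assump1}-(b),(c) it uses exactly the same mechanism as the paper, namely the scaling identity $H_1^{\gamma}(Q,\mu)=\gamma H_1^{1}(\tfrac{1}{\gamma}Q,\mu)$ together with $Q\in\mathcal C_{\gamma}\Leftrightarrow\tfrac{1}{\gamma}Q\in\mathcal C_{1}$. The one place where you diverge is well-definedness: the paper does not verify the invariance of $\mathcal C_{\gamma}$ directly, but instead cites the $\gamma=1$ case from \cite[Lemma 1]{anahtarci2020value} and then transfers it to general $\gamma$ via the scaling relation of Lemma~\ref{lemma:a} (since $\mathcal C_{\gamma}=\gamma\,\mathcal C_{1}$). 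Your direct verification --- nonnegativity, the $w$-norm bound via $c\leq Mw$ and $\int w_{\max}\,dp\leq\alpha w$ closing at the fixed point $\tfrac{\gamma M}{1-\beta\alpha}$ of $t\mapsto\gamma M+\beta\alpha t$, and the Lipschitz bound on $u_{\min}$ via Kantorovich--Rubinstein duality closing at $\tfrac{\gamma L_2}{1-\beta K_2}$ --- is essentially a rescaled reproduction of the argument behind that cited lemma, so the content is the same; what your version buys is self-containedness, at the cost of length. You are also more explicit than the paper about the new constants ($\rho'=\gamma\rho$ and $K_F'=K_F\max(\gamma,1)$, the latter correctly absorbing the mismatched coefficient on the $\|Q_{\min}-\tilde Q_{\min}\|_{w_{\max}}$ term), which the paper leaves implicit in the phrase ``possibly different coefficients.''
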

\begin{proof}
Under Assumption \ref{assump1}, the operator $H^1_{1}$ is well defined, i.e. maps $\mathcal C_{1}$ into itself, cf. \cite[Lemma 1]{anahtarci2020value}. This combined with the Lemma \ref{lemma:a} gives us that, for any $\gamma \in \mathbb R_{>0}$, the operator $H^{\gamma}_1$ is well defined. Furthermore, it is easy to see that $H^{\gamma}_1(Q,\mu)(x,\cdot)$ is strongly convex. Moreover, since $H^{\gamma}_1(Q,\mu)=\gamma H^{1}_1(\frac Q{\gamma},\mu)$, we have
\begin{align}
& \| \nabla H^{\gamma}_1(Q,\mu)(x,a)-\nabla H^{\gamma}_1(\tilde Q,\tilde \mu)(\tilde x,a)\|  = \|\nabla \gamma H^1_1(\frac Q{\gamma},\mu)(x,a)-\nabla \gamma H^1_1(\frac {\tilde Q}{\gamma},\mu)(\tilde x,a) \|. \nonumber
\end{align}
Hence, the condition for $\nabla H^{\gamma}_1$ also holds.
\end{proof}

The following lemma is essentially the same as \cite[Theorem 1]{anahtarci2020value} with minor changes. To obtain the Lipschitz coefficients of the components of the scaled operator $H_{\gamma}$, for $\gamma \in \mathbb R_{>0}$, we will give a sketch of the proof adjusting to our setting, details of the calculation can be found in \cite{anahtarci2020value}.

\begin{lemma}\label{lemma:b}
For any $(Q,\mu),(\tilde Q,\tilde{\mu}) \in \mathcal C_{\gamma} \times \mathcal P(X)$, the components of the operator $H_{\gamma}$ satisfy the following Lipschitz bounds:
\begin{align*}
 &\|H^{\gamma}_{1}(Q,\mu)- H^{\gamma}_{1}(\tilde Q,\tilde{\mu})\|_{w} 
  \leq \gamma \bigg(L_1 +\beta\frac{ L_2}{1-\beta K_2} K_1 \bigg)W_1(\mu,\tilde \mu)+\alpha \beta \|Q-\tilde Q\|_{ w}, \\
 &W_1\big(H^{\gamma}_{2}(Q,\mu),H^{\gamma}_{2}(\tilde Q,\tilde{\mu})\big)
  \leq \frac{K_F}{\rho}\frac{K_1}{\gamma} \| Q-\tilde Q\|_{w}+\bigg(\frac{K_F}{\rho}+1\bigg)\bigg(K_1+K_2\bigg)W_1(\mu,\tilde{\mu}).
\end{align*}
\end{lemma}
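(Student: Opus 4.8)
The plan is to handle the two components of $H_{\gamma}$ separately, since the first bound is a direct computation in $w$-norm while the second rests on a stability estimate for the argmin map $f_{\gamma}$. For $H^{\gamma}_1$, I would fix $(x,a)$ and split the difference $H^{\gamma}_1(Q,\mu)(x,a)-H^{\gamma}_1(\tilde Q,\tilde\mu)(x,a)$ into a cost part $\gamma[c(x,a,\mu)-c(x,a,\tilde\mu)]$ and a transition part $\beta[\int_X Q_{\min}(y)\,p(dy|x,a,\mu)-\int_X\tilde Q_{\min}(y)\,p(dy|x,a,\tilde\mu)]$. The cost part is controlled in $w$-norm by $\gamma L_1 W_1(\mu,\tilde\mu)$ via the second bound of Assumption~\ref{assump1}-(a). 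For the transition part I would add and subtract $\int_X\tilde Q_{\min}(y)\,p(dy|x,a,\mu)$: the piece with fixed kernel is bounded, after dividing by $w(x,a)$, by $\alpha\beta\|Q-\tilde Q\|_w$ using $|Q_{\min}-\tilde Q_{\min}|\le\|Q-\tilde Q\|_w\,w_{\max}$ together with $\int_X w_{\max}(y)\,p(dy|x,a,\mu)\le\alpha w(x,a)$; the remaining piece $\int_X\tilde Q_{\min}\,d[p(\cdot|x,a,\mu)-p(\cdot|x,a,\tilde\mu)]$ is at most $\|\tilde Q_{\min}\|_{\Lip}\,W_1(p(\cdot|x,a,\mu),p(\cdot|x,a,\tilde\mu))$ by Kantorovich--Rubinstein duality, and since $\tilde Q\in\mathcal C_{\gamma}$ gives $\|\tilde Q_{\min}\|_{\Lip}\le\frac{\gamma L_2}{1-\beta K_2}$ and the fifth bound (with $a=\tilde a$) gives the Wasserstein factor $\le K_1 W_1(\mu,\tilde\mu)$, this contributes $\beta\frac{\gamma L_2}{1-\beta K_2}K_1 W_1(\mu,\tilde\mu)$. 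Taking the supremum over $(x,a)$ yields the first inequality.

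The crux of the second bound is a Lipschitz estimate for $f_{\gamma}$. Here I would invoke Lemma~\ref{lemma:a} to write $f_{\gamma}(x,Q,\mu)=f_1(x,\tfrac1\gamma Q,\mu)$ and run the standard variational-inequality argument for strongly convex constrained minimization: writing the first-order optimality conditions on the convex set $A$ for the two minimizers $a=f_1(x,R,\mu)$ and $a'=f_1(\tilde x,\tilde R,\tilde\mu)$, adding them, and combining $\rho$-strong convexity (Assumption~\ref{assump1}-(b)) with the gradient-Lipschitz bound (Assumption~\ref{assump1}-(c)) gives $\rho\|a-a'\|\le K_F(d_X(x,\tilde x)+\|R_{\min}-\tilde R_{\min}\|_{w_{\max}}+W_1(\mu,\tilde\mu))$. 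Substituting $R=\tfrac1\gamma Q$ and using $\|R_{\min}-\tilde R_{\min}\|_{w_{\max}}=\tfrac1\gamma\|Q_{\min}-\tilde Q_{\min}\|_{w_{\max}}\le\tfrac1\gamma\|Q-\tilde Q\|_w$ produces the key estimate $\|f_{\gamma}(x,Q,\mu)-f_{\gamma}(\tilde x,\tilde Q,\tilde\mu)\|\le\frac{K_F}{\rho}\bigl(d_X(x,\tilde x)+\tfrac1\gamma\|Q-\tilde Q\|_w+W_1(\mu,\tilde\mu)\bigr)$, with $\rho,K_F$ the $\gamma=1$ constants of Lemma~\ref{welldef}.

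For $H^{\gamma}_2$, by Kantorovich--Rubinstein duality it suffices to bound $\int g\,dH^{\gamma}_2(Q,\mu)-\int g\,dH^{\gamma}_2(\tilde Q,\tilde\mu)$ uniformly over $g\in\Lip(X,1)$. Writing $G_{\nu}(x,a):=\int_X g(y)\,p(dy|x,a,\nu)$, I would telescope through three changes: (A) replace the policy $f_{\gamma}(x,Q,\mu)$ by $f_{\gamma}(x,\tilde Q,\tilde\mu)$, keeping $G_{\mu}$ and the integrating measure $\mu$; (B) replace the kernel argument $\mu$ by $\tilde\mu$ inside $G$; (C) replace the integrating measure $\mu$ by $\tilde\mu$. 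Term (A), using that $G_{\mu}(x,\cdot)$ is $K_1$-Lipschitz in the action together with the argmin estimate, contributes $\frac{K_F}{\rho}\frac{K_1}{\gamma}\|Q-\tilde Q\|_w+\frac{K_F}{\rho}K_1 W_1(\mu,\tilde\mu)$. Term (B) is bounded by $K_1 W_1(\mu,\tilde\mu)$ via the fifth bound with $a=\tilde a$. Term (C) equals $\int_X h\,d\mu-\int_X h\,d\tilde\mu$ for $h(x)=G_{\tilde\mu}(x,f_{\gamma}(x,\tilde Q,\tilde\mu))$, hence is at most $\|h\|_{\Lip}W_1(\mu,\tilde\mu)$; splitting $h$ into its direct state dependence (bounded by $K_2$ via the sixth bound with $a=\tilde a$) and its dependence through $f_{\gamma}$ (bounded by $K_2\frac{K_F}{\rho}$, taking the action-Lipschitz constant of $G_{\tilde\mu}$ to be $K_2$ and using the $x$-Lipschitzness of $f_{\gamma}$) yields $\|h\|_{\Lip}\le K_2(1+\frac{K_F}{\rho})$. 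Summing the three contributions, the $\|Q-\tilde Q\|_w$ coefficient is $\frac{K_F}{\rho}\frac{K_1}{\gamma}$ and the $W_1(\mu,\tilde\mu)$ coefficient collects to $\frac{K_F}{\rho}(K_1+K_2)+(K_1+K_2)=(\frac{K_F}{\rho}+1)(K_1+K_2)$, as claimed.

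I expect the $H^{\gamma}_1$ estimate to be routine, and the genuine work to lie in the argmin-stability step and in the bookkeeping for $H^{\gamma}_2$. The delicate point is that one is forced to use the action-Lipschitz constant $K_1$ in term (A) (this is what produces the required $\frac{K_1}{\gamma}$ factor on $\|Q-\tilde Q\|_w$) but must instead use $K_2$ in term (C), so that the four separate $W_1(\mu,\tilde\mu)$ contributions factor cleanly into $(\frac{K_F}{\rho}+1)(K_1+K_2)$. Carrying the $1/\gamma$ scaling from $f_{\gamma}=f_1(\cdot/\gamma)$ correctly through the argmin bound is what links the $\gamma=1$ constants $\rho,K_F$ to the stated coefficients.
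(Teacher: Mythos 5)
Your proposal is correct and follows essentially the same route as the paper: the same three-way splitting for $H^{\gamma}_1$, the same reduction of $f_{\gamma}$ to $f_1(\cdot,\tfrac1\gamma Q,\cdot)$ via Lemma~\ref{lemma:a} with the resulting argmin-stability estimate (which the paper simply cites from \cite[Theorem 1]{anahtarci2020value} rather than re-deriving via the variational inequality), and the same Kantorovich--Rubinstein telescoping for $H^{\gamma}_2$ yielding the coefficients $\frac{K_F}{\rho}\frac{K_1}{\gamma}$ and $(\frac{K_F}{\rho}+1)(K_1+K_2)$. The only cosmetic difference is that you split the kernel change into two steps (policy, then measure argument) where the paper handles both at once through the combined $K_1$ bound.
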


\begin{proof}
The first inequality can be obtained via triangle inequality as follows:
\begin{align*}
\|H^{\gamma}_{1}(Q,\mu)- H^{\gamma}_{1}(\tilde Q,\tilde{\mu})\|_{w} 
&\leq  \sup_{x,a} \frac{ \gamma | c(x,a,\mu) - c(x,a,\tilde{\mu})|}{ w(x,a) } 
\\& \qquad + \sup_{x,a} \frac{\bigg | \beta\int_XQ_{\min}(y)p(dy|x,a, \mu)-\beta\int_X\tilde Q_{\min}(y)p(dy|x,a,{\mu})\bigg |}{w(x,a)}
\\& \qquad + \sup_{x,a} \frac{\bigg | \beta\int_X\tilde Q_{\min}(y)p(dy|x,a,\mu)-\beta\int_X\tilde Q_{\min}(y)p(dy|x,a, \tilde{\mu})\bigg |}{ w(x,a)}
\\&\leq \gamma L_1 W_1(\mu,\tilde {\mu})
\\& \qquad +\beta \|Q_{\min}-\tilde Q_{\min} \|_{ w_{\max}} \sup_{x,a} \frac{\int_X  w_{\max}(y)p(dy|x,a,\mu)}{w(x,a)}
\\& \qquad +\beta \|\tilde Q_{\min} \|_{\Lip} \sup_{x,a}\frac{W_1(p(\cdot|x,a,\mu),p(\cdot|x,a,\tilde {\mu}))}{w(x,a)}
\\&\leq \gamma L_1 W_1(\mu,\tilde{\mu}) + \beta \alpha \| Q - \tilde Q \|_{ w} +\beta \frac{\gamma L_2}{1-\beta K_2}K_1W_1(\mu,\tilde{\mu}).
\end{align*}

For $Q \in \mathcal C_{\gamma}$, we have $\|\frac 1 {\gamma} Q\|_w \leq \frac{M}{1-\beta \alpha}$ and $\|\frac 1 {\gamma} Q\| \leq \frac{L_2}{1-\beta K_2}$. Thus $\frac 1{\gamma} Q \in \mathcal C_1$, and so, we have the following inequality for $f_{1}(x,\frac 1{\gamma} Q, \mu)$ by \cite[Theorem 1]{anahtarci2020value}
\begin{align*}
 &\|f_{1}(x,\frac 1{\gamma} Q, \mu) -f_{1}(y,\frac 1{\gamma} \tilde Q, \tilde{\mu}) \| 
\leq \frac{K_F}{\rho}(d_X(x,y)+W_1(\mu,\tilde {\mu} ))+\frac 1{\gamma} \frac{K_F}{\rho}\|Q-\tilde Q\|_{w}
\end{align*}
so the same Lipschitz condition also holds for $f_{\gamma}(x,Q, \mu)$ by Lemma \ref{lemma:a}.

To obtain the Lipschitz bound on the operator $H^{\gamma}_2(\cdot,\cdot)=H^1_2(\frac{1}{\gamma} \cdot,\cdot)$, for any $(Q,\mu),(\tilde Q,\tilde {\mu}) \in C_{\gamma} \times \mathcal P(X)$, we have
\begin{align}
W_1(H^1_2 (\frac 1{\gamma} Q,\mu),H^1_2(\frac{1}{\gamma} \tilde Q,\tilde{\mu}))
& = \sup_{\|g\|_{\Lip} \leq 1} \bigg | \int_{X} \int_X g(y)p(dy|x,f_1(x,\frac 1{\gamma} Q,\mu),\mu)\mu(dx)
\nonumber  \\& \qquad \qquad \qquad - \int_{X } \int_X g(y)p(dy|x,f_1(x,\frac 1{\gamma} \tilde Q,\tilde{\mu}),\tilde {\mu})\tilde {\mu}(dx) \bigg |
\nonumber \\& \leq \sup_{\|g\|_{\Lip} \leq 1} \bigg | \int_{X } \int_X g(y)p(dy|x,f_1(x,\frac 1{\gamma} Q,\mu),\mu)\mu(dx)
\nonumber \\& \qquad \qquad \qquad - \int_{X } \int_X g(y)p(dy|x,f_1(x,\frac 1{\gamma} \tilde Q,\tilde{\mu}),\tilde {\mu}) {\mu}(dx) \bigg |
\nonumber \\& \qquad + \sup_{\|g\|_{\Lip} \leq 1} \bigg | \int_{X } \int_X g(y)p(dy|x,f_1(x,\frac 1{\gamma} \tilde Q,\tilde{\mu}),\tilde {\mu}) {\mu}(dx) 
\nonumber \\& \qquad \qquad \qquad - \int_{X } \int_X g(y)p(dy|x,f_1(x,\frac 1{\gamma} \tilde Q,\tilde{\mu}),\tilde {\mu})\tilde {\mu}(dx) \bigg |
\nonumber \\& \leq  \int_{X } \sup_{\|g\|_{\Lip} \leq 1}  \bigg |\int_X  g(y)p(dy|x,f_1(x,\frac 1{\gamma} Q,\mu),\mu)
\nonumber \\& \qquad \qquad \qquad - \int_{X} \int_X g(y)p(dy|x,f_1(x,\frac 1{\gamma} \tilde Q,\tilde{\mu}),\tilde {\mu})\bigg | {\mu}(dx) 
\nonumber \\& \qquad + K_2\bigg(1 +\frac{K_F}{\rho} \bigg)W_1(\mu,\tilde{\mu}) \,\,
\label{eq7} \\&  \leq \int_{X} W_1(p(\cdot|x,f_1(x,\frac 1{\gamma}Q,\mu),\mu),p(\cdot|x,f_1(x,\frac 1{\gamma}\tilde Q,\tilde{\mu}),\tilde \mu)) \mu(dx)
\nonumber \\& \qquad + K_2\bigg(1+\frac{K_F}{\rho} \bigg)W_1(\mu,\tilde{\mu})
\nonumber \\ \leq \frac{K_F}{\rho}K_1&\bigg(\frac 1{\gamma} \|Q-\tilde Q\|_w + W_1(\mu,\tilde{\mu})\bigg) + K_1W_1(\mu,\tilde{\mu}) + K_2\bigg (1+ \frac {K_F}{\rho} \bigg) W_1(\mu,\tilde{\mu}). \nonumber
\end{align}
The only difference between the inequalities above and those in \cite[Theorem 1]{anahtarci2020value} is the term $K_2(K_F/\rho 
+1)$ on \eqref{eq7}. In \cite{anahtarci2020value}, one can easily see that the factor $K_2(K_F/\rho +1)$ can be substituted in place of $K_2+K_F/\rho$ by modifying the proof of \cite[Theorem 1]{anahtarci2020value}. The only difference comes from \cite[Equation 10]{anahtarci2020value} where $K_2(K_F/{\rho}+1)$ can be obtained instead of $K_2+K_F/{\rho}$ which is an improvement since we will be implicitly assuming that $K_2<1.$ This justifies the appearence of the term $K_2(K_F/{\rho}+1).$ This completes the proof.
\end{proof}

Let us define the following constants
\[ 
k_1 = \frac{K_F K_1}{\rho(1-\alpha \beta )},\,\,\,
 k_2 = \frac{1-\bigg ( \frac{K_F}{\rho}+1 \bigg)\bigg (K_2+K_1 \bigg ) }{L_1+\beta\frac{L_2}{1-\beta K_2}K_1}, \,\,\,
k = \max \bigg ( \alpha \beta , \bigg ( \frac{K_F}{\rho}+1 \bigg)\bigg (K_2+K_1 \bigg ) \bigg ). \]

\begin{theorem}\label{thrm1}
 Suppose that $k<1$ and $k_1 < k_2.$ Then there exists a unique fixed point of $H_{\gamma}$ for any $\gamma \in \mathbb R_{>0}$ whenever Assumption \ref{assump1} holds.
\end{theorem}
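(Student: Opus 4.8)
The plan is to apply the Banach fixed point theorem to $H_\gamma$ on the product space $\mathcal C_\gamma \times \mathcal P(X)$, equipped with a suitably weighted metric that couples the two components. By Lemma~\ref{welldef}, $H_\gamma$ maps $\mathcal C_\gamma \times \mathcal P(X)$ into itself, so it remains to exhibit a complete metric under which $H_\gamma$ is a strict contraction. First I would record the two bounds of Lemma~\ref{lemma:b} in the compact form
\[\|H^{\gamma}_1(Q,\mu)-H^{\gamma}_1(\tilde Q,\tilde\mu)\|_w \le a\,\|Q-\tilde Q\|_w + b\,W_1(\mu,\tilde\mu),\qquad W_1\big(H^{\gamma}_2(Q,\mu),H^{\gamma}_2(\tilde Q,\tilde\mu)\big) \le c\,\|Q-\tilde Q\|_w + d\,W_1(\mu,\tilde\mu),\]
where $a=\alpha\beta$, $b=\gamma\big(L_1+\beta\frac{L_2}{1-\beta K_2}K_1\big)$, $c=\frac{K_FK_1}{\rho\gamma}$, and $d=\big(\frac{K_F}{\rho}+1\big)(K_1+K_2)$. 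These constants form a nonnegative $2\times2$ ``Lipschitz matrix,'' and the conceptual goal is to show that its spectral radius is below $1$.

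Next I would introduce, for a weight $\theta>0$ to be chosen, the metric $\varrho_\theta\big((Q,\mu),(\tilde Q,\tilde\mu)\big) := \theta\|Q-\tilde Q\|_w + W_1(\mu,\tilde\mu)$. Applying $H_\gamma$ and using the two bounds gives $\varrho_\theta\big(H_\gamma(Q,\mu),H_\gamma(\tilde Q,\tilde\mu)\big) \le (\theta a + c)\|Q-\tilde Q\|_w + (\theta b + d)\,W_1(\mu,\tilde\mu)$, which is at most $\lambda_\theta\,\varrho_\theta\big((Q,\mu),(\tilde Q,\tilde\mu)\big)$ with $\lambda_\theta := \max\{\,a+c/\theta,\ \theta b + d\,\}$. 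Thus $H_\gamma$ contracts as soon as some $\theta>0$ makes both arguments of this maximum strictly below $1$. Since $k=\max(a,d)<1$ forces both $a=\alpha\beta<1$ and $d=\big(\frac{K_F}{\rho}+1\big)(K_1+K_2)<1$, the two requirements become $\theta>\frac{c}{1-a}$ and $\theta<\frac{1-d}{b}$, so a feasible $\theta$ exists precisely when $\frac{c}{1-a}<\frac{1-d}{b}$.

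The crux is to verify that this feasibility inequality is exactly the hypothesis $k_1<k_2$. Substituting the constants, $\frac{c}{1-a}=\frac{K_FK_1}{\rho\gamma(1-\alpha\beta)}=\frac{k_1}{\gamma}$ and $\frac{1-d}{b}=\frac{1-(\frac{K_F}{\rho}+1)(K_1+K_2)}{\gamma\,(L_1+\beta\frac{L_2}{1-\beta K_2}K_1)}=\frac{k_2}{\gamma}$, so the admissible interval $\big(\frac{c}{1-a},\frac{1-d}{b}\big)=\big(\frac{k_1}{\gamma},\frac{k_2}{\gamma}\big)$ is nonempty iff $k_1<k_2$; I would then fix, say, $\theta=\frac{k_1+k_2}{2\gamma}$. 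Note that the scale $\gamma$ cancels, consistent with Lemma~\ref{lemma:a}, which explains why the existence condition is $\gamma$-independent. I expect this weighted-metric bookkeeping — choosing $\theta$ and confirming that the off-diagonal coupling collapses precisely to $k_1<k_2$ — to be the main obstacle, rather than the remaining completeness check.

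Finally, to legitimately invoke Banach, I would verify completeness of $(\mathcal C_\gamma\times\mathcal P(X),\varrho_\theta)$. The set $\mathcal C_\gamma$ is closed in the $w$-normed space of weighted-bounded functions: if $Q_n\to Q$ in $\|\cdot\|_w$ then $|Q_{n,\min}-Q_{\min}|\le\|Q_n-Q\|_w\,w_{\max}$ pointwise, and since $\|\cdot\|_{\Lip}$ is lower semicontinuous under pointwise convergence the two constraints defining $\mathcal C_\gamma$ persist in the limit; combined with completeness of $(\mathcal P(X),W_1)$ this gives completeness of the product. The contraction mapping theorem then yields a unique fixed point of $H_\gamma$ for every $\gamma\in\mathbb R_{>0}$. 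The only delicate point in the completeness step is that $W_1$-completeness tacitly restricts attention to measures of finite first moment, which I would address with a brief remark.
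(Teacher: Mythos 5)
Your proof is correct, and it rests on the same two Lipschitz estimates of Lemma~\ref{lemma:b} and the same Banach fixed point strategy as the paper, but the bookkeeping is organized differently. The paper exploits the hypothesis $k_1<k_2$ by choosing one particular scaling $\gamma\in(k_1,k_2)$, for which $H_\gamma$ is a contraction with respect to the \emph{unweighted} sum metric $\|Q-\tilde Q\|_w+W_1(\mu,\tilde\mu)$; it then transports existence and uniqueness of the fixed point to every other $H_{\gamma'}$ via the scaling correspondence of Lemma~\ref{lemma:a}. You instead keep $\gamma$ arbitrary and move the freedom into the metric, introducing the weight $\theta$ and observing that the admissible window $\big(k_1/\gamma,\,k_2/\gamma\big)$ for $\theta$ is nonempty exactly when $k_1<k_2$. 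The two devices are interchangeable --- reweighting the $Q$-component of the metric by $\theta$ is the same as rescaling $Q$ by a constant, which is why the feasibility condition is $\gamma$-independent in both treatments --- but your route has the advantage of making every $H_\gamma$ a contraction outright, so the transfer argument of Lemma~\ref{lemma:a} (whose uniqueness step the paper only sketches) is not needed for this theorem. You also supply the completeness check of $(\mathcal C_\gamma\times\P(X),\varrho_\theta)$, including closedness of $\mathcal C_\gamma$ via lower semicontinuity of the Lipschitz seminorm and the caveat that $W_1$-completeness presupposes finite first moments, a point the paper leaves implicit.
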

\begin{proof}
Since $0<k_1<k_2$ there exists $\gamma \in \mathbb R_{>0}$ such that $k_1< \gamma < k_2 $ so we have
$
\frac{K_F K_1}{ \gamma \rho}+\alpha\beta<1
$
and 
$
\gamma(L_1+\beta\frac{L_2}{1-\beta K_2}K_1)+\bigg ( \frac{K_F}{\rho}+1 \bigg)\bigg (K_2+K_1 \bigg ) < 1.
$
As a consequence of Lemma \ref{lemma:b}, we obtain that $H_{\gamma}$ is a contraction, and thus, has a unique fixed point by the Banach fixed point theorem as $H_{\gamma}$ is well defined by Lemma \ref{welldef}.

If we pick any another scaling factor $\gamma ' \in \mathbb R_{>0},$ then Lemma \ref{lemma:a} implies the operator $H_{\gamma '}$ also has a fixed point. If there exists any other fixed point of $H_{\gamma '}$ then this must induce a fixed point of $H_{\gamma}$ as described in Lemma \ref{lemma:a}. Measure component of the induced fixed point for the operator $H_{\gamma}$ is the same as in $H_{\gamma '}$ as stated in Lemma \ref{lemma:a}, so using the uniqueness of the fixed point of $H^{\gamma}_1(\cdot,\mu)$ for fixed $\mu \in \mathcal P(X)$ we see that $H_{\gamma '}$ must have a unique fixed point. 
\end{proof}

\begin{remark}
The primary distinction between our result and \cite[Theorem 2]{anahtarci2020value} is that while \cite[Theorem 2]{anahtarci2020value} assumes $k_1 < 1 < k_2$, we only require the existence of any real number between them. This difference in assumptions is the fundamental reason for introducing scaled versions of the mean-field equilibrium operator and the subsequent analysis. 
\end{remark}

\begin{remark}
The conditions $k_1 < k_2$ and $k < 1$ were previously employed in \cite{JMLR:v24:21-0505} to attain a mean-field equilibrium through an iterative algorithm. This involved determining the optimal $Q$-function for a fixed $\mu$ and iterating the ergodicity operator over these optimal $Q$-functions, which necessitated finding the fixed point of the value iteration algorithm for the iterated state-measure $\mu$ at each step. In our work, we demonstrate that under the same conditions, the same result can be achieved without the need to find the fixed point of the value iteration algorithm at each step of the ergodicity operator. Consequently, there is no requirement to find the optimal $Q$-functions at each iteration, as was done in \cite{JMLR:v24:21-0505}.
\end{remark}

A consequence of the theorem above is that, if there exists a $\gamma \in \mathbb R_{>0}$ such that $H_{\gamma}$ is a contraction operator, then we can prove further that, for any $\gamma' \in \mathbb R_{>0},$ iterations of the operator $H_{\gamma '}$ converges to a fixed point starting from any point in their domain. This is remarkable since the constant $\gamma '$ might not satisfy the assumptions of the Theorem \ref{thrm1} itself. A simple case of such instance is when the constant $\gamma '$ is equal to $2k_2.$ 

We will give the proof of this fact in case when $\gamma ' = 1$ for simplicity. 

\begin{corollary}
Suppose $k_1<k_2$ and $k<1$ holds. Under Assumption \ref{assump1}, iterations of the operator $H_{\gamma}$ converges for any $\gamma \in \mathbb R_{>0}.$
\end{corollary}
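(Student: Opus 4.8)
The strategy is to transport the contraction obtained in Theorem~\ref{thrm1} to every other scaling factor by means of the conjugacy encoded in Lemma~\ref{lemma:a}. Under the hypotheses $k_1<k_2$ and $k<1$, the argument of Theorem~\ref{thrm1} produces a particular $\gamma_* \in (k_1,k_2)$ for which $H_{\gamma_*}$ is a contraction on $\mathcal C_{\gamma_*}\times\mathcal P(X)$; hence, by the Banach fixed point theorem, the iterates $H_{\gamma_*}^n(Q,\mu)$ converge to the unique fixed point $(Q_*,\mu_*)$ of $H_{\gamma_*}$ from any initial $(Q,\mu)\in\mathcal C_{\gamma_*}\times\mathcal P(X)$. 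I would then fix an arbitrary target $\gamma\in\mathbb R_{>0}$ (in particular $\gamma=1$), set $\lambda = \gamma_*/\gamma$, and introduce the scaling map $S_\lambda(Q,\mu)=(\lambda Q,\mu)$, whose role is to intertwine $H_\gamma$ with $H_{\gamma_*}$.

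The key step is to read off from the last display of Lemma~\ref{lemma:a} the conjugacy
\[
S_\lambda\circ H_\gamma = H_{\gamma_*}\circ S_\lambda ,
\]
which follows by substituting $\gamma'=\gamma_*$ into $H_{\gamma}(Q,\mu)=\bigl(\tfrac{\gamma}{\gamma_*}H^{\gamma_*}_1(\tfrac{\gamma_*}{\gamma}Q,\mu),H^{\gamma_*}_2(\tfrac{\gamma_*}{\gamma}Q,\mu)\bigr)$ and cancelling the factor $\tfrac{\gamma}{\gamma_*}\cdot\lambda=1$ in the first component. Beforehand I would record that $S_\lambda$ is a bijection from $\mathcal C_{\gamma}\times\mathcal P(X)$ onto $\mathcal C_{\gamma_*}\times\mathcal P(X)$: since $\lambda>0$ we have $(\lambda Q)_{\min}=\lambda Q_{\min}$, so $\|\lambda Q\|_w=\lambda\|Q\|_w$ and $\|(\lambda Q)_{\min}\|_{\Lip}=\lambda\|Q_{\min}\|_{\Lip}$ carry the defining bounds of $\mathcal C_\gamma$ exactly into those of $\mathcal C_{\gamma_*}$, with inverse $S_{1/\lambda}$. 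Because $S_\lambda$ scales the $w$-norm by $\lambda$ and fixes the measure coordinate, it is bi-Lipschitz, hence a homeomorphism, for the product metric built from $\|\cdot\|_w$ and $W_1$.

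Iterating the conjugacy gives $H_\gamma^n = S_{1/\lambda}\circ H_{\gamma_*}^n\circ S_\lambda$. For any $(Q,\mu)\in\mathcal C_\gamma\times\mathcal P(X)$, the point $S_\lambda(Q,\mu)$ lies in $\mathcal C_{\gamma_*}\times\mathcal P(X)$, so $H_{\gamma_*}^n(S_\lambda(Q,\mu))\to(Q_*,\mu_*)$; applying the continuous map $S_{1/\lambda}$ then yields $H_\gamma^n(Q,\mu)\to S_{1/\lambda}(Q_*,\mu_*)=(\tfrac{\gamma}{\gamma_*}Q_*,\mu_*)$, which is exactly the fixed point of $H_\gamma$ furnished by Lemma~\ref{lemma:a}. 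As $\gamma$ was arbitrary, the iterations of $H_\gamma$ converge for every $\gamma\in\mathbb R_{>0}$. I expect the only genuine subtlety to be the bookkeeping of the previous paragraph — verifying that $S_\lambda$ maps $\mathcal C_\gamma$ onto $\mathcal C_{\gamma_*}$ and is bicontinuous — since the contraction estimate itself is already supplied by Theorem~\ref{thrm1}; once the scaling map is seen to be a homeomorphism intertwining the two operators, transporting the Banach convergence is automatic.
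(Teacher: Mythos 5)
Your proposal is correct and follows essentially the same route as the paper: both arguments obtain convergence for one good scaling $\gamma_*\in(k_1,k_2)$ via the Banach fixed point theorem and then transfer it to an arbitrary $\gamma$ through the scaling identity of Lemma~\ref{lemma:a}. You merely package the transfer more cleanly as an explicit conjugacy $H_\gamma = S_{1/\lambda}\circ H_{\gamma_*}\circ S_\lambda$ by the homeomorphism $S_\lambda(Q,\mu)=(\lambda Q,\mu)$, whereas the paper tracks the rescaled iterates $(\tfrac{1}{\gamma}Q^n,\mu^n)$ directly; the substance is identical.
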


\begin{proof}
Since $k_1<k_2,$ there exists $\gamma \in \mathbb R_{>0}$ such that $k_1<\gamma <k_2$ hence by the Banach fixed point theorem iterations of $H_{\gamma}$ converge to the fixed point of $H_{\gamma},$ which in turn induces a fixed point for any $H_{\gamma '},$ $\gamma ' \in \mathbb R_{>0}.$ Recursively define $(Q^n,\mu^n)=H_{\gamma}(Q^{n-1},\mu^{n-1)}).$ Then $H^{\gamma}_1(Q^n,\mu^n)$ converges in $w$-norm to some $Q_{\gamma},$ meaning that 
$\gamma H^1_1(\frac 1{\gamma} Q^n,\mu^n) \rightarrow Q_{\gamma}$ in $w$-norm thus, $H^1_1(\frac 1{\gamma} Q^n,\mu^n) \rightarrow \frac 1{\gamma} Q_{\gamma}$ in $w$-norm. Since $\frac 1{\gamma} Q_{\gamma}$ is the $Q$-function in the fixed point of $H_1$ and $H_2^{\gamma}(Q^n,\mu^n) = H_2^{1}(\frac 1{\gamma} Q^n,\mu^n) \rightarrow \mu$, we must have that the iterations of the operator $H_1$, when acting on $(\frac 1{\gamma} Q_{\gamma}^n,\mu^n)$, is well-defined and convergent to $(\frac 1{\gamma} Q_{\gamma},\mu)$. The same conclusion can be made for other constants $\gamma'$ similarly. Uniqueness of the fixed point of $H_{\gamma}$ concludes the result.
\end{proof}

\begin{corollary}
Under the assumptions of the Theorem \ref{thrm1}, mean-field equilibria given by the operators $H_{\gamma}$, $\gamma \in \mathbb R_{>0}$, are the same, and this common equilibrium is given by the tuple $(\pi, \mu)$, where $\pi(\cdot | x) = \delta_{\underset{a \in A}{\mathrm{argmin}}H^{\gamma}_1(Q,\mu)(x,a)} (\cdot)$ for $(Q,\mu)\in \mathcal C_{\gamma} \times \mathcal P(X)$ such that $H_{\gamma}(Q,\mu) = (Q,\mu).$
\end{corollary}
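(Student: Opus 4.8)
The plan is to read off the equilibrium from the fixed point of $H_{\gamma}$, use the scaling identity of Lemma~\ref{lemma:a} together with the uniqueness guaranteed by Theorem~\ref{thrm1} to show that the extracted policy–measure pair is independent of $\gamma$, and then verify directly that this pair satisfies the two defining conditions $\mu \in \Lambda(\pi)$ and $\pi \in \Psi(\mu)$ of a mean-field equilibrium.

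First I would fix $\gamma \in \mathbb R_{>0}$ and let $(Q,\mu)$ be the unique fixed point of $H_{\gamma}$, which exists by Theorem~\ref{thrm1}. For any other $\gamma' \in \mathbb R_{>0}$, Lemma~\ref{lemma:a} shows that $(\frac{\gamma'}{\gamma}Q,\mu)$ is a fixed point of $H_{\gamma'}$ (the unique one, again by Theorem~\ref{thrm1}), and, what is essential here, that the minimizer is unchanged, $f_{\gamma'}(x,\frac{\gamma'}{\gamma}Q,\mu)=f_{\gamma}(x,Q,\mu)$ for every $x$. This last point is exactly the observation in the proof of Lemma~\ref{lemma:a}: the map $H^{\gamma}_1(Q,\mu)(x,\cdot)$ is strongly convex (Assumption~\ref{assump1}-(b), preserved for all $\gamma$ by Lemma~\ref{welldef}), so it has a unique minimizer, and scaling by a positive constant preserves that minimizer. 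Hence both the measure component $\mu$ and the induced policy $\pi(\cdot|x)=\delta_{f_{\gamma}(x,Q,\mu)}(\cdot)$ are the same for every scaling, which is precisely the claim that the equilibria produced by the operators $H_{\gamma}$ coincide.

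Next I would check that $(\pi,\mu)$ is a mean-field equilibrium. The inclusion $\mu\in\Lambda(\pi)$ is immediate from the fixed-point equation for the second component: $H^{\gamma}_2(Q,\mu)=\mu$ reads $\mu(\cdot)=\int_X p(\cdot|x,f_{\gamma}(x,Q,\mu),\mu)\,\mu(dx)$, and since $\pi(\cdot|x)$ is the Dirac mass at $f_{\gamma}(x,Q,\mu)$, this is exactly the invariance condition $\mu(\cdot)=\int_{X\times A}p(\cdot|x,a,\mu)\,\pi(da|x)\,\mu(dx)$ defining $\Lambda(\pi)$. For $\pi\in\Psi(\mu)$, the fixed-point equation $Q=H^{\gamma}_1(Q,\mu)$ gives, after dividing by $\gamma$ and using $\frac1\gamma Q_{\min}=(\frac1\gamma Q)_{\min}$, that $\frac1\gamma Q$ solves the discounted Bellman optimality equation for the fixed-measure Markov decision problem with cost $c(\cdot,\cdot,\mu)$ and transition $p(\cdot|\cdot,\cdot,\mu)$; by the weighted-norm verification theory for such problems (cf. \cite{anahtarci2020value}) this makes $\frac1\gamma Q$ the optimal $Q$-function and its greedy selector $x\mapsto\delta_{\mathrm{argmin}_a Q(x,a)}=\pi(\cdot|x)$ an optimal policy for $\mu$.

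The one point requiring care is this last passage, from ``$Q$ is a fixed point of $H^{\gamma}_1(\cdot,\mu)$'' to ``$\pi$ is genuinely optimal for the control problem indexed by $\mu$'': this is not a formal consequence of the contraction property alone but invokes the verification theorem for discounted MDPs in the weighted-norm setting, identifying the fixed point of the Bellman operator with the optimal value and its greedy selector with an optimal policy. Strong convexity (Assumption~\ref{assump1}-(b), via Lemma~\ref{welldef}) ensures the argmin is a well-defined measurable function of $x$, so $\pi$ is admissible and the verification theorem applies verbatim to $\frac1\gamma Q$. Finally, uniqueness of the fixed point from Theorem~\ref{thrm1} guarantees that each $H_{\gamma}$ produces exactly one such pair, so the coincidence established above is between well-defined objects, which completes the argument.
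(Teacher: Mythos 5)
Your proof is correct and is essentially the argument the paper has in mind: the paper's own proof is just ``Follows from the definitions,'' and you have carefully unwound those definitions, using Lemma~\ref{lemma:a} for the $\gamma$-independence of the minimizer and measure, the second fixed-point equation for $\mu\in\Lambda(\pi)$, and the standard weighted-norm verification theorem (valid here since $\alpha\beta\leq k<1$) for $\pi\in\Psi(\mu)$. No gaps; your explicit flagging of the Bellman-fixed-point-to-optimality step is exactly the one point the paper leaves implicit.
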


\begin{proof}
Follows from the definitions.
\end{proof}

\section{Robustness of MFGs} 

Aim of this section will be to show that under some uniqueness conditions, the limit of mean-field equilibria that are obtained under a converging family of transition probabilities exist. The subtle point is the assumption on the uniqueness of the accumulation point that we will use, which is a common assumption in the literature regarding limit theorems involving mean-field equilibrium \cite{cardaliaguet2019master}.

In continuous-time MFGs, the monotonicity condition is a known sufficient condition for the uniqueness of the associated MFE, as discussed in \cite{lasry2007mean}. In discrete time, the counterpart of the monotonicity assumption requires further conditions, as mentioned in \cite[Remark 3]{saldi2022partially}. An alternative condition for achieving the uniqueness of MFE in the discrete-time case is provided in Theorem \ref{thrm1}.

On a product space, it is well known that even if marginals of a sequence of measures converge weakly, these measures might not converge. This is problematic because for a family of mean-field equilibria $(\pi_n,\mu_n),$ if we define $\nu _n(dx,dy)=\pi_n(dy|x)\mu_n(dx),$ then even if $\mu_n$ converges weakly, $\nu _n$ might not converge. Thus we need to control the convergence of both the state-measures $\mu _n$ and conditional distributions $\pi _n.$ For this purpose, and to simplify the proofs, we shall need the following key lemma for our results.

\begin{lemma}\label{keylemma}
Let $A$ be a compact convex subset of $\mathbb R^m$ and $X$ be a Polish space. Consider an $\alpha$-strongly convex continuous family of functions $F^n:X \times A \rightarrow \mathbb R$ in $a$ for each $x$. If $F^n \rightarrow F$ uniformly over compact sets, then $F$ is also $\alpha$-strongly convex  continuous function in $a$ for each $x$ and 
$$\underset{a \in A}{\mathrm{argmin}}F^n(x_n,a) \rightarrow \underset{a \in A}{\mathrm{argmin}}F(x,a)$$ for all $x_n \rightarrow x$ in $X$. 
In general, if a family of functions $\{F_n\}$ converges uniformly over compact sets to a function $F$ with a unique minimizer in $a$ for each $x$, then for any $\nu_n \rightarrow \nu$ weakly in $\P(X)$, we have 
$$
\nu_n \otimes \pi_n \rightarrow \nu \otimes \pi \,\, \text{weakly}
$$
where $\supp \pi_n(\cdot|x) \subset \argmin_{a} F_n(x,a)$ and $\pi(\cdot|x) = \delta_{\argmin_{a} F(x,a)}(\cdot)$, for each $x$.
\end{lemma}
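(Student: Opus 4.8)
The plan is to decouple the statement into two logically separate claims: a deterministic stability property of the minimizers under uniform-on-compacts convergence, and a measure-theoretic continuity property that upgrades this stability to weak convergence of the joint laws $\nu_n \otimes \pi_n$. I read the hypotheses as: each $F^n$ is jointly continuous on $X\times A$ and $\alpha$-strongly convex in $a$ for each $x$. Preservation of structure is then immediate. Since $\alpha$-strong convexity of $F^n(x,\cdot)$ means $a\mapsto F^n(x,a)-\tfrac{\alpha}{2}\|a\|^2$ is convex, and a pointwise limit of convex functions on the convex set $A$ is convex, $F(x,\cdot)-\tfrac{\alpha}{2}\|a\|^2$ is convex, so $F(x,\cdot)$ is $\alpha$-strongly convex; continuity of $F$ follows because a uniform-on-compacts limit of continuous functions is continuous. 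In particular $F(x,\cdot)$ has a unique minimizer on the compact convex set $A$, which I denote $a^*(x)=\argmin_{a\in A}F(x,a)$.

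For the convergence of minimizers I would argue by subsequential compactness. Fix $x_n\to x$ and pick any selection $a_n\in\argmin_a F^n(x_n,a)$ (nonempty since $F^n(x_n,\cdot)$ is continuous on compact $A$). As $A$ is compact, along any subsequence extract $a_{n_k}\to\bar a$. The set $(\{x_n\}_n\cup\{x\})\times A$ is compact, so $F^n\to F$ uniformly on it; together with joint continuity of $F$ this gives $F^{n_k}(x_{n_k},a_{n_k})\to F(x,\bar a)$ and $F^{n_k}(x_{n_k},a)\to F(x,a)$ for each fixed $a$. Passing to the limit in $F^{n_k}(x_{n_k},a_{n_k})\le F^{n_k}(x_{n_k},a)$ yields $F(x,\bar a)\le F(x,a)$ for all $a$, so $\bar a=a^*(x)$ by uniqueness. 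Since every subsequential limit equals $a^*(x)$, the whole sequence converges, and in fact $\sup\{\|a-a^*(x)\|:a\in\argmin_a F^n(x_n,a)\}\to 0$, i.e. the minimizing sets collapse to $\{a^*(x)\}$ (a Berge-type stability). Taking $F^n\equiv F$ shows that $x\mapsto a^*(x)$ is continuous. This already establishes the first displayed convergence.

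For the joint weak convergence, since $X\times A$ is Polish it suffices to show $\int h\,d(\nu_n\otimes\pi_n)\to\int h\,d(\nu\otimes\pi)$ for every $h\in C_b(X\times A)$. Writing $G_n(x)=\int_A h(x,a)\,\pi_n(da|x)$ and $G(x)=h(x,a^*(x))$, the left side equals $\int_X G_n\,d\nu_n$ and the right side equals $\int_X G\,d\nu$, since $\pi(\cdot|x)=\delta_{a^*(x)}$. Because $\supp\pi_n(\cdot|x_n)\subset\argmin_a F^n(x_n,a)$ collapses to $\{a^*(x)\}$ and $h$ is uniformly continuous on the compact set $(\{x_n\}\cup\{x\})\times A$, one gets $G_n(x_n)\to h(x,a^*(x))=G(x)$ whenever $x_n\to x$; that is, $G_n\to G$ \emph{continuously}, $G$ is continuous (composition of continuous maps), and $|G_n|\le\|h\|_\infty$.

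The final and main step is to deduce $\int_X G_n\,d\nu_n\to\int_X G\,d\nu$ from continuous convergence alone. First, continuous convergence to a continuous limit forces uniform convergence on compact sets: were it to fail on some compact $K$, a subsequence $x_{n_k}\in K$ with $|G_{n_k}(x_{n_k})-G(x_{n_k})|\ge\epsilon$ would admit a further convergent subsequence $x_{n_k}\to x$, along which $G_{n_k}(x_{n_k})\to G(x)$ and $G(x_{n_k})\to G(x)$, a contradiction. Second, $\nu_n\to\nu$ weakly in the Polish space $X$, so $\{\nu_n\}$ is tight by Prokhorov's theorem; given $\delta>0$ choose compact $K$ with $\sup_n\nu_n(X\setminus K)<\delta$ and split $\int_X(G_n-G)\,d\nu_n$ over $K$ and its complement, bounding the first piece by $\sup_{x\in K}|G_n-G|\to 0$ and the second by $2\|h\|_\infty\delta$; since $\delta$ is arbitrary this integral tends to $0$, while $\int_X G\,d\nu_n\to\int_X G\,d\nu$ because $G\in C_b(X)$. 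Combining the two gives the claim. The genuinely delicate point is precisely this last passage: the integrands $G_n$ move together with the measures $\nu_n$ and converge only continuously rather than uniformly on all of $X$, so the argument must couple tightness with the continuous-convergence $\Rightarrow$ locally-uniform-convergence upgrade rather than rely on any global norm estimate.
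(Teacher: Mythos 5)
Your proof is correct, but it takes a genuinely different route from the paper's. The paper proves the measure-level claim first and then recovers the convergence of minimizers as a special case by taking $\nu_n=\delta_{x_n}$, $\nu=\delta_x$; you reverse the order. You first establish a Berge-type stability of the minimizing sets by subsequential compactness in $A$ (extract $a_{n_k}\to\bar a$ from selections $a_{n}\in\argmin_aF^{n}(x_{n},a)$, pass to the limit in $F^{n_k}(x_{n_k},a_{n_k})\le F^{n_k}(x_{n_k},a)$ using uniform convergence on $(\{x_n\}\cup\{x\})\times A$, and invoke uniqueness of $\argmin_aF(x,a)$), and only then deduce the weak convergence of the joint laws by testing against $h\in C_b(X\times A)$ and reducing to $\int G_n\,d\nu_n\to\int G\,d\nu$ for uniformly bounded $G_n\to G$ continuously. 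The paper's route to the measure statement is heavier: it extracts a weakly convergent subsequence of $\{\nu_n\otimes\pi_n\}$ by tightness, proves that the limit $\nu\otimes\hat\pi$ gives full mass to the closed set $C=\{(x,a):F(x,a)=\min_aF(x,a)\}$ via an exhaustion $C^c=\bigcup_M B^M$ with $\nu\otimes\hat\pi(\partial B^M)=0$, continuous convergence of $1_{C_n\cap B^M}$ to $0$, \cite[Theorem 3.5]{Lan81} and the Portmanteau theorem, and finally identifies $\hat\pi=\pi$ from uniqueness of the minimizer. Your final step (tightness of $\{\nu_n\}$ plus the upgrade of continuous convergence to locally uniform convergence) is in effect a self-contained proof of the generalized convergence theorem the paper cites elsewhere as \cite[Theorem 3.3]{MR705462}, so you could have invoked that result directly. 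What your approach buys is a shorter argument that avoids subsequence extraction at the level of joint laws and the boundary-null exhaustion, and it makes explicit the uniform collapse of $\argmin_aF_n(x_n,a)$ onto the singleton $\{\argmin_aF(x,a)\}$ (valid without convexity, using only uniqueness of the limiting minimizer), which the paper obtains only implicitly; both arguments rest on the same essential inputs, namely compactness of $A$, uniqueness of the limit minimizer, and tightness of $\{\nu_n\}$.
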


\begin{proof}
Any $\alpha$-strongly convex function $g:A\subset \mathbb R^m \rightarrow \mathbb R$ is of the form $g(a)=\tilde{g}(a)+\alpha /2 \| a \|,$ where $\tilde g$ is a convex function. Fix any $x$. Since $F^n(x,\cdot)$'s are all $\alpha$-strongly convex in $a$, and $F^n(x,\cdot)$ converges in the uniform norm on $A$, $\sup_a|F^n(x,a)-F^m(x,a)| = \sup_a| \tilde F^n(x,a) - \tilde F^m(x,a) | \rightarrow 0$ as $n,m \rightarrow \infty$. Thus $\tilde F^n(x,\cdot) \rightarrow \tilde F(x,\cdot)$ for some convex function $\tilde F(x,\cdot)$ on $A$ in the uniform norm as convex functions are closed under uniform norm. This implies that $\tilde F^n(x,\cdot) + \alpha /2 \|\cdot \|$ converges to $\tilde F(x,\cdot) + \alpha /2 \| \cdot \|$ in uniform norm, and so, $F(x,\cdot) = \tilde F(x,\cdot) + \alpha/2  \| \cdot \|$ is $\alpha$-strongly convex. From strong convexity it follows that $F^n(x,\cdot)$ and $F(x,\cdot)$ have unique minimizers. Please note that if we can establish the second claim, it will also lead to the proof of the first part when we choose $\mu_n = \delta_{x_n}$ and $\mu = \delta_x$. Therefore, for the rest of the proof, we will focus on demonstrating the second assertion.

Note that we have 
\begin{align*}
\nu \otimes \pi \left\{(x,a): F(x,a) = \min_{a} F(x,a) \right\} &= 1 \\
\nu_n \otimes \pi_n \left\{(x,a): F_n(x,a) = \min_{a} F_n(x,a) \right\} &= 1
\end{align*}
due to the definitions of $\pi$ and $\pi_n$. Moreover, since the sequence $\{\nu_n\}$ is tight in $\P(X)$ and $A$ is compact, the sequence $\{\nu_n \otimes \pi_n\}$ is also tight in $\P(X\times A)$. Therefore, there exists a subsequence $\{\nu_{n_k}\otimes \pi_{n_k}\}$ of $\{\nu_n \otimes \pi_n\}$ that converges weakly to $\nu \otimes \hat \pi$. In the remaining part of this proof, we demonstrate that $\nu \otimes \hat \pi = \nu \otimes \pi$. This will conclude the proof, as it implies that every subsequence of ${\nu_n \otimes \pi_n}$ weakly converges to the same limit point. Additionally, due to tightness, we can ascertain the existence of at least one converging subsequence.

To simplify the notation, in the sequel, we denote the converging subsequence as $\{\nu_n\otimes \pi_n\}$. We first prove that 
$$
\nu \otimes \hat\pi \left\{(x,a): F(x,a) = \min_{a} F(x,a) \right\}  = 1 
$$
Indeed, let $C_n \coloneqq \bigl\{ (x,a): F_n(x,a) = \min_{a} F_n(x,a) \bigr\}$. Since both $F_n$ and $\min_{a} F_n(\cdot,a)$ are continuous, $C_n$ is closed. Define $C \coloneqq \bigl\{ (x,a): F(x,a) = \min_a F(x,a) \bigr\}$ which is also closed as both $F$ and $\min_a F(\cdot,a)$ are continuous.

One can prove that $\min_a F_n(\cdot,a)$ converges to $\min_a F(\cdot,a)$ continuously\footnote{The sequence of measurable functions $g_n$ is said to converge to $g$ continuously if $\lim_{n\rightarrow\infty}g_n(e_n)=g(e)$ for any $e_n\rightarrow e$ where $e \in E$. If functions are continuous, then this is equivalent to uniform convergence over compact sets.}, as $n\rightarrow\infty$. For each $M\geq1$, define $B^M \coloneqq \bigl\{ (x,a): F(x,a) \geq \min_a F(x,a) + \epsilon(M) \bigr\}$ which is closed, where $\epsilon(M) \rightarrow 0$ as $M\rightarrow \infty$. Since both $F$ and $\min_a F(\cdot,a)$ is continuous, we can choose $\{\epsilon(M)\}_{M\geq1}$ so that $\nu \otimes \hat \pi(\partial B^M) = 0$ for each $M$. Since $C^c = \bigcup_{M=1}^{\infty} B^M$ and $B^M \subset B^{M+1}$, we have by monotone convergence theorem
\begin{align}
\nu_n \otimes \pi_n\big(C^c \cap C_n\big) = \liminf_{M\to\infty} \nu_n \otimes \pi_n \big(B^M \cap C_n \big) \nonumber
\end{align}
Hence, we have
\begin{align}
1 &= \limsup_{n\rightarrow\infty} \liminf_{M\rightarrow\infty} \biggl\{\nu_n \otimes \pi_n\big(C \cap C_n\big) + \nu_n \otimes \pi_n\big(B^M \cap C_n\big)\biggr\} \nonumber\\
&\leq \liminf_{M\rightarrow\infty} \limsup_{n\rightarrow\infty}  \biggl\{\nu_n \otimes \pi_n\big(C \cap C_n\big) + \nu_n \otimes \pi_n\big(B^M \cap C_n\big)\biggr\}. \nonumber
\end{align}
\noindent For fixed $M$, let us evaluate the limit of the second term in the last expression as $n\rightarrow\infty$. First, note that $\nu_n \otimes \pi_n$ converges weakly to $\nu \otimes \hat \pi$ as $n\rightarrow\infty$ when both measures are restricted to $B^M$, as $B^M$ is closed and $\nu \otimes \hat \pi(\partial B^M)=0$ \cite[Theorem 8.2.3]{Bog07}. Furthermore, $1_{C_n \cap B^M}$ converges continuously to $0$: if $(x^{(n)},a^{(n)}) \rightarrow (x,a)$ in $B^M$, then
\begin{align}
\lim_{n\rightarrow\infty} F_n(x^{(n)},a^{(n)}) &= F(x,a)  \,\, \text{(by uniform convergence over compact sets)}\nonumber \\
&\geq \min_a F(x,a)+ \epsilon(M) \nonumber \\
&= \lim_{n\rightarrow\infty} \min_a F_n(x^{(n)},a) + \epsilon(M). \nonumber
\end{align}
Hence, for large enough $n$'s, we have $F_n(x^{(n)},a^{(n)}) > \min_a F(x^{(n)},a)$ which implies that $(x^{(n)},a^{(n)}) \not\in C_n$. Then, by \cite[Theorem 3.5]{Lan81}, for each $M$ we have $\limsup_{n\rightarrow\infty} \nu_n \otimes \pi_n\big(B^M \cap C_n\big) = 0$.
Therefore, we obtain
\begin{align*}
1 \leq  \limsup_{n\rightarrow\infty} \nu_n \otimes \pi_n\big(C \cap C_n\big) \le \limsup_{n\rightarrow\infty} \nu_n \otimes \pi_n(C) \leq \nu \otimes \hat \pi(C), \nonumber
\end{align*}
where the last inequality follows from Portmanteau theorem \cite[Theorem 2.1]{Bil99} and the fact that $C$ is closed. Hence, $\nu \otimes \hat \pi(C)=1$.

Now, let us complete the proof in view of the above result. Note that 
$$
\nu \otimes \hat\pi \left\{(x,a): F(x,a) = \min_{a} F(x,a) \right\}  = 1 
$$
Hence, for $\nu$-a.e., we have 
$$
\hat \pi\left(\left\{(x,a): F(x,a) = \min F(x,a)\right\}\bigg|x\right) = 1
$$
But note that $\argmin_a F(x,a)$ is unique for all $x$, and so, for the $x$-values, where above equality is true, we have
$$
\supp \hat \pi(\cdot|x) = \argmin_a F(x,a)
$$
This implies that $\hat \pi(\cdot|x) = \pi(\cdot|x)$ $\nu$-a.e. Hence, $\nu \otimes \hat \pi = \nu \otimes \pi$. This completes the proof.
\end{proof}

\subsection{Robustness of Value Iteration Algorithm} 

In this section, we show that if we have MFGs with converging transition probabilities under the assumption that value iteration algorithm works for the space of Borel measurable $Q$ functions with $w_{max}$-bounded minima and the nominal model has an unique equilibrium, then mean-field equilibria obtained for the MFGs converge to that of the limiting case.

We will define the space of Borel measurable funtions on \( X \times A \) such that their infimum on $A$ are bounded by $w_{\max}$ as follows:
\[
B_{b,w_{\max}}(X\times A)= \{f \in B(X \times A): \|f_{\min}\|_{w_{\max}} \leq C  \text{ for some constant \(C\) }\}.
\]
We will equip $B_{b,w_{\max}}(X \times A)$ with the $w$-norm.
 
To simplify the notation, we will assume that our mean-field games differ only in their transition probabilities, which we will denote as $p_n$. A similar analysis can be carried out for the case of varying cost functions $c_n$ under the condition that $c_n$ continuously converges to $c$. 

We will denote each game by MFG$_n$ and corresponding mean-field equilibrium operator as 
\begin{align*}
H_{1,p_n}&: B_{b,w_{\max}}(X \times A) \times \mathcal P(X) \rightarrow B_{b,w_{\max}}(X \times A) \times \mathcal P(X) 
\\ & :(Q,\mu) \rightarrow (H^{1}_{1,p_n}(Q,\mu),H^{1}_{2,p_n}(Q,\mu)),
\end{align*}
 where the iterations of the $Q$-functions is handled by the operator
$$H^{1}_{1,p_n}(Q,\mu)(x,a) = c(x,a,\mu) +\beta \int _X Q_{\min}(y)p_n(dy|x,a,\mu)$$ and assuming that a measurable minimizer 
\begin{equation}\label{minchoice} f_{1,p_n}(x,Q,\mu) \in \underset{a \in A}{\mathrm{argmin}} H^{1}_{1,p_n}(Q,\mu)(x,a),
\end{equation}
exists (which is the case under our assumptions), we define the ergodicity operator to handle the iterations on the state-measures
$$H^{1}_{2,p_n}(Q,\mu)(\cdot) = \int_{X} p_n(\cdot|x, f_{1,p_n}(x,Q,\mu),\mu)\mu(dx).$$
We remark that as in Lemma \ref{welldef}, any $Q$-function in $B_{b,w_{\max}}(X \times A)$ is mapped to a $Q$-function in $B_{b,w_{\max}}(X \times A)$ by the operator $H^1_{1,p_n}.$
\begin{theorem}\label{thrm}
Suppose the following conditions hold:
\begin{itemize}
\item[i)] The state space $X$ is locally compact.
\item[ii)] $(p_n)_n$ converges continuously to a transition probability $p$ in the $W_1$-topology; that is, $p_n(\cdot|x^n,a^n,\mu^n) \rightarrow p(\cdot|x,a,\mu)$ with respect to $W_1$ whenever $(x^n,a^n,\mu^n) \rightarrow (x,a,\mu)$ in $X \times A \times \mathcal P(X)$.
\item[iii)] We also have 
\[
\int_X w_{max}(y)p_n(dy|x^n,a^n,\mu^n) \rightarrow \int_X w_{max}(y)p(dy|x,a,\mu)
\]
whenever $(x^n,a^n,\mu^n) \rightarrow (x,a,\mu)$ in $X \times A \times \mathcal P(X).$
\item[iv)] Assumption \ref{assump1} holds for the nominal model.
\item[v)] Iterations of MFE operators $H_{1,p_n}$ for MFG$_n$'s converge to fixed points on $B_{b,w_{max}}(X \times A)\times \mathcal P(X),$ where $\mathcal P(X)$ is equipped with the $W_1$-metric and $B_{b,w_{max}}(X \times A)$ is equipped with the $w$-norm. The same also holds for the nominal model, however we assume that the fixed point is unique for the nominal model.
\end{itemize}
Then, the fixed point of the value iteration algorithm with starting from a common initial datum $(Q_0,\mu_0)$ obtained under transition probability $p_n,$ say $(Q^n_{*},\mu^n_{*}),$ converges to $(Q_*,\mu_*),$ the fixed point obtained under the transition probability $p$ with the same initial datum, in the sense that $Q^n_*(x^n,\cdot)$ converges uniformly to $Q_*(x,\cdot)$ whenever $x_n \rightarrow x$ in $X$, and $\mu^n_*$ converges to $\mu_*$ weakly. In particular, mean-field equilibria $(\mu_*^n,\pi_n^*)$ obtained through the value iteration algorithm under the transition probabilities $p_n$ converges to the one $(\mu_*,\pi_*)$ obtained under the transition probability $p$ under the weak topology on $\mathcal P(X \times A)$; that is, $\mu_*^n \otimes \pi_*^n \rightarrow \mu_*\otimes\pi_*$ weakly.
\end{theorem}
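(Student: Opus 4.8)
The plan is to combine a precompactness argument with the uniqueness hypothesis of condition (v). Write $(Q^n_*,\mu^n_*)$ for the fixed point of the value iteration under $p_n$ and $(Q_*,\mu_*)$ for the nominal one. I would show that $\{(Q^n_*,\mu^n_*)\}_n$ is precompact in the topology of continuous convergence for the $Q$-component and weak convergence for the measure component, that every accumulation point $(\hat Q,\hat\mu)$ solves the nominal fixed-point relations $\hat Q=H^1_{1,p}(\hat Q,\hat\mu)$ and $\hat\mu=H^1_{2,p}(\hat Q,\hat\mu)$, and then invoke uniqueness to identify $(\hat Q,\hat\mu)=(Q_*,\mu_*)$. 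Since every convergent subsequence then shares this limit, precompactness upgrades the statement to convergence of the full sequence, and the concluding claim about $\mu^n_*\otimes\pi^n_*$ will drop out of Lemma \ref{keylemma}.

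The analytic core is a continuity property of the operators under perturbation of the data: whenever $Q^n\to Q$ continuously and $\mu^n\to\mu$ weakly, I claim $H^1_{1,p_n}(Q^n,\mu^n)\to H^1_{1,p}(Q,\mu)$ continuously and $H^1_{2,p_n}(Q^n,\mu^n)\to H^1_{2,p}(Q,\mu)$ weakly. For the first, at fixed $(x,a)$ the cost term converges by continuity of $c$ (with $\mathcal P(X)$ under the weak topology), while the integral term $\int_X (Q^n)_{\min}\,p_n(dy|x,a,\mu^n)$ converges by a generalized dominated convergence theorem: $(Q^n)_{\min}\to Q_{\min}$ continuously (immediate from the uniform-in-$a$ convergence of $Q^n$), the kernels converge weakly by (ii), and (iii) supplies the uniform integrability of the weight $w_{\max}$. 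For the second, the minimizing policies $\pi^n(\cdot|x)=\delta_{f_{1,p_n}(x,Q^n,\mu^n)}(\cdot)$ are exactly the objects governed by Lemma \ref{keylemma} with $F_n=H^1_{1,p_n}(Q^n,\mu^n)$ and $\nu_n=\mu^n$, which yields $\mu^n\otimes\pi^n\to\mu\otimes\pi$ weakly; composing with the continuously convergent kernels (again using (iii)) gives the weak convergence of the measure update.

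To turn this engine into convergence of the fixed points, I would first establish precompactness. Reading off the identity $Q^n_*=H^1_{1,p_n}(Q^n_*,\mu^n_*)$ together with the bounds $c\le Mw$ and the moment control supplies the uniform estimate $\|Q^n_*\|_w\le M/(1-\beta\alpha)$, and, transferring the nominal regularity of Assumption \ref{assump1} to the $p_n$ on compact sets through continuous convergence, equicontinuity of $Q^n_*(\cdot,a)$ in $x$ uniformly in $a$ (solving the resulting self-bounding inequality gives a Lipschitz constant of order $L_2/(1-\beta K_2)$). Compactness of $A$, local compactness of $X$, Arzel\`a--Ascoli and a diagonal argument then produce a subsequence with $Q^n_*\to\hat Q$ continuously; tightness of $\{\mu^n_*\}$, extracted from the invariance relation and a $w_{\max}$-Lyapunov estimate whose sublevel sets are compact, gives $\mu^n_*\to\hat\mu$ weakly along a further subsequence. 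Applying the engine of the previous paragraph to $Q^n_*=H^1_{1,p_n}(Q^n_*,\mu^n_*)$ and $\mu^n_*=H^1_{2,p_n}(Q^n_*,\mu^n_*)$ shows that $(\hat Q,\hat\mu)$ is a fixed point of $H_{1,p}$, so uniqueness forces $(\hat Q,\hat\mu)=(Q_*,\mu_*)$.

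I expect the main obstacle to be the passage to the limit in the two nonlinear relations, where the integrand $(Q^n_*)_{\min}$ and the integrating measure $p_n(\cdot|\cdots,\mu^n_*)$ move simultaneously; the resolution is to package continuous convergence of the kernels, the moment bound of condition (iii), and the argmin/policy convergence of Lemma \ref{keylemma} into a single generalized dominated convergence statement. A secondary difficulty, stemming from $X$ being only locally compact, is the compactness input itself: both the equicontinuity of $\{Q^n_*\}$ and the tightness of $\{\mu^n_*\}$ must be wrung out of the fixed-point equations and the weight $w_{\max}$, since the perturbed models are assumed only to converge rather than to satisfy Assumption \ref{assump1} uniformly. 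Once $(Q^n_*,\mu^n_*)\to(Q_*,\mu_*)$ is in hand, the final statement $\mu^n_*\otimes\pi^n_*\to\mu_*\otimes\pi_*$ is precisely the second part of Lemma \ref{keylemma} applied with $F_n=Q^n_*\to Q_*$ and $\nu_n=\mu^n_*\to\mu_*$.
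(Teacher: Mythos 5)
There is a genuine gap: your precompactness step is not available under the hypotheses of this theorem. You propose to extract equicontinuity of $\{Q^n_*\}$ by ``transferring the nominal regularity of Assumption \ref{assump1} to the $p_n$ on compact sets through continuous convergence,'' but continuous convergence of $p_n$ to $p$ gives no Lipschitz or equicontinuity control on the perturbed kernels, and the theorem deliberately does \emph{not} assume Assumption \ref{assump1} for the MFG$_n$'s --- their fixed points live in $B_{b,w_{\max}}(X\times A)$, a space of merely Borel measurable functions, so they need not be continuous at all and Arzel\`a--Ascoli cannot be applied. This is not a technicality: in the paper's main application (the extended quantized games of Section~\ref{e-mfgs}) the iterates are piecewise constant in $x$ and explicitly fail the Lipschitz conditions, which is precisely why the theorem is stated with only convergence hypotheses on the perturbed models. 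Likewise, your tightness argument for $\{\mu^n_*\}$ rests on $w_{\max}$ having compact sublevel sets, but the moment-function assumption on $w_{\max}$ is introduced only in Theorem~\ref{thrm2}, not here; condition~(iii) gives convergence of the $w_{\max}$-integrals, not tightness. In short, you have reconstructed the compactness-based strategy of the paper's \emph{general} robustness result (Theorem~\ref{thrm2}), where these extra assumptions are imposed exactly to make your argument go through, rather than a proof of Theorem~\ref{thrm}.

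The paper closes the gap differently, and this is the idea your proposal is missing: it exploits the common initial datum and the full strength of condition~(v). One first shows (your ``engine,'' which is essentially the paper's Step~1 and is correct) that the operators converge continuously under perturbation, then proves by induction on $k$ that the $k$-th iterates $(Q^n_k,\mu^n_k)$ converge to $(Q_k,\mu_k)$ as $n\to\infty$ for each fixed $k$, and finally uses the three-term bound
\begin{align*}
\sup_a |Q^n_*(x^n,a)-Q_*(x,a)| &\leq \sup_a |Q^n_*(x^n,a)-F(x^n,Q^n_k,\mu^n_k,a,p_n)| \\
&\quad + \sup_a |F(x^n,Q^n_k,\mu^n_k,a,p_n)-F(x,Q_k,\mu_k,a,p)| + \sup_a |F(x,Q_k,\mu_k,a,p)-Q_*(x,a)|,
\end{align*}
where the first and last terms are controlled by the assumed convergence of the value iteration to its fixed point (condition~(v)) and the middle term by the inductive step. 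No compactness of the family of fixed points is ever needed, and no accumulation-point extraction is performed. Your final appeal to Lemma~\ref{keylemma} for $\mu^n_*\otimes\pi^n_*\to\mu_*\otimes\pi_*$ is correct and matches the paper, but it only becomes available once the fixed-point convergence has been established by the iteration argument above.
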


\begin{proof}
\proofpart{1}{Uniform convergence of approximated family over actions:} 
For each $n,$ we define
\[
F(x,Q,\mu,a,p_n)=c(x,a,\mu)+\beta \int_X Q_{\min}(y)p_n(dy|x,a,\mu).
\]
We also define
\[
F(x,Q,\mu,a,p)=c(x,a,\mu)+\beta \int_X Q_{\min}(y)p(dy|x,a,\mu)
\]
for the nominal model. Assume that we have $(Q^n,\mu ^n) \rightarrow (Q,\mu)$; that is, $\mu^n$ converges to $\mu$ weakly in $\mathcal P(X)$ and $Q^n_{\min}$ converges to $Q_{\min}$ continuously.
We further suppose that $\|Q^n_{\min}\|_{w_{\max}}\leq C_n$ and $\|Q_{\min}\|_{w_{\max}}\leq C_{\infty}$. Our aim is to show that 
\[ F(\cdot,Q^n,\mu^n,\cdot,p_n) \rightarrow F(\cdot,Q,\mu,\cdot,p) \] continuously which implies that  
\[
\min_a F(x^n,Q^n,\mu^n,a,p_n) \rightarrow \min_a F(x,Q,\mu,a,p),
\]
whenever $x_n \rightarrow x$ in $X$. Let $K \subset X$ be compact, then triangle inequality gives us the following
\begin{align*}
& \sup_{x,a \in K \times A} |F(x,Q^n,\mu^n,a,p_n) - F(x,Q,\mu,a,p)|
\\&=\sup_{x,a \in K \times A}  \bigg |c(x,a,\mu^n) + \beta \int_X Q^n _{\min}(y)p_ n(dy|x ,a ^n,\mu ^n) -  c(x,a,\mu) 
\\& \qquad \qquad \quad- \beta \int_X Q_{\min}(y)p(dy|x,a,\mu) \bigg|
\\&\leq \sup_{x,a \in K \times A} \bigg | c(x,a,\mu ^n)-c(x,a,\mu) \bigg|
\\&\quad +\beta \sup_{x,a \in K \times A} \bigg | \int_XQ^n_{\min}(y)p _n(dy|x,a,\mu^n)- \int_X Q_{\min}(y)p(dy|x,a,\mu) \bigg |.
\end{align*}
To see the vanishing of the first term on the last line, observe that the cost function $c(\cdot,\cdot,\mu^n)$ converges to $c(\cdot,\cdot,\mu)$ continuously as $n \rightarrow \infty$ by Assumption \ref{assump1}-(a). Therefore, it is sufficient to analyse the vanishing of the second term on the last line.
We will use the following triangle inequality on the second term as follows

\begin{align*}
&\sup_{x,a \in K \times A} \bigg | \int_XQ^n_{\min}(y)p _n(dy|x,a,\mu^n)- \int_X Q_{\min}(y)p(dy|x,a,\mu) \bigg |
\\&=\sup_{x,a \in K \times A} \bigg | \int_X Q^n_{\min}(y)-Q_{\min}(y)p _n(dy|x,a,\mu^n) 
\\&\qquad \qquad \quad + \int_X Q_{\min}(y)(p _n(dy|x,a,\mu^n)-p(dy|x,a,\mu)) \bigg | 
\\&\leq \sup_{x,a \in K \times A} \bigg | \int_X (Q^n_{\min}(y)-Q_{\min}(y))p _n(dy|x,a,\mu^n) \bigg | 
\\&\quad+\sup_{x,a \in K \times A} \bigg | \int_X Q_{\min}(y)(p_n(dy|x,a,\mu^n)-p(dy|x,a,\mu))\bigg|.
\end{align*}
By assumption $Q^n_{\min}$ converges to $Q_{\min}$ continuously, $\|Q^n_{\min}\|_{w_{\max}}\leq C_n$, $\|Q_{\min}\|_{w_{\max}}\leq C_{\infty}$, and 
\begin{align*}
&\int_X w_{\max}(y)p_n(dy|x^n,a^n,\mu^n) \rightarrow \int_X w_{\max}(y)p(dy|x,a,\mu)
\end{align*}
as $n \rightarrow \infty$ whenever $(x^n,a^n)\rightarrow (x,a)$ in $X \times A$.  Since $Q^{n}_{\min}$ converges to $Q_{\min}$, we have $\sup_{n \in \mathbb N \cup \{\infty\}} C_n < \infty$, and so, as a consequence of \cite[Theorem 3.3]{MR705462},  we have 
\begin{align*}
\lim_n \int_X Q^n_{\min}(y)p_n(dy|\cdot,\cdot,\mu^n) = \int_X Q_{\min}(y)p(dy|\cdot,\cdot,\mu)
= \lim_n \int_X Q_{\min}(y)p_n(dy|\cdot,\cdot,\mu^n)
\end{align*}
where the last limit follows from an identical reasoning and the convergences are continuously in $(x,a)$ (and so, uniformly over compact sets). Thus the terms in the inequality above vanish. Therefore, $F(\cdot,Q^n,\mu^n,\cdot,p_n)$ converges to continuously in $(x,a)$ to $F(\cdot,Q,\mu^n,\cdot,p)$.

We conclude this step with the remark that uniform convergence over $a$ implies that $\min_a F(\cdot,Q^n,\mu^n,a,p_n)$ converges continuously to $\min_a F(\cdot,Q,\mu,a,p)$. Indeed, we have the inequality

\begin{align*}
\sup_{x \in K} |\min_aF(x,Q^n,\mu^n,a,p_n)- \min_a F(x,Q,\mu,a,p)|
\leq \sup_{x \in K}\sup_a |F(x,Q^n,\mu^n,a,p_n)- F(x,Q,\mu,a,p)|,
\end{align*}
and so, from the argument above, continuous convergence follows. 
Using \( \int_X Q^n_{\min}(y)p_n(dy|x,a,\mu)\leq C_n w_{max}(x), \) and Assumption~\ref{assump1}-(a), we obtain
\[
\min_a F(x^n,Q^n,\mu^n,a,p_n) \leq \tilde C_n w_{\max}(x^n).
\]
A similar type of bound can be found on $F(x,Q,\mu,a,p)$ for any $Q$ with $\|Q_{\min}\|_{w_{\max}}\leq C_{\infty}$, which means that $w_{\max}$ will bound our iterations. Hence, we can repeat this sort of convergence over $\tilde Q^n(x,a) = F(x,Q^n,\mu^n,a,p_n)$ and $\tilde Q(x,a) = F(x,Q,\mu,a,p)$ on the next iteration.

\proofpart{2}{Convergence of iterations to each other:}
For the fixed common initial datum $(Q_0,\mu_0),$ we have the same $Q_{0,\min}$ for each $n$ by definition, and  so, it is sufficient to prove that $\|Q_{0,\min}\|_{w_{\max}} \leq C $ by some constant $C.$ Since $1 \leq w,$ from the inequality
\begin{align*}
 |Q_{0,\min}(x)| \leq w_{\max}(x) \frac{ |Q_{0,\min}(x)| } { w_{\max}(x) } \leq w_{\max}(x) \| Q_{0,\min} \|_{w_{\max}} < \infty
\end{align*} we obtain the bound $C = \|Q_{0,\min} \|_{w_{\max}}.$
This means Step 1 is effective over our iterations starting from a common initial datum.

From the Step 1, we have that $Q^n_1(x^n,a) = F(x^n,Q_0,\mu_0,a,p^n)$ converges to $Q_1(x,a) = F(x,Q_0,\mu_0,a,p)$ uniformly over $a$ and $Q^n_{1,\min}(x^n)\rightarrow Q_{1,\min}(x)$ whenever $x^n\rightarrow x$ in $X.$ Note that by Assumption \ref{assump1}, minimum of $Q_1(x,a)$ is unique in $a$. Then by Lemma ~\ref{keylemma}, any $f^n(x^n,Q_0,\mu_0) \in \underset{a \in A}{\mathrm{argmin}} Q^n_1(x^n,a)$ converges to $f(x,Q_0,\mu_0)=\underset{a \in A}{\mathrm{argmin}} Q_1(x,a)$ as $n \rightarrow \infty$ in $\mathbb R^m$  for any $x^n \rightarrow x$ in $X.$

Recall that iteration of $\mu_0$ with respect to the transition probability $p_n$ in value iteration algorithm is of the form
$\mu^n_1(\cdot)= \int_{X} p_n(\cdot|x,f^n(x,Q_0,\mu_0),\mu_0)\mu_0(dx).$ We shall show that $\mu^n_1$ converges weakly to 
$$\mu_1=\int_{X} p(\cdot|x,f(x,Q_0,\mu_0),\mu_0)\mu_0(dx).$$
To this end, pick arbitrary real valued bounded continuous function $g$ over $X.$ Then, from Jensen's inequality, we get

\begin{align*}
& \bigg | \int_{X} \int_{X} g(y)p_n(dy|x,f^n(x,Q_0,\mu_0),\mu_0)\mu_0(dx) 
- \int_{X } \int_{X} g(y)p(dy|x,f(x,Q_0,\mu_0),\mu_0)\mu_0(dx) \bigg | \\
& \leq  \int_{X} \bigg | \int_X g(y) [p_n(dy | x, f^n(x,Q_0,\mu_0),\mu_0)-p(dy | x, f(x,Q_0,\mu_0)] \bigg | \mu_0(dx).
\end{align*}
Since $p_n$ converges to $p$ weakly and continuously, we obtain that $\mu ^n_1$ converges to $\mu_1$ weakly \cite[Theorem 3.3]{MR705462}. One can now proceed via induction by repeating the same argument mutatis mutandis since the convergence results we use will be sufficient. Suppose that up until $k$-th iteration, $(Q^n_k,\mu^n_k)$ converges to $(Q_k,\mu_k)$ as in the statement. From Step 1 and Lemma \ref{keylemma}, we have $f^n(x^n,Q^n_k,\mu^n_k) \rightarrow f(x,Q_k,\mu_k)$ whenever $x^n$ converges to $x$ in $X.$ Sufficient to show that $\mu^n_{k+1} \rightarrow \mu_{k+1}$ weakly as Step 1 takes care of convergence of the iteration of $Q^n_{k+1}$ and $Q_{k+1}$. For an arbitrary real valued bounded continuous function $g$, we have
\begin{align*}
&\bigg | \int_{X} \int_X g(y) p_n(dy|x,f^n(x,Q^n_k,\mu^n_k),\mu^n_k)\mu^n_k(dx) 
 - \int_{X} \int_X g(y)p(dy|x,f(x,Q_k,\mu_k),\mu_k)\mu_k(dx) \bigg |
\\& \leq \bigg | \int_{X} \int _X g(y)p_n(dy|x,f^n(x,Q^n_k,\mu^n_k),\mu^n_k)[\mu^n_k(dx)-\mu_k(dx)]\bigg|
\\&\quad +\int_{X}\bigg | \int _X g(y) \bigg[p_n(dy|x,f^n(x,Q^n_k,\mu^n_k),\mu^n_k)-p(dy|x,f(x,Q_k,\mu_k),\mu_k)\bigg]\bigg | \mu_k(dx).
\end{align*}
Note that as in the base case of the  induction, the following convergence holds
\[
\int_X g(y)p_n(dy|x^n,f^n(x^n,Q^n_k,\mu^n_k),\mu^n_k) \rightarrow \int_X g(y)p(dy|x,f(x,Q_k,\mu_k),\mu_k).
\]
This implies that
\begin{align*}
\lim_n \int_{X} \int_X g(y)p_n(dy|x,f^n(x,Q^n_k,\mu^n_k),\mu^n_k)\mu^n_k(dx) 
&= \lim_n \int_{X} \int_X g(y)p_n(dy|x,f^n(x,Q^n_k,\mu^n_k),\mu^n_k)\mu_k(dx)
\\&= \int_{X} \int_X g(y)p(dy|x,f(x,Q_k,\mu_k),\mu_k)\mu_k(dx)
\end{align*}
once again by \cite[Theorem 3.3]{MR705462}. Hence, we can conclude that the terms in the above inequality vanish. Thus, $\mu^n_{k+1}$ converges to $\mu_{k+1}$ weakly and we are done with the induction.

\proofpart{3}{Convergence of the mean-field equilibria to the limiting case}

Based on the assumptions we have made, the value iteration algorithm yields a mean-field equilibrium under any transition probability $p_n$ and $p$ as a consequence of assumption (v) in the theorem statement.

Let 
\(
(Q_{*},\mu_{*})
\) 
be the such MFE obtained under 
\(
p
\)
and 
\(
(Q ^n_*, \mu ^n _*)
\) 
be the MFE obtained under 
\(
p_n
\) 
through value iteration algorithm. Similarly, let 
\(
(Q_{k},\mu_{k})
\) 
and 
\(
(Q^n_{k},\mu^n_{k})
\)
be the \(k \)-th iteration of the value iteration algorithm starting from \((Q_0,\mu_0)\) under the transition probability $p$ and $p_n$, respectively. We shall show that $Q^n_*$ converges $Q_*$ continuously over $X \times A.$ Indeed, we have
\begin{align*}
 \sup_a |Q^n_*(x^n,a)-Q_*(x,a)| &\leq \sup_a |Q^n_*(x^n,a)-F(x^n,Q^n_k,\mu^n_k,a,p_n)|
 \\&\qquad + \sup_a | F(x^n,Q^n_k,\mu^n_k,a,p_n)-F(x,Q_k,\mu_k,a,p)|  
 \\&\qquad + \sup_a | F(x,Q_k,\mu_k,a,p) - Q_*(x,a)|.
\end{align*}
Since $(Q^n_*)_n$ and $Q_*$ are fixed points obtained under $w$-norm and $w$ is a continuous strictly positive function bounded away from $0$, for any family of continuous functions \(f^n:X \times A \rightarrow \mathbb R \) and fixed \(x \in X \)
\[ 
\frac{\sup_a |f^n(x,a)|}{\sup_a w(x,a)} \leq \sup_a\frac{|f^n(x,a)|}{w(x,a)} \leq \|f^n \|_{w}
\]
so if \(f^n \)'s converge to \(0\) in \(w \) norm, they also converge to \(0\) continuously. Combined with convergence of $\mu^n_k$ to $\mu^n_*$ and $\mu_k$ to $\mu_*$ in $W_1$ metric, which is stronger than that of weak convergence, implies that the first and last terms vanish as $n \rightarrow \infty$.
  As a direct consequence of Step 2, the second term also vanishes. Thus, from Lemma \ref{keylemma} as the minimum of $Q_*(x,a)$ is unique in $a$, we obtain that $f^n(x^n,Q^n_*,\mu^n_*) \rightarrow f(x,Q_*,\mu_*)$ whenever $x^n \rightarrow x$ in $X$ and consequently 
$
\pi^n_*(\cdot|x^n)=\delta_{f^n(x^n,Q^n_*,\mu^n_*)}(\cdot)
$
converges weakly (and so continuously) to 
\(
\pi_*(\cdot|x) = \delta_{f(x,Q_*,\mu_*)}(\cdot).
\)
Repeating the proof of Step 2, we see that $\mu ^n_*$ converges weakly to $\mu _*$ meaning that the mean field equilibrium $(\pi^n_*,\mu^n_*)$ obtained under $p^n$ through the value iteration algorithm converges to the one obtained under $p$, $(\pi _*,\mu _*).$ 
\end{proof}


\begin{remark}
For the robustness of the value iteration algorithm, it suffices to have Assumption \ref{assump1} for the nominal model in addition to assumption (v) in Theorem~\ref{thrm}. However, to ensure the convergence of the value iteration algorithm for the MFG$_n$'s, they generally need to satisfy Assumption \ref{assump1} and the condition in Theorem~\ref{thrm1}. This is why we established the convergence of the value iteration algorithm in detail in the previous section.

In the next section, we will address the finite model approximation problem. While the quantized model with a finite state space satisfies Assumption \ref{assump1} and the condition in Theorem~\ref{thrm1} (resulting in value iteration convergence), the extended version of the quantized model, which shares the same state space as the nominal model, does not satisfy these assumptions.

However, in this case, the extended game can inherit the conditions required for assumption (v) in Theorem~\ref{thrm} from the quantized model. 
\end{remark}

By applying Lemma \ref{keylemma} to the iterations on the approximating MFGs, we can deduce that the Dirac measure induced by the minimizer of the iterations also converges to the optimal policy in the equilibrium of the nominal model. This result is a consequence of the proof of the theorem mentioned earlier. 
 
\begin{theorem}\label{thrm:a}
Suppose the nominal model and each MFG$_n$ satisfies the conditions in Theorem \ref{thrm}. Further assume that $Q_*^n(x,a)$ has a unique minimizer in $a$ for each $x$, where $(Q_n^*,\mu_*^n)$ is the fixed point of $H_{1,p_n}$ in the previous theorem. Let $(\pi_*,\mu_*)$ be the mean-field equilibrium of the nominal model. If $(Q^{m,n},\mu^{m,n})$ is the $m$'th iteration of the value iteration algorithm for MFG$_n$, then for $\pi^{m,n}$ satisfying $ \supp \pi^{m,n}(\cdot|x) \subset \argmin_a Q^{m,n}(x,a)$, we have 
\[
W_1(\pi^{m,n}(\cdot|x_n),\pi _*(\cdot|x))< \epsilon
\]
and
\[
W_1(\mu^{m,n},\mu _*)< \epsilon
\]
for sufficiently large $m$ and $n$, where $(x_n)_n \subset X$ such that $x_n \rightarrow x$. In other words, $(\pi^{m,n},\mu^{m,n})$ is an $\epsilon$-near mean-field equilibrium for the nominal model.
\end{theorem}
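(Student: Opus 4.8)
The plan is to derive both estimates from a single two-index (diagonal) scheme that separates the iteration index $m$ from the model index $n$, reducing everything to two convergences that are already available: for fixed $n$ the value iteration for MFG$_n$ converges as $m\to\infty$ to its fixed point $(Q^n_*,\mu^n_*)$ (assumption (v) of Theorem~\ref{thrm}, with $\mathcal P(X)$ carrying the $W_1$-metric and $B_{b,w_{\max}}(X\times A)$ the $w$-norm), and as $n\to\infty$ these fixed points converge to the nominal equilibrium $(Q_*,\mu_*)$ by Theorem~\ref{thrm}. For the state-measure I would apply the triangle inequality
\[
W_1(\mu^{m,n},\mu_*) \leq W_1(\mu^{m,n},\mu^n_*) + W_1(\mu^n_*,\mu_*),
\]
bound the second term for large $n$ by Theorem~\ref{thrm}, and, with $n$ then held fixed, bound the first term for large $m$ by assumption (v). Concretely, given $\epsilon$ I would first choose $N$ so that the second term is below $\epsilon/2$ for $n\geq N$, and for each such $n$ choose $M(n)$ so that the first term is below $\epsilon/2$ for $m\geq M(n)$, yielding $W_1(\mu^{m,n},\mu_*)<\epsilon$.

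For the policy I would first promote the available $w$-norm control into uniform convergence over the compact action space along the given sequence $x_n\to x$. Setting $a^\ast=f(x,Q_*,\mu_*)=\argmin_a Q_*(x,a)$ and splitting
\[
\sup_a |Q^{m,n}(x_n,a)-Q_*(x,a)| \leq w_{\max}(x_n)\,\|Q^{m,n}-Q^n_*\|_w + \sup_a|Q^n_*(x_n,a)-Q_*(x,a)|,
\]
I would handle the second summand directly by Theorem~\ref{thrm}, which states precisely that $Q^n_*(x_n,\cdot)\to Q_*(x,\cdot)$ uniformly in $a$ when $x_n\to x$, and the first summand for each fixed $n$ by the $w$-norm convergence $\|Q^{m,n}-Q^n_*\|_w\to0$, noting that $w_{\max}(x_n)$ is then a fixed finite constant by continuity of $w_{\max}$. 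The same diagonal choice of $(N,M(n))$ as above gives $\sup_a|Q^{m,n}(x_n,a)-Q_*(x,a)|\to0$.

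With this uniform-over-$A$ convergence secured, I would invoke the mechanism of Lemma~\ref{keylemma}. Because $Q_*(x,\cdot)$ is strongly convex, hence has the single minimizer $a^\ast$ (Assumption~\ref{assump1}-(b)), uniform convergence forces the sets $\argmin_a Q^{m,n}(x_n,a)$ to concentrate at $a^\ast$: any accumulation point of minimizers of $Q^{m,n}(x_n,\cdot)$ would also minimize $Q_*(x,\cdot)$ and so equal $a^\ast$. Consequently $\supp\pi^{m,n}(\cdot|x_n)$ eventually lies in any prescribed ball $B(a^\ast,\eta)$, and since $\pi_*(\cdot|x)=\delta_{a^\ast}$ we get $W_1(\pi^{m,n}(\cdot|x_n),\pi_*(\cdot|x))=\int_A\|a-a^\ast\|\,\pi^{m,n}(da|x_n)\leq\eta$. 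Taking $\eta=\epsilon$ and combining with the measure bound delivers the $\epsilon$-near mean-field equilibrium.

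The hard part will be the joint passage in $m$ and $n$ rather than any single estimate: the speed of value-iteration convergence for MFG$_n$ is not uniform in $n$, so $M$ must be selected after $N$, and one must keep the auxiliary factor $w_{\max}(x_n)$ under control --- this is where local compactness of $X$ and continuity of $w_{\max}$ are used, since $\{x_n\}\cup\{x\}$ is then relatively compact and $w_{\max}$ is bounded on it. A secondary difficulty is that the selections $\pi^{m,n}$ need not be Dirac (the iterates $Q^{m,n}$ are not assumed to have unique minimizers), so I would not argue via convergence of single minimizers but rather through the concentration of the full $\argmin$ sets at the unique minimizer of the limit $Q_*(x,\cdot)$; this upper-semicontinuity is exactly what the proof of Lemma~\ref{keylemma} supplies, so I would cite it instead of reproving it.
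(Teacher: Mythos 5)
Your proposal is correct and follows essentially the same route as the paper: a triangle-inequality/diagonal argument splitting the $m$-limit (value-iteration convergence for fixed $n$, assumption (v)) from the $n$-limit (Theorem \ref{thrm}), with Lemma \ref{keylemma} converting uniform convergence of the $Q$-functions over $A$ into concentration of the $\argmin$ sets and hence $W_1$-convergence of the policies. The only cosmetic difference is that you apply the triangle inequality at the level of $Q$-functions and invoke the key lemma once against the limit $Q_*$, whereas the paper splits through $\pi^n_*$ and applies the lemma at fixed $n$ (which is where its extra hypothesis that $Q^n_*(x,\cdot)$ has a unique minimizer is used); both are sound.
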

\begin{proof}
Let $(\mu^{n}_{*},\pi^{n}_{*})$ be a mean-field equilibrium obtained through value iteration algorithm for MFG$_n$ with the fixed point $(Q^n_*,\mu^n_*).$ It follows from Theorem \ref{thrm} that
\[ \lim_n W_1(\pi^n_*(\cdot|x_n),\pi_*(\cdot|x)) = 0 \]
and
\[ \lim_n W_1(\mu^n_*,\mu_*) = 0.\]
As we have shown in Theorem \ref{thrm}, since the iterations of the $Q$-functions converge to the fixed point in $w$-norm, under the uniform norm, we have $\lim_{m} \| Q^{m,n}(x,a)-Q^n(x,a) \|_{\infty} = 0$, and so, as a consequence of Lemma \ref{keylemma}, we have
\[ \lim_m W_1(\pi^{m,n}(\cdot|x_n),\pi^n_*(\cdot|x)) =0. \]
Similarly, as a consequence of the value iteration algorithm, $\mu^{m,n}$ converges to $\mu^n_*$ under the $W_1$ topology, and so, we also have
\[ \lim_m W_1(\mu^{m,n},\mu^n_*) = 0. \]
The result then follows from the triangle inequality.
\end{proof}

In particular, when each MFG$_n$ is equivalent to the nominal model, the theorem above shows that value iteration algorithm allows us to find $\epsilon$-near mean-field equilibrium for the nominal model. 

\section{An Application: Quantization of Mean-field Games in Compact State Spaces}

In this section, we apply the robustness findings established in the previous section to the quantization of mean-field games. It is well-known that when the state and action spaces are uncountable, the need arises to discretize these spaces. This discretization is essential for approximating the mean-field equilibrium, as directly computing the mean-field equilibrium using the mean-field equilibrium operator $H_1$ becomes infeasible in such cases. In this section, we treat the quantized model as a perturbed variant and utilize our findings to demonstrate that with a sufficiently fine quantization, the mean-field equilibrium of the quantized model can be used to closely approximates the mean-field equilibrium of the nominal model.

\subsection{Construction of Quantized Mean-field Games} \label{q-mfgs}

In this subsection, we assume that our state space $X$ is compact. Since $X$ is compact, we can find a sequence $(\{x_{n,i}\}_{i=1}^{k_n})_{n\geq 1}$ of finite grids in $X$ such that for all $n$ and $x \in X,$
\[\min_{i \in \{1,2, \cdots,k_n\}} d_X(x,x_{n,i})<\frac 1n .\]
Let $X_n=\{x_{n,i}\}_{i=1}^{k_n} \subset X$ and define $T_n:X \rightarrow X_n$ via the relation
\[T_n(x)=\underset{x_{n,i}\in X}{\mathrm{argmin}} d_X(x,x_{n,i}),
\]
where ties are broken so that $T_n$ is measurable. The map $T_n$ is called a quantization map (or a state aggregation map) and each element of $X_n$ is called a representative (for the quantization). We use pullbacks of $T_n$ to partition $X$ as $\{S_{n,i}\}_{i=1}^{k_n}$, where each $S_{n,i}$ is given by
$$ S_{n,i} = \{x \in X : T_n(x)=x_{n,i} \} .$$
Since $T_n$ is measurable, each $S_{n,i}$ is a measurable subset of $X$ and has the diameter $\sup_{x,y \in S_{n,i}} d_X(x,y) < \frac 2n .$ 

Since $X_n=\{x_{n,i}\}_{i=1}^{k_n} \subset X$ is a discrete space, we will equip it with the discrete metric $d_{X_n}:$
\[ 
d_{X_n}(x_{n,i},x_{n,j})=1_{i \not = j}.
\]
Since $X_n$ is finite, the metric space $(X_n,d_{X_n})$ is also a Polish space. Furthermore as $d_{X_n}$ is bounded above by $1$, to be able to obtain similar type of Lipschitz coefficients for the quantized game, we will scale the original metric $d_X.$ For the original metric $d_X$ on $X,$ by scaling it with a constant $1/M,$ where $M=\max_{x,y \in X} d(x,y),$ i.e. considering $1/M d_X$ on $X,$ if one repeats everything as in Lemma \ref{lemma:b}, it can be seen that the conditions for the Theorem \ref{thrm1} is unchanged for the original problem. This means that such scaling on the metric of $X$ will not affect the results obtained for the nominal model due to the existing Lipschitz coefficients.  In case we do not assume that $d_X \leq 1,$ an alternative metric that can be used on the space $X_n$ is $d_{X_n}(x_{n,i},x_{n,j})=M1_{i \not = j},$ where $M$ is the bound on $d_X.$ However, since the scaling on the metric $d_X$ does not require any further assumption regarding the tools we want to utilize, atleast on the nominal model, to simplify the presentation we will work under the convention that $d_X \leq 1.$ Thereby, without loss of generality, by abusing the notation we will assume that the metric $d_X$ is bounded above by $1$, and so, the metric $d_{X_n}$ dominates the metric $d_X$ when $d_X$ is restricted to $X_n \times X_n.$

To obtain the associated Lipschitz coefficients for the quantized game, we require a dominating metric on the quantized metric space (see Lemma \ref{lemma:c} below). The main challenge arises from the fact that when considering any metric on the discretized space that is a multiple of the one in the continuous state space, the controllability of Lipschitz coefficients for variations in the state space is no longer guaranteed. In fact, in such cases, it is possible to choose a metric where the Lipschitz coefficient can become unmanageable (see Remark \ref{rem:a} below), making it impossible to satisfy the assumptions in Theorem \ref{thrm1} for the quantized game, even if the nominal model satisfies them.

Let $\{ \nu _n\}_n$ be a family of probability measures on $X$ with the property that
$\nu_n(S_{n,i}) > 0$
for each $i$ and $n.$ We create further probability measures out of this family as follows: for each $S_{n,i}$ we restrict $\nu_{n}$ to the Borel $\sigma$-algebra restricted on $S_{n,i}$ and define
$\nu_{n,i}(\cdot) = \frac{\nu_n(\cdot)}{\nu_n(S_{n,i})} .$
For each $n,$ using the family $(\nu_{n,i})_{i=1}^{k_n}$, we define  one-stage cost function and transition probability on $X_n \times A \times \mathcal P(X_n)$ by
\begin{align*}
&c_n(x_{n,i},a,\mu^{\mathcal D}) = \int_{S_{n,i}} c(z,a,\mu^{\mathcal C}) \nu_{n,i}(dz) 
\\ & p_n(\cdot|x_{n,i},a,\mu^{\mathcal D}) = \int_{S_{n,i}} T_n \star p(\cdot|z,a,\mu^{\mathcal C}) \nu_{n,i}(dz),
\end{align*}
where $\mu^{\mathcal C}:\mathcal B(X) \rightarrow [0,1]$ is an atomic probability measure on $X$ that is completely characterized by $\mu^{\mathcal D}$ via the relation
\begin{equation}\label{disctocontprob}
 \mu^{\mathcal C}(\bullet) = \sum_{i=1}^{k_n} \mu^{\mathcal D}(x_{n,i})\delta_{x_{n,i}}(\bullet), 
\end{equation}
and $T_n\star p(\cdot|z,a,\mu^{\mathcal C})$ is the pushforward of the measure $p$ with respect to $T_n.$ For a given probability measure $\mu^{\mathcal C}:\mathcal B(X) \rightarrow [0,1],$ similarly we can define a discrete probability measure on $X_n,$ $\mu^{\mathcal D}:\mathcal B(X_n)\rightarrow [0,1],$ by 
\begin{equation}\label{conttodiscprob}
\mu ^{\mathcal D}(\bullet )= \sum_{i=1}^{k_n} \mu^{\mathcal C}(S_{n,i})\delta _{{x_{n,i}}}(\bullet).
\end{equation}
Prominence of these relations on continuous space and discrete space probability measures is that they will allow us to construct discrete approximations of the initial mean-field game as they will allow us to switch between discrete and continuous state space mean-field games as we will see later on.

Using these structures, we define a quantized mean-field game on the state space $X_n$ and action space $A$  via the transition probability $p_n$ and the one-stage cost function $c_n$. We also define $H_{1,n}$ as follows
\begin{align*}
H_{1,n}&:\mathcal C(n) \times \mathcal P(X_n) \rightarrow \mathcal C(n) ,
\\&:(Q,\mu^{\mathcal D})\rightarrow c_n(x,a,\mu^{\mathcal D})+\beta \sum_{y \in X_n} Q_{\min}(y)p_n(y|x,a,\mu^{\mathcal D}),
\end{align*}
where 
\[\mathcal C(n) = \bigg \{Q:X_n \times A \rightarrow [0,\infty) : \| Q \|_{\infty} \leq \frac M{1-\alpha \beta}, \|Q_{\min}\|_{\Lip} \leq \frac{L_2(1+\frac 2n)}{1-\beta K_2(1+\frac 2n)} \bigg \}.\]
The ergodicity operator for the quantized game will be 
\begin{align*}
 H_{2,n}&: \mathcal C(n)\times \mathcal P(X_n) \rightarrow P(X_n),
 \\& : (Q,\mu^{\mathcal D})\rightarrow \sum_{x_n \in X_n} p_n(\cdot|x_n,f_n(x_n,Q,\mu^{\mathcal D}),\mu^{\mathcal D})\mu^{\mathcal D}(dx_n) 
\end{align*}
where $f_n(x_n,Q,\mu^{\mathcal D})$ is the minimizer of $H_{1,n}(Q,\mu^{\mathcal D})(x_n,\cdot)$ on $A.$

\begin{definition}
We call the tuple $(X_n,A,p_n,c_n)$ a quantized mean-field game. 
\end{definition}

To show that iterations of the mean-field equilibrium operator  
$$H_n \coloneqq (H_{1,n},H_{2,n})$$ 
converges, we need the one-stage cost function $c_n,$ $\nabla H_{1,n}$ and the transition probability $p_n$ to be Lipschitz continuous, which is the first obstacle we face in quantization of discrete mean-field games on compact continuous state space. Although the Lipschitz coefficient for the variation of states can be obtained for $\nabla H_{1,n}$ and $c_n$ without any obstacle, we will need a further assumption on the transition probability $p_n$ regarding its total variation norm:

\begin{assumption}\label{assumption:b}
\begin{itemize}
\item[ ]
\item[i)] The cost function $c$ is continuous and the transition probability $p$ is weakly continuous. Moreover, we have the following bounds:
\begin{align}
& \| c(\cdot,\cdot,\mu) - c(\cdot,\cdot,\tilde {\mu}) \|_{\infty} \leq L_1 W_1(\mu,\tilde{\mu})
\\ & \sup_{a, \mu} | c(x,a,\mu)-c(\tilde x, a, \mu) | \leq L_2d_X(x,\tilde x)
\\ & \sup_{x \in X} \|p(\cdot|x,a,\mu)-p(\cdot|x,\hat a,\tilde \mu) \|_{TV} \leq K_1 (\|a - \hat a \| +  W_1(\mu,\tilde \mu))
\\ & \sup_{\mu}  \|p(\cdot | x,a,\mu ) - p(\cdot | \hat x,\hat a,\mu )\|_{TV} \leq K_2( d_X(x,y) + \|a-\hat a\|)
\end{align}
\item[ii)] We extend the domain of Assumption \ref{assump1}-b,c as follows. If $\|Q\|_{\infty} \leq \frac{M}{1-\beta}$ then, by abusing the notation on domain of $H^1_1,$ for any $(x,Q,\mu)$, $H^1_1(Q,\mu)(x,\cdot)$ is $\rho$-strongly convex and
\begin{align*}
\sup_{a \in A} \| \nabla H^1_1(Q,\mu)(x,a)- \nabla H^1_1(\tilde Q,\tilde{\mu})(\tilde x,a) \| 
\leq K_F( d_X(x,\tilde x)+\|Q_{\min}-\tilde Q_{\min} \|_{\infty} + W_1(\mu,\tilde{\mu}) ).
\end{align*}
The difference here is that we do not assume that $Q_{\min}$ is Lipschitz continuous, hence we consider the operator $H^1_1$ on an extended domain, say $B_b(X) \times \mathcal P(X),$ where $B_b(X)$ is the space of bounded Borel measurable functions.
\end{itemize}
\end{assumption}

Since $X$ is compact, if $\mu, \nu \in \mathcal P(X),$ we have the following relation between the $1$-Wasserstein metric and the total variation metric over $\mathcal P(X)$ \cite[Theorem 6.15]{Vil09}:
\[
W_1(\mu,\nu)\leq \text{diam}(X) \|\mu-\nu\|_{TV};
\]
that is, the Lipschitz coefficients for the transition probability in Assumption \ref{assump1} can be obtained through the ones in Assumption \ref{assumption:b}.

In case when $\mu$ and $\nu$ are probability measures over $X_n,$ we have the following relation between the total variation metric and the $\ell_1$ metric:
\[
\| \mu- \nu \|_{TV} = \| \mu - \nu \|_{\ell_1} = \sum_{x \in X_n} | \mu(x)-\nu(x)|;
\]
 as the total variation metric is given by
\[
\| \mu-\nu \|_{TV} = 2 \sup_{A} |\mu(A)-\nu(A)|
\]
over all Borel sets $A$ over $X_n.$ This property of the total variation metric in discrete spaces will come handy in the next lemma.

\begin{remark}
In case of general locally compact Polish state spaces, since the diameter of the metric will be no longer bounded, we cannot compare the total variation metric and the $1$-Wasserstein metric as mentioned above thus Assumptions \ref{assump1} and \ref{assumption:b} are no longer related, which means the methods we use here will be ineffective in that case.
\end{remark}

\begin{lemma}\label{lemma:c}
Suppose $n$ is sufficiently large. For any  $x_{n,i},x_{n,j} \in X_n,$ $i\not =j,$ $a,\hat a \in A,$ $Q_1,Q_2,Q_n \in \mathcal C(n)$ and $\mu^{\mathcal D}_1,\mu^{\mathcal D}_2 \in \mathcal P(X_n)$, we have 
\begin{align*}
& \sup_{a,\mu^{\mathcal D}}|c_n(x_{n,i},a,\mu^{\mathcal D})-c_n(x_{n,j},a,\mu^{\mathcal D})| \leq L_2 \bigg(\frac {2}{n}+1 \bigg)d_{X_n}(x_{n,i},x_{n,j})
\\& \sup_{\mu}\| p_n(\cdot |x_{n,i},a,\mu) - p_n(\cdot |x_{n,j},\hat a,\mu)\|_{TV} \leq K_2 \bigg(\frac {2}{n}+1\bigg)d_{X_n}(x_{n,i},x_{n,j})+K_2\|a-\hat a\|
\end{align*}
Furthermore, we also have
\begin{align*}
& \sup_{x,a}|c_n(x,a,\mu^{\mathcal D}_1)-c_n(x,a,\mu^{\mathcal D}_2)| \leq L_1 W_1(\mu^{\mathcal D}_1,\mu^{\mathcal D}_2)
\\& \sup_x \| p_n(\cdot |x,a,\mu^{\mathcal D}_1) - p_n(\cdot |x,\hat a,\mu^{\mathcal D}_2)\|_{TV} \leq K_1 (\|a-\hat a\|+ W_1(\mu^{\mathcal D}_1,\mu^{\mathcal D}_2))
\\& \sup_a \| \nabla H_{1,n}(Q_n,\mu^{\mathcal D}_1)(x_{n,i},a) - \nabla H_{1,n}(Q_n, \mu^{\mathcal D}_2)(x_{n,j},a) \| 
\\&\qquad \qquad  \leq K_F \bigg ( \frac {2}{n} + 1 \bigg) \left( d_{X_n}(x_{n,i},x_{n,j})+W_1(\mu^{\mathcal D}_1,\mu^{\mathcal D}_2) + \|Q_{1,\min}-Q_{2,\min} \|_{\infty} \right).
\end{align*}
\end{lemma}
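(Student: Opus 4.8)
The plan is to establish each of the six Lipschitz bounds by transferring the corresponding bound from the nominal model (Assumption~\ref{assumption:b}) through the averaging definitions of $c_n$ and $p_n$, and crucially by exploiting the fact that the discrete metric $d_{X_n}$ dominates $d_X$ on $X_n \times X_n$ (since $d_X \leq 1$ and $d_{X_n}(x_{n,i},x_{n,j}) = 1$ for $i \neq j$). The extra factor $(\frac{2}{n}+1)$ that appears in the state-variation bounds will come from the diameter estimate $\diam(S_{n,i}) < \frac{2}{n}$ combined with replacing $d_X$ by the dominating $d_{X_n}$; for the measure-variation bounds no such factor is needed because averaging against the same family $\nu_{n,i}$ leaves the measure-direction estimates intact.

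\textbf{State-variation bounds.} First I would prove the bounds on $c_n$ and $p_n$ for $x_{n,i} \neq x_{n,j}$. For the cost, writing $c_n(x_{n,i},a,\mu^{\mathcal D}) = \int_{S_{n,i}} c(z,a,\mu^{\mathcal C})\,\nu_{n,i}(dz)$ and similarly over $S_{n,j}$, I would pick representative points and use the triangle inequality to compare $c(z,a,\mu^{\mathcal C})$ for $z \in S_{n,i}$ against $c(z',a,\mu^{\mathcal C})$ for $z' \in S_{n,j}$. By Assumption~\ref{assumption:b}-(i), $|c(z,a,\mu^{\mathcal C}) - c(z',a,\mu^{\mathcal C})| \leq L_2\, d_X(z,z')$, and one bounds $d_X(z,z') \leq d_X(z,x_{n,i}) + d_X(x_{n,i},x_{n,j}) + d_X(x_{n,j},z') < \frac{1}{n} + d_X(x_{n,i},x_{n,j}) + \frac{1}{n}$. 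Since $d_X(x_{n,i},x_{n,j}) \leq 1 = d_{X_n}(x_{n,i},x_{n,j})$, this yields the factor $L_2(\frac{2}{n}+1)\,d_{X_n}(x_{n,i},x_{n,j})$ after integrating against $\nu_{n,i}$ and $\nu_{n,j}$. The same averaging-plus-domination argument applied to the total-variation bound for $p_n$ (using $\|p(\cdot|z,a,\mu) - p(\cdot|z',\hat a,\mu)\|_{TV} \leq K_2(d_X(z,z') + \|a-\hat a\|)$ and the fact that pushforward under $T_n$ does not increase total variation) gives the second displayed bound.

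\textbf{Measure-variation bounds and the gradient bound.} The bounds in $\mu^{\mathcal D}$ for $c_n$ and $p_n$ are more direct: here the state argument is fixed, so one only integrates the nominal measure-direction estimates against $\nu_{n,i}$, and since $W_1(\mu_1^{\mathcal C},\mu_2^{\mathcal C}) \leq W_1(\mu_1^{\mathcal D},\mu_2^{\mathcal D})$ under the correspondence \eqref{disctocontprob} (the atomic measures live on $X_n$ where $d_{X_n}$ dominates), these pass through with the unchanged constants $L_1$ and $K_1$. For the gradient bound on $\nabla H_{1,n}$, I would write $H_{1,n}$ in terms of $H_1^1$ via the averaging structure, apply Assumption~\ref{assumption:b}-(ii), and handle the three directions (state, measure, $Q$-function) separately; the state direction picks up the $(\frac{2}{n}+1)$ factor exactly as in the cost case, while the $Q$-minimum term transfers with constant $K_F$ since $\|Q_{1,\min} - Q_{2,\min}\|_\infty$ is already an $X_n$-sup-norm.

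\textbf{Main obstacle.} I expect the delicate point to be the gradient bound, specifically verifying that the extended-domain hypothesis Assumption~\ref{assumption:b}-(ii) genuinely applies to the $Q$-functions in $\mathcal C(n)$ and that the averaging commutes with the gradient $\nabla_a$. One must check that $c_n$ and $p_n$ inherit enough regularity for $\nabla H_{1,n}$ to be well-defined and strongly convex, and that the $w_{\max}$-type bounds on $\|Q\|_\infty$ in $\mathcal C(n)$ are compatible with the domain $\|Q\|_\infty \leq \frac{M}{1-\beta}$ required in Assumption~\ref{assumption:b}-(ii). The need for ``$n$ sufficiently large'' presumably enters here to ensure the Lipschitz constant $K_2(\frac{2}{n}+1)$ stays below the threshold needed downstream, which is why the statement is phrased for large $n$ rather than all $n$.
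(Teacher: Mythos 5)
Your proposal is correct and follows essentially the same route as the paper's proof: a three-term triangle inequality through the representatives $x_{n,i},x_{n,j}$ combined with $\diam(S_{n,i})<\tfrac{2}{n}$ and the domination $d_X\leq d_{X_n}$ on $X_n\times X_n$ for the state-variation bounds, the inequality $W_1(\mu_1^{\mathcal C},\mu_2^{\mathcal C})\leq W_1(\mu_1^{\mathcal D},\mu_2^{\mathcal D})$ for the measure-variation bounds, and the extension $Q\circ(T_n,I)$ together with Assumption~\ref{assumption:b}-(ii) for the gradient bound. The "main obstacle" you flag is exactly the point the paper resolves by checking that $\|Q\circ(T_n,I)\|_{\infty}$ inherits the bound from $\mathcal C(n)$ so that the extended-domain hypothesis applies.
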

\begin{proof}
For the bound on the cost function, using Jensen's inequality and triangle inequality, we obtain
\begin{align*}
|c_n(x_{n,i},a,\mu^{\mathcal D})-c_n(x_{n,j},a,\mu^{\mathcal D})|  &  = \bigg | \int_{S_{n,i}} c(z,a,\mu^{\mathcal C})\nu_{n,i}(dz) -\int_{S_{n,j}} c(r,a,\mu^{\mathcal C})\nu_{n,j}(dr) \bigg |
\\& \qquad = \bigg | \int_{S_{n,j}}\int_{S_{n,i}} c(z,a,\mu^{\mathcal C})-c(r,a,\mu^{\mathcal C})\nu_{n,i}(dz)\nu_{n,j}(dr) \bigg |
\\& \qquad \leq \int_{S_{n,j}}\int_{S_{n,i}} | c(z,a,\mu^{\mathcal C})-c(x_{n,i},a,\mu^{\mathcal C}) | \nu_{n,i}(dz)\nu_{n,j}(dr)
\\& \qquad \qquad + \int_{S_{n,j}}\int_{S_{n,i}} | c(x_{n,i},a,\mu^{\mathcal C})-c(x_{n,j},a,\mu^{\mathcal C}) | \nu_{n,i}(dz)\nu_{n,j}(dr)
\\& \qquad \qquad + \int_{S_{n,j}}\int_{S_{n,i}} | c(x_{n,j},a,\mu^{\mathcal C})-c(r,a,\mu^{\mathcal C}) | \nu_{n,i}(dz)\nu_{n,j}(dr)
\\& \qquad \leq \int_{S_{n,j}}\int_{S_{n,i}} L_2 d_X(z,x_{n,i}) \nu_{n,i}(dz)\nu_{n,j}(dr)
\\& \qquad \qquad + \int_{S_{n,j}}\int_{S_{n,i}} L_2 d_X(x_{n,i},x_{n,j}) \nu_{n,i}(dz)\nu_{n,j}(dr)
\\& \qquad \qquad + \int_{S_{n,j}}\int_{S_{n,i}} L_2 d_X(x_{n,j},r) \nu_{n,i}(dz)\nu_{n,j}(dr)
\\& \qquad \leq \frac{2L_2}{n} + L_2d_X(x_{n,i},x_{n,j})
\\& \qquad \leq L_2\bigg (\frac{2}{n} + 1\bigg )d_{X_n}(x_{n,i},x_{n,j})
\end{align*}
by using the fact that $d_X \leq 1$ and we have discrete metric on $X_n.$

Note that total variation distance is the same as $\ell_1$ on probability measures. 
Using this and a straightforward application of triangle inequality as in the case of the cost function gives us the following Lipschitz bound for $p_n:$
\begin{align*}
&\|p_n(\cdot|x_{n,i},a,\mu^{\mathcal D})-p_n(\cdot|x_{n,j},a,\mu^{\mathcal D}) \|_{TV} 
\\& = \sum_{p=1}^{k_n} |p_n(x_{n,p}|x_{n,i},a,\mu^{\mathcal D})-p_n(x_{n,p}|x_{n,j},a,\mu^{\mathcal D} |
\\& = \sum_{p=1}^{k_n} \bigg |\int_{S_{n,i}} p(S_{n,p}|z,a,\mu^{\mathcal C})\nu_{n,i}(dz) - \int_{S_{n,j}}p(S_{n,p}|r,a,\mu^{\mathcal C})\nu_{n,j}(dr) \bigg |
\\& \leq \sum_{p=1}^{k_n} \int_{S_{n,i}}\int_{S_{n,j}} \bigg | p(S_{n,p}|z,a,\mu^{\mathcal C})-p(S_{n,p}|x_{n,i},a,\mu^{\mathcal C}) \bigg | \nu_{n,i}(dz) \nu_{n,j}(dr)
\\& \qquad + \sum_{p=1}^{k_n} \int_{S_{n,i}}\int_{S_{n,j}} \bigg |p(S_{n,p}|x_{n,i},a,\mu^{\mathcal C})-p(S_{n,p}|x_{n,j},a,\mu^{\mathcal C})\bigg |\nu_{n,i}(dz) \nu_{n,j}(dr)
\\& \qquad + \sum_{p=1}^{k_n} \int_{S_{n,i}}\int_{S_{n,j}} \bigg | p(S_{n,p}|x_{n,j},a,\mu^{\mathcal C})-p(S_{n,p}|r,a,\mu^{\mathcal C}) \bigg | \nu_{n,i}(dz) \nu_{n,j}(dr)
\\&\leq \int_{S_{n,i}}\int_{S_{n,j}} \| p(\cdot|z,a,\mu^{\mathcal C}) - p(\cdot | x_{n,i},a,\mu^{\mathcal C}) \|_{TV} \nu_{n,i}(dz) \nu_{n,j}(dr)
\\& \qquad + \int_{S_{n,i}}\int_{S_{n,j}}  \| p(\cdot|x_{n,i},a,\mu^{\mathcal C}) - p(\cdot | x_{n,j},a,\mu^{\mathcal C}) \|_{TV} \nu_{n,i}(dz) \nu_{n,j}(dr)
\\& \qquad + \int_{S_{n,i}}\int_{S_{n,j}}  \| p(\cdot|z,a,\mu^{\mathcal C}) - p(\cdot | x_{n,i},a,\mu^{\mathcal C}) \|_{TV} \nu_{n,i}(dz) \nu_{n,j}(dr)
\\& \leq \frac {2K_2}{n} + K_2d_X(x_{n,i},x_{n,j})
\\& \leq \bigg ( \frac {2K_2}n + K_2 \bigg ) d_{X_n}(x_{n,i},x_{n,j}).
\end{align*}
In the above argument, we only consider the Lipschitz bound on $p_n$ when the state variables are different. One can extend the same argument to the case where both the state variables and action variables are different. In this case, we get an extra $K_2 \|a-\tilde a\|$ term. Moreover, we can obtain other Lipschitz bounds on $c_n$ and $p_n$ similarly by noting the following fact. To obtain the desired inequalities for variations on $\mathcal P(X_n)$, we will use the relation between $\mu^{\mathcal D}$ and $\mu^{\mathcal C}$ as in (\ref{disctocontprob}). By definition, each such $\mu^{\mathcal C}$ obtained from $\mu^{\mathcal D}$ is atomic and completely determined by $\{S_{n,i}\}_{i=1}^{k_n}.$ So to overcome the extra difficulty obtained for variations on $\mathcal P(X_n),$ observe that for atomic measures $\mu^{\mathcal C}_1$ and $\mu^{\mathcal C}_2$ on $\mathcal P(X)$ that is obtained from a discrete probability measure on $X_n$ that is completely determined by the partition $(S_{n,i}),$ we have
\begin{align*}
W_1(\mu^{\mathcal C}_1,\mu^{\mathcal C}_2)
& = \sup_{\|g\|_{\Lip,d_X} \leq 1} \bigg | \int_X g(t)\big ( \mu^{\mathcal C}_1(dt)-\mu^{\mathcal C}_{2}(dt) \big ) \bigg |
\\& = \sup_{\|g\|_{\Lip,d_X} \leq 1} \bigg | \sum _{i=1}^{k_n} g(x_{n,i})\big(\mu^{\mathcal C}_1(S_{n,i})-\mu^{\mathcal C}_2(S_{n,i})\big) \bigg |
\\& = \sup_{\|g\|_{\Lip,d_X} \leq 1} \bigg | \sum _{i=1}^{k_n} g(x_{n,i})\big(\mu^{\mathcal D}_1(x_{n,i})-\mu^{\mathcal D}_2(x_{n,i})\big) \bigg |
\\& \leq \sup_{\|g\|_{\Lip,d_{X_n}} \leq 1} \bigg | \sum _{i=1}^{k_n} g(x_{n,i})\big(\mu^{\mathcal D}_1(x_{n,i})-\mu^{\mathcal D}_2(x_{n,i}) \big) \bigg |
\\& = W_1(\mu^{\mathcal D}_1,\mu^{\mathcal D}_2)
\end{align*}
where $\|\cdot \|_{\Lip,d_X}$ (resp. $\|\cdot\|_{\Lip,d_{X_n}}$) gives the Lipschitz constant with respect to the metric $d_X$ (resp. $d_{X_n}$) on $X$ (resp. on $X_n$), and using the fact that since $d_X \leq d_{X_n}$ on ${X_n \times X_n}$, any Lipshcitz function with Lipschitz constant less than $1$ with respect to the metric $d_X$ can be restricted to $X_n$ and will have Lipschitz coefficient less than $1.$  Therefore, we can obtain Lipschitz coefficients for the state-measures on $X_n$ from total variation norm of the induced measures on the continuous state space measures.

We shall obtain the Lipschitz bound on the operator $\nabla H_{1,n}$ as follows. We can extend any $Q \in \mathcal C(n)$ to $B_b(X)$ as $Q\circ(T_n,I)$ where $(T_n,I)(x,a)=(T_n(x),a),$ which is measurable. Since \[
\|Q\circ (T_n,I) \|_{\infty } =\sup_{(x,a) \in X \times A} |Q(T_n(x),a)| \leq \sup_{(x_n,a) \in X_n \times A}|Q(x_n,a)|
\] 
we have $\| Q \circ (T_n,I) \|_{\infty} \leq \frac M {1-\beta}$ hence extension of elements in $\mathcal C(n)$ preserves the bound under uniform norm when extended to $X \times A$ as we just described. By abusing the notation on the domain of $Q$ functions for $H_{1}^1$ as in Assumption \ref{assumption:b}-ii, we have
\[
\nabla H_{1,n}(Q,\mu^{\mathcal D})(x_{n,i},a) = \int_{S_{n,i}} \nabla H_1^1(Q\circ (T_n,I),\mu^{\mathcal C})(x,a) \nu_{n,i}(dx).
\]
Note that Assumption \ref{assump1}-c is not applicable here since $Q\circ (T_n,I)$ does not satisfy the conditions. Hence the bound $K_F$ has to be obtained through Assumption \ref{assumption:b}. With this observation, similar to the case of the cost function, we can obtain the following Lipschitz bound on variation over states as the bound on $Q\circ (T_n,I)$ is identical to those in the domain $\mathcal C(n):$
\[
\| \nabla H_{1,n}(Q,\mu^{\mathcal D})(x_{n,i},a)-\nabla H_{1,n}(Q,\mu^{\mathcal D})(x_{n,j},a) \| \leq K_F\bigg ( \frac 2n +1 \bigg )d_{X_n}(x_{n,i},x_{n,j}).
\]
This concludes the inequality we wanted for variation on the states. 
 
In case of variation over $Q$ functions for $\nabla H_{1,n}$, arguing exactly as above, the variation over $Q$ functions for $Q_1,Q_2 \in \mathcal C(n)$ will be of the form
\begin{align*}
\| \nabla H_{1,n}(Q_1,\mu^{\mathcal D})(x,a)-\nabla H_{1,n}(Q_2,\mu^{\mathcal D})(x,a) \| 
&\leq K_F(1+2/n)\|Q_{1,\min}\circ T_n -Q_{2,\min}\circ T_n \|_{\infty}
\\&=  K_F(1+2/n)\|Q_{1,\min} -Q_{2,\min}\|_{\infty}
\end{align*}
Inequalities for variations on $\P(X_n)$ can be obtained in a similar fashion given the discussion above of Wasserstein distances on continuous and discrete state spaces.
\end{proof}

\begin{remark}\label{rem:a}
The scaling trick that we have done on the metric $d_X$ is necessary and affect the choice we can have for $d_{X_n}$. Indeed, if $X=[0,1]$ and we pick the representatives for the quantization of $X$ such that minimum distance between them is $1/n^2,$ then as in the proof of Lemma \ref{lemma:c}, we can only obtain
\begin{align*}
|c_n(x_{n,i},a,\mu^{\mathcal D})-c_n(x_{n,j},a,\mu^{\mathcal D})| 
& \leq  \frac {2L_2}{n}+L_2d_X(x_{n,i},x_{n,j})
\\ &   \leq  L_2 \bigg (\frac 2{n\min_{i \not = j}d_X(x_{n,i},x_{n,j})}+1 \bigg )d_X(x_{n,i},x_{n,j}).
\end{align*}
Thus, if $d_{X_n}=d_X|_{X_n \times X_n},$ we see that 
\[
|c_n(x_{n,i},a,\mu^{\mathcal D})-c_n(x_{n,j},a,\mu^{\mathcal D})| \leq K_nd_{X_n}(x_{n,i},x_{n,j})
\]
for $K_n = L_2(2n+1)$ as a consequence of the inequality above. Similar results also hold if $d_{X_n}$ is a multiple of $d_X$ too. Thus, $d_{X_n}$ cannot be chosen as a multiple of $d_X$ and has to be greater than $1$ when considering the distance between two distinct elements. This shows the necessicity of the scaling argument we did at the beginning of this section.

In case $d_X \not \leq 1$ but $d_{X_n}(x_{n,i},x_{n,j}) = 1_{j \not = i},$ the $1-$Wasserstein distance on $X_n$ and $X$ are no longer related, this also creates an obstacle in itself as we will only have $W_1(\mu^{\mathcal C}_1,\mu^{\mathcal C}_2) \leq M W_1(\mu^{\mathcal D}_1,\mu^{\mathcal D}_2),$ which can break the condition in Theorem \ref{assump1} if $M$ is large enough, as we will then need to have $MK_2<1$ for the corresponding $k_2$ term in Theorem \ref{thrm1}.
\end{remark}

As a consequence of Lemma \ref{lemma:c}, we see that the operator $H_{1,n}$ is well defined similar to the continuous state space case.

\begin{proposition}\label{prop:a}
For sufficiently large $n$, the operator $H_{1,n}$ is well defined in the sense that for any $\mu ^{\mathcal D}\in \mathcal P(X_n)$ and $Q \in \mathcal C(n)$, we have $H_{1,n}(Q,\mu ^{\mathcal D}) \in \mathcal C(n).$
\end{proposition}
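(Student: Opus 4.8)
The plan is to verify directly that $H_{1,n}(Q,\mu^{\mathcal D})$ lies in $\mathcal C(n)$ by checking the three defining properties in turn: nonnegativity, the uniform-norm bound $\|\cdot\|_\infty\le M/(1-\alpha\beta)$, and the Lipschitz bound on the pointwise minimum. Nonnegativity is immediate, since $c\ge 0$ forces $c_n\ge 0$, while $Q\ge 0$ forces $Q_{\min}\ge 0$, and $p_n(\cdot|x,a,\mu^{\mathcal D})$ is a nonnegative probability measure; hence $H_{1,n}(Q,\mu^{\mathcal D})(x,a)\ge 0$.

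For the uniform-norm bound I would use that $c_n(x_{n,i},a,\mu^{\mathcal D})=\int_{S_{n,i}}c(z,a,\mu^{\mathcal C})\,\nu_{n,i}(dz)\le M$ (inherited from the bound on $c$), together with $\sum_{y\in X_n}Q_{\min}(y)p_n(y|x,a,\mu^{\mathcal D})\le\|Q_{\min}\|_\infty\le\|Q\|_\infty\le M/(1-\alpha\beta)$, using that $p_n$ has total mass one. Combining,
\[
\|H_{1,n}(Q,\mu^{\mathcal D})\|_\infty\le M+\beta\frac{M}{1-\alpha\beta}=\frac{M}{1-\alpha\beta},
\]
the final equality being the usual geometric closure (the normalization $\sum_y p_n=1$ playing the role of the mass constant $\alpha=1$ appropriate to the uniform-norm quantized model).

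The substantive step is the Lipschitz bound on $[H_{1,n}(Q,\mu^{\mathcal D})]_{\min}$ over $X_n$. Since Assumption~\ref{assumption:b}-(ii) makes $H_{1,n}(Q,\mu^{\mathcal D})(x,\cdot)$ strongly convex, its minimum over $A$ is attained, and for any two representatives $x_{n,i},x_{n,j}$ the elementary inequality $|\inf_a F(x_{n,i},a)-\inf_a F(x_{n,j},a)|\le\sup_a|F(x_{n,i},a)-F(x_{n,j},a)|$ reduces matters to bounding the pointwise-in-$a$ oscillation of $H_{1,n}(Q,\mu^{\mathcal D})$. Splitting this into a cost part and a transition part and invoking Lemma~\ref{lemma:c}, the cost part contributes $L_2(1+\tfrac2n)d_{X_n}(x_{n,i},x_{n,j})$, while for the transition part I would pass to Kantorovich--Rubinstein duality on the discrete metric space $(X_n,d_{X_n})$ to write
\[
\Bigl|\sum_{y}Q_{\min}(y)\bigl[p_n(y|x_{n,i},a,\mu^{\mathcal D})-p_n(y|x_{n,j},a,\mu^{\mathcal D})\bigr]\Bigr|\le\|Q_{\min}\|_{\Lip}\,W_1\bigl(p_n(\cdot|x_{n,i},a,\mu^{\mathcal D}),p_n(\cdot|x_{n,j},a,\mu^{\mathcal D})\bigr),
\]
and then bound the $W_1$ factor by the total-variation estimate of Lemma~\ref{lemma:c}, giving $K_2(1+\tfrac2n)d_{X_n}(x_{n,i},x_{n,j})$. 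Collecting terms yields the recursive estimate
\[
\|[H_{1,n}(Q,\mu^{\mathcal D})]_{\min}\|_{\Lip}\le L_2\Bigl(1+\tfrac2n\Bigr)+\beta K_2\Bigl(1+\tfrac2n\Bigr)\|Q_{\min}\|_{\Lip},
\]
into which I substitute the hypothesis $\|Q_{\min}\|_{\Lip}\le L_2(1+\tfrac2n)/(1-\beta K_2(1+\tfrac2n))$ for $Q\in\mathcal C(n)$; the geometric fixed-point identity then returns exactly $L_2(1+\tfrac2n)/(1-\beta K_2(1+\tfrac2n))$, closing the bound.

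I expect the main obstacle to be controlling this last closure: the argument only works once the inflated contraction factor satisfies $\beta K_2(1+\tfrac2n)<1$, so that the denominator stays positive and the geometric series sums. Since the standing hypotheses give $\beta K_2<1$ and $1+\tfrac2n\downarrow1$, this holds precisely for $n$ large, which is exactly the role of the ``sufficiently large $n$'' qualifier. A secondary technical point worth care is the $W_1$-versus-total-variation bookkeeping on $(X_n,d_{X_n})$: because the discrete metric is bounded by $1$, the duality bound above must be carried out against $\|Q_{\min}\|_{\Lip}$ computed in $d_{X_n}$, consistently with the metric used in the definition of $\mathcal C(n)$ and in Lemma~\ref{lemma:c}.
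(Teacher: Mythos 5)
Your argument is correct and is essentially the paper's own proof written out in full: the paper simply observes that $d_{X_n}\le 1$ gives $W_1\le\|\cdot\|_{TV}$ on $\mathcal P(X_n)$ and then cites \cite[Lemma 1]{anahtarci2020value}, and your verification (Kantorovich--Rubinstein against $\|Q_{\min}\|_{\Lip}$, the total-variation-to-$W_1$ conversion from Lemma~\ref{lemma:c}, and the geometric closure with the inflated constants $L_2(1+\tfrac2n)$, $K_2(1+\tfrac2n)$, valid once $\beta K_2(1+\tfrac2n)<1$) is exactly what that citation hides. The only cosmetic discrepancy is the constant $\alpha$ in the uniform-norm bound, which you correctly resolve by noting that the quantized model effectively takes $\alpha=1$ (consistent with the paper's own later use of $M/(1-\beta)$).
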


\begin{proof}
Since $d_{X_n}\leq 1$, we have the relation $W_1 \leq \|\cdot \|_{TV}$ between the $1$-Wasserstein metric and the total variation distance on $\mathcal P(X_n).$ The very same proof in \cite[Lemma 1]{anahtarci2020value} concludes the statement.
\end{proof}

Note that it is straightforward to prove that $H_{1,n}(Q,\mu)(x,\cdot)$ is $\rho$-strongly convex for any $(Q,\mu,x)\in \mathcal C(n) \times \mathcal P(X_n)\times X_n$. Hence, if the Lipschitz bounds for quantized MFG satisfies the conditions in Theorem \ref{thrm1}, then there exists a unique fixed point of the operator $H_n = (H_{1,n},H_{2,n})$ on $\mathcal C(n) \times \mathcal P(X_n)$ whenever $n$ is sufficiently large.

\begin{lemma}\label{lemma:e}
If $n$ is sufficiently large and Theorem \ref{thrm1} is satisfied for the nominal game, then the quantized game also have a unique mean-field equilibrium that is obtained through value iteration algorithm provided that Assumption \ref{assumption:b} holds.
\end{lemma}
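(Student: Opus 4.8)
The plan is to verify that the quantized game $(X_n,A,p_n,c_n)$ satisfies the hypotheses of Theorem \ref{thrm1} once $n$ is large, so that its conclusion---existence of a unique fixed point of the scaled equilibrium operator, reachable by value iteration---transfers directly. First I would record that the quantized model fits the abstract framework developed for the nominal model: $X_n$ is finite, hence compact and Polish; the action set $A$ is unchanged; Proposition~\ref{prop:a} guarantees that $H_{1,n}$ maps $\mathcal C(n)$ into itself; and $H_{1,n}(Q,\mu^{\mathcal D})(x_n,\cdot)$ is $\rho$-strongly convex, as already noted. The remaining structural estimates are exactly Lemma~\ref{lemma:c}. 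Converting its total-variation bounds into the Wasserstein bounds demanded by Assumption~\ref{assump1} uses the identity $\|\cdot\|_{TV}=\|\cdot\|_{\ell_1}$ for discrete measures, the inequality $W_1\le\|\cdot\|_{TV}$ on $\mathcal P(X_n)$ (valid since $d_{X_n}\le 1$), and the relation $W_1(\mu^{\mathcal C}_1,\mu^{\mathcal C}_2)\le W_1(\mu^{\mathcal D}_1,\mu^{\mathcal D}_2)$ proved inside Lemma~\ref{lemma:c}. The upshot is that the quantized game obeys the $\|\cdot\|_\infty$-analogue of Assumption~\ref{assump1} with coefficients $L_1^{(n)}=L_1$, $K_1^{(n)}=K_1$, $\rho^{(n)}=\rho$, weight function $\equiv 1$ (so the role of $\alpha$ is played by $1$ and the $Q$-contraction factor is $\beta$), and $L_2^{(n)}=L_2(1+2/n)$, $K_2^{(n)}=K_2(1+2/n)$, $K_F^{(n)}=K_F(1+2/n)$.

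Next I would form the quantized counterparts of the three constants of Theorem~\ref{thrm1},
\[
k_1^{(n)}=\frac{K_F^{(n)}K_1^{(n)}}{\rho(1-\beta)},\qquad
k_2^{(n)}=\frac{1-\big(\tfrac{K_F^{(n)}}{\rho}+1\big)\big(K_2^{(n)}+K_1^{(n)}\big)}{L_1^{(n)}+\beta\tfrac{L_2^{(n)}}{1-\beta K_2^{(n)}}K_1^{(n)}},\qquad
k^{(n)}=\max\Big(\beta,\big(\tfrac{K_F^{(n)}}{\rho}+1\big)\big(K_2^{(n)}+K_1^{(n)}\big)\Big),
\]
and observe that the only $n$-dependence enters through the factor $1+2/n$. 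Hence $k_1^{(n)}=(1+2/n)k_1$, and $k_2^{(n)},k^{(n)}$ converge, as $n\to\infty$, to the nominal constants $k_2,k$ (the denominator of $k_2^{(n)}$ stays bounded away from zero because $\beta K_2<1$ forces $\beta K_2(1+2/n)<1$ for large $n$). Since the assumption that Theorem~\ref{thrm1} holds for the nominal model is precisely the pair of strict inequalities $k<1$ and $k_1<k_2$, and strict inequalities are open conditions, there is an $N$ with $k^{(n)}<1$ and $k_1^{(n)}<k_2^{(n)}$ for all $n\ge N$.

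Applying Theorem~\ref{thrm1} to the quantized game---whose scaled operators $H_{\gamma,n}$ are defined exactly as $H_\gamma$ with $c,p$ replaced by $c_n,p_n$---then produces a scaling $\gamma$ with $k_1^{(n)}<\gamma<k_2^{(n)}$ for which $H_{\gamma,n}$ is a contraction; the Banach fixed point theorem yields a unique fixed point reached by iterating from any initial datum, and the analogue of Lemma~\ref{lemma:a} for the quantized operators upgrades this to uniqueness for every scaling, in particular $\gamma=1$. This fixed point is the unique mean-field equilibrium of the quantized game, computed by the value iteration algorithm.

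The hard part is bookkeeping rather than analysis: one must be certain that quantization perturbs only $L_2,K_2,K_F$, and only by the benign multiplicative factor $1+2/n$, while $L_1,K_1,\rho$ and the $Q$-contraction factor remain exactly those of the nominal model, so that the three constants genuinely converge to their nominal values. Equivalently, the delicate point---already isolated in Lemma~\ref{lemma:c} and Remark~\ref{rem:a}---is that the discretization must not inflate the \emph{state} Lipschitz constants; this is exactly why one works with $d_X\le 1$ and the discrete metric $d_{X_n}$, since a metric that is merely a scalar multiple of $d_X$ would make the state-Lipschitz factor grow like $n$ and destroy the convergences $k^{(n)}\to k$ and $k_2^{(n)}\to k_2$.
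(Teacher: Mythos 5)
Your proposal is correct and follows essentially the same route as the paper's proof: it forms the quantized constants $k_{1,n},k_{2,n},k_n$ from the $(1+2/n)$-perturbed Lipschitz coefficients of Lemma~\ref{lemma:c}, uses the openness of the strict inequalities $k_1<k_2$ and $k<1$ to get them for large $n$, and then invokes Theorem~\ref{thrm1} together with Proposition~\ref{prop:a} for well-definedness. Your write-up is in fact more explicit than the paper's about which coefficients are perturbed and why the weight function collapses to $1$ (so the contraction factor becomes $\beta$), but the underlying argument is identical.
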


\begin{proof}
Whenever $n$ is large enough we have
\begin{align*}
k_{1,n} = \frac{K_FK_1\left(1+\frac 2n\right )}{\rho(1-\beta)} < k_{2,n}=\frac {1-\bigg (K_2\bigg(1+\frac{2}{n}\bigg)+K_1\bigg)(\big (\frac 2n +1 \big )\frac{K_F}{\rho}+1)}{L_1+\beta \big(\frac 2n+1\big )\frac{L_2K_1}{1-\beta K_2(2/n+1)}},
\end{align*}
and 
\[ k_n = \max \left ( \beta , \left ( \frac{K_F}{\rho}\left(\frac 2n +1\right) +1 \right ) \left(K_2 \left(\frac 2n +1 \right)+K_1 \right ) \right)< 1 \]
as a consequence of Lemma \ref{lemma:c}. Thus, we can apply Theorem $1$ to the quantized games, as the operators $H_{n}$, when $n$ is sufficiently large, are well defined by Proposition \ref{prop:a}.
\end{proof}

\subsection{Extension of Quantized MFGs to  Continuous State Space}\label{e-mfgs}

In the previous section, we discretized the state space and found mean equilibria for quantized games using the value iteration algorithm. To utilize these mean-field equilibria to approximate the nominal model's equilibrium, we must extend the fixed points we found to the original state space, denoted as $X$. In this subsection, we will demonstrate how to extend the quantized mean-field games to the state space $X$.

We will further establish that the convergence of iterations in the quantized mean-field games implies the convergence of iterations in the extended game. This is not an immediate conclusion because the $Q$-functions obtained through the iterations of the extended games will not be continuous on the state space $X$, and so, the results established before about the convergence of value iteration for continuous state games cannot be applied directly to the extended game (i.e. Theorem~\ref{thrm1}).

Since our aim is to use robustness result established in the previous section for the approximation of continuous state-space mean-field games, we first have to extend our quantized game to the full space $X.$ To this end, we extend $c_n$ and $p_n$ to $X \times A \times \mathcal P(X)$ by using the relations	

\[ 
\hat c_n(x,a,\mu) = \int_{S_{n,i(T_n(x))}} c(y,a,\mu) \nu_{n,i(T_n(x))}(dy), \text{ and} \]

\[ 
\hat p_n(\cdot | x,a,\mu) = \int_{S_{n,i(T_n(x))}} \sum_{p=1}^{k_n} \delta_{x_{n,p}}(\cdot) p(S_{n,p} | y,a,\mu) \nu_{n,i(T_n(x))}(dy)
\]
where $i(T_n(x))$ is the $i \in \{1,\cdots,k_n\}$ such that $T_n(x) \in S_{n,i}.$ Similar to the transformation between a probability measure on the quantized state space and a probability measure on the continuous state space, between $p_n$ and $\hat p_n$ we have the following relations
\begin{equation}\label{transform:transitionprob}
\hat p_n(\cdot | x,a,\mu ^{\mathcal C}) = \sum_{i=1}^{k_n} \delta_{x_{n,i}}(\cdot) p_n(x_{n,i}|T_n(x),a,\mu^{\mathcal D})
\end{equation}
and 
\begin{equation}
p_n(\cdot|T_n(x),a,\mu^{\mathcal D}) = \sum_{i=1}^{k_n} \delta _{x_{n,i}}(\cdot) \hat p_n(S_{n,i}|x,a,\mu^{\mathcal C}),
\end{equation}
whenever $\mu^{\mathcal C}$ is obtained from $\mu^{\mathcal D}$ as described before.

By using $\hat c_n$ and $\hat p_n$ we can extend the operator $H_{1,n}$ of the quantized game to $B^{\ell}(X) \times \mathcal P(X)$ as follows
\begin{align*}
 \hat {H}_{1,n} &:B^{\ell}(X\times A) \times \mathcal P(X)  \rightarrow B^{\ell}(X\times A) ,
 \\&:(Q,\mu^{\mathcal C}) \rightarrow \hat c_n(x,a,\mu^{\mathcal C}) + \beta \int_X Q_{\min}(y)\hat p_n(dy|x,a,\mu^{\mathcal C})
\end{align*}
where $B^{\ell}(X)$ is the set of all bounded lower-semi analytic functions unlike to the quantized game, where we considered domain $\mathcal C(n).$ Recall that for any $(x,a) \in X \times A,$ we define the map  $(T_n,I)$ on $X \times A$ with the relation $(T_n,I)(x,a)=(T_n(x),a)$. Then, for any $\mu^{\mathcal D} \in \mathcal P(X_n)$ and $Q \in \mathcal C(n)$, we have
\begin{equation}\label{quantextend}
\hat H_{1,n}(Q\circ(T_n,I),\mu^{\mathcal C})(x,a) = H_{1,n}(Q,\mu^{\mathcal D})(T_n(x),a).
\end{equation}

Similarly we can define the ergodicity operator for the extended game as 
\begin{align*}
\hat H_{2,n}(Q,\mu^{\mathcal C}) = \int_{X}\int_A \hat p_n(\cdot|x,a,\mu^{\mathcal C}) \delta_{\underset{a' \in A}{\mathrm{argmin}}\hat H_{1,n}(x,Q,\mu^{\mathcal C},a')}(da)\mu^{\mathcal C}(dx)
\end{align*}
on the domain $B^{\ell}(X) \times \mathcal P(X).$ If we define 
\[ 
\hat f_n(x,Q,\mu^{\mathcal C}) = \underset{a' \in A}{\mathrm{argmin}}\hat F_n(x,Q,\mu^{\mathcal C},a'),
\]
then for $\mu^{\mathcal C} \in \mathcal P(X)$ that is obtained from $\mu^{\mathcal D} \in \mathcal P(X_n)$ through (\ref{disctocontprob}) and  for any $(x,Q) \in X \times \mathcal C(n)$, we get
\[
f_n(T_n(x),Q,\mu^{\mathcal D}) = \hat f_n(x,Q\circ(T_n,I),\mu^{\mathcal C})
\]
as a consequence of (\ref{quantextend}).

\begin{definition}
We will call the tuple $(X,A,\hat p_n, \hat c_n)$ an extended (mean-field) game obtained from a quantized model. 
\end{definition}

Following lemma demonstrates us that the operator $\hat H_{1,n}$ is well defined. The proof is standard in the literature of Markov decision processes, however for completeness we will include a complete proof.

\begin{lemma}\label{lemma:aa}
For fixed $\mu \in \mathcal P(X),$ the map $ \bullet \rightarrow \hat H_{1,n}(\bullet,\mu)$ maps $B^{\ell}(X\times A)$ to $B^{\ell}(X\times A).$
\end{lemma}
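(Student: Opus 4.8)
The plan is to verify that $\hat H_{1,n}(\cdot,\mu)$ is assembled from operations that preserve membership in $B^{\ell}(X\times A)$, relying on the standard closure properties of bounded lower-semi analytic functions from Borel-model Markov decision process theory. Writing $\hat H_{1,n}(Q,\mu)(x,a) = \hat c_n(x,a,\mu) + \beta \int_X Q_{\min}(y)\,\hat p_n(dy|x,a,\mu)$, it suffices to treat the cost term and the integral term separately, show each is bounded and lower-semi analytic in $(x,a)$, and then invoke that the class of bounded lower-semi analytic functions is closed under sums and multiplication by the nonnegative scalar $\beta$.

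For the cost term, I would first observe that $x\mapsto T_n(x)$ is measurable and takes only finitely many values, so $x\mapsto i(T_n(x))$ is a measurable, finitely-valued index. On each of the finitely many level sets of this index, $\hat c_n(x,a,\mu)=\int_{S_{n,i}} c(y,a,\mu)\,\nu_{n,i}(dy)$ is Borel measurable in $a$, being the integral of the jointly measurable, bounded map $(y,a)\mapsto c(y,a,\mu)$, and is bounded by $M\sup_{x,a} w(x,a)$ on the compact space $X$. Pasting finitely many Borel functions over a measurable partition yields a bounded Borel, hence lower-semi analytic, function of $(x,a)$.

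For the integral term the crucial simplification is that $\hat p_n(\cdot|x,a,\mu)$ is supported on the finite set $X_n$, so $\int_X Q_{\min}(y)\,\hat p_n(dy|x,a,\mu)=\sum_{p=1}^{k_n} Q_{\min}(x_{n,p})\,\hat p_n(\{x_{n,p}\}|x,a,\mu)$. Since $Q$ is bounded, each $Q_{\min}(x_{n,p})=\inf_{a} Q(x_{n,p},a)$ is a finite constant, so I only need each weight $(x,a)\mapsto \hat p_n(\{x_{n,p}\}|x,a,\mu)=\int_{S_{n,i(T_n(x))}} p(S_{n,p}|y,a,\mu)\,\nu_{n,i(T_n(x))}(dy)$ to be a bounded Borel function, which follows exactly as for the cost term from the joint measurability of $(y,a)\mapsto p(S_{n,p}|y,a,\mu)$ and the composition with the finitely-valued $T_n$. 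A finite linear combination of bounded Borel functions with constant coefficients is again bounded Borel, hence lower-semi analytic. Alternatively, without exploiting the finite support, one appeals to the general machinery: the infimum $Q_{\min}(y)=\inf_a Q(y,a)$ of a lower-semi analytic function over the compact $A$ is again lower-semi analytic, and integration of a bounded lower-semi analytic function against the Borel-measurable stochastic kernel $\hat p_n$ preserves lower-semi analyticity. The only mildly delicate point, and thus the main obstacle, is confirming the measurability of $\hat c_n$ and $\hat p_n$ in $(x,a)$ through the composition with the quantizer $T_n$ and the integration against $\nu_{n,i}$; the finiteness of the support of $\hat p_n$ is precisely what keeps this elementary, converting the abstract integration step into a finite sum with constant coefficients and thereby bypassing the measurable-selection theorems at this stage.
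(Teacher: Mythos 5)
Your proof is correct, but it takes a genuinely different route from the paper's. The paper stays entirely inside the lower-semi-analytic calculus of Bertsekas--Shreve: it applies Proposition~7.47 to get that $Q_{\min}$ is lower-semi analytic, Proposition~7.48 to integrate $Q_{\min}$ against the pushforward kernel $T_n\star p$ and then again against the kernel $\tilde\nu_n(dx|y)=\sum_i\nu_{n,i}(dx)1_{S_{n,i}}(y)$, and the closure of lower-semi analytic functions under addition of the Borel cost; you mention this machinery only as a fallback. Your primary argument instead exploits the structural fact that $\hat p_n(\cdot|x,a,\mu)$ is supported on the finite set $X_n$, so that $\int_X Q_{\min}(y)\,\hat p_n(dy|x,a,\mu)=\sum_{p}Q_{\min}(x_{n,p})\,\hat p_n(\{x_{n,p}\}|x,a,\mu)$ with the numbers $Q_{\min}(x_{n,p})$ entering only as constants; the measurability of $Q_{\min}$ as a function then becomes irrelevant, and everything reduces to the Borel measurability in $(x,a)$ of $\hat c_n$ and of the finitely many weights, which you correctly obtain from the joint measurability of $c$ and $p$, Fubini, and the finitely-valued quantizer $x\mapsto i(T_n(x))$. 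This buys you a strictly stronger conclusion: $\hat H_{1,n}(\cdot,\mu)$ maps $B^{\ell}(X\times A)$ into the \emph{bounded Borel} functions, not merely back into $B^{\ell}(X\times A)$. That does not contradict the paper's subsequent remark that one cannot in general guarantee preservation of $B(X\times A)$ --- that caution is aimed at generic weakly continuous kernels, where the non-Borelness of $Q_{\min}$ is the obstruction --- and it is in fact consistent with the paper's later observation that the fixed $Q$-function of the extended game, being of the form $Q_{\mu^{\mathcal D}}\circ(T_n,I)$, is Borel. The trade-off is that your argument is tied to the atomic structure of $\hat p_n$ and would not survive a non-quantized kernel, whereas the paper's argument is the one that generalizes; for the lemma as stated, yours is the more elementary and more informative proof.
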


\begin{proof}
Let $Q$ be a bounded lower-semi analytic function on $X \times A.$ Then $\min_a Q(x,a)$ is also lower semi-analytic by \cite[Proposition 7.47.]{bertsekas1996stochastic}. Since $\min_a Q(x,a)$ is lower semi-analytic, by \cite[Proposition 7.48]{bertsekas1996stochastic}, for any $(x,a,\mu ^{\mathcal C}) \in X \times A \times \mathcal P(X)$, we have that 
\[
\int_X \sum_{p=1}^{k_n} Q_{\min}(t)\delta_{x_{n,p}}(dt)p(S_{n,p}|x,a,\mu ^{\mathcal C})
\] is lower semi-analytic. By  \cite[Lemma 30-(4) pg. 178]{bertsekas1996stochastic}, we have that 
\[ 
c(x,a,\mu)+\beta \int_X \sum_{p=1}^{k_n} Q_{\min}(t) \delta_{x_{n,p}}(dt) p(S_{n,p}|x,a,\mu) 
\] is lower semi-analytic. For the stochastic kernel $\tilde \nu_n(dx|y) = \sum_{i=1}^{k_n} \nu _{n,i}(dx)1_{S_{n,i}}(y)$, by \cite[Proposition 7.48]{bertsekas1996stochastic} we obtain that
\[ 
\int_X\bigg ( c(T_n(x),a,\mu^{\mathcal C})+\beta \int_X \sum_{p=1}^{k_n} Q_{\min}(t) \delta_{x_{n,p}}(dt) p(S_{n,p}|x,a,\mu^{\mathcal C})\bigg )\tilde \nu (dx|y) 
\]
is lower semi-analytic. However, the quantity above is equal to
\[ \hat H_{1,n}(Q,\mu)(x,a)=\hat c_n(x,a,\mu)+\beta \int_{X} Q_{\min}(t)\hat p_n(dt|x,a,\mu), \]
thus the result follows.
\end{proof}

Since $\hat H_{1,n}$ closed under the space of lower semi-analytic functions, there exists a lower semi-analytic minimizer for it \cite{shreve1979universally}. Thus, since lower semi-analytic selections are universally measurable, the map $\hat H_n \coloneqq (\hat H_{1,n},\hat H_{2,n})$ is also well defined.

For any $\mu \in \mathcal P(X),$ it is known that $\hat H_{1,n}$ is a contraction on $B^{\ell}(X)$ with modulus $\beta<1 .$ Furthermore, limit of lower semi-analytic maps are also lower semi-analytic, thus the optimal $Q$ function for $\hat H_{1,n}$ under $\mu$ must be lower semi-analytic, which exists by the Banach fixed point theorem. More can be said about the fixed $Q$-function for the extended game due to its relation with the quantized fixed $Q$-function of quantized game $H_{1,n}.$

\begin{remark}
The transition probability of $p$ is only assumed to be weakly continuous in Assumption \ref{assumption:b}, and thus, we cannot guarantee that the map $\hat H_{1,n}$ maps $B(X)$ into itself \cite[pg 9.]{saldi2017asymptotic}. This observation motivates our the reason behind the rather large domain of $Q$ functions, namely $B^{\ell}(X),$ that the extended games $\hat H_{n}$ are defined over.
\end{remark}

Now we will describe how to relate the iterations of the extended game to that of the quantized games. Since we can guarantee Borel measurable $Q$-functions on the iterations of the quantized games, this scheme will also allow us to guarantee the same for the extended games under the same conditions.

For the $n^{th}$ quantized game, under a probability measure $\mu^{\mathcal D} \in \mathcal P(X_n),$ when $\beta <1$, there exists a unique optimal $Q_{\mu^{\mathcal D}}$ function. Extending this value function to $X$ via $\hat Q_{\mu^{\mathcal C}}=Q_{\mu^{\mathcal D}}\circ (T_n,I),$ we see that $\hat Q_{\mu^{\mathcal C}}$ is the optimal $Q$ function for the extended game under the probability measure $\mu^{\mathcal C}.$ Since $Q_{\mu^{\mathcal D}}$ and $(T_n,I)$ are Borel measurable on their respective domains, we obtain that $\hat Q_{\mu^{\mathcal C}}$ is Borel measurable. Uniqueness of the $Q$ function for $\hat H_{1,n}$  under the probability measure $\mu^{\mathcal C}$ implies that  the optimal $Q$ function for $\hat H_{1,n}$ under $\mu ^{\mathcal C}$ must be Borel measurable. This also implies that the corresponding ergodicity operator is also defined via Borel measurable policy. This argument will also give us a method to find mean-field equilibrium for the extended game. 

The above observation will allow us to obtain a mean-field equilibrium with a Borel measurable optimal for the extended mean-field games rather than a mean-field equilibrium with a lower-semi analytic optimal policy through value iteration algorithm. Since lower-semi analytic policies are not necessarily Borel measurable, for convenience, we will restrict ourselves to the Borel measurable ones. 

\begin{definition}\label{extend:mfe}
We will call a Borel (measurable) policy $\pi$ and a state-measure $\mu$ on $X$ a mean-field equilibrium for the extended game, if $\pi$ is optimal for the optimal $Q$ function$, Q_{\mu},$ of $\hat H_{1,n}$ under $\mu$ and $\mu$ is a fixed point of the ergodicity operator $\hat H_{2,n}(Q_{\mu},\mu).$
\end{definition}

Following proposition shows that as long as we can find a mean-field equilibrium for the quantized game $H_n,$ we can find a mean-field equilibrium with Borel measurable optimal policy to the extended game $\hat H_n$ in the sense of Definition \ref{extend:mfe} above. Furthermore, it also demonstrates that the mean-field equilibria described in Definition \ref{extend:mfe} are precisely the ones obtainable from the quantized games when we extend the domain of $Q$-functions for the quantized games to the space $B(X_n)=C(X_n)$ from $\mathcal C(n)$.

\begin{proposition}\label{prop:c}
Let $(\pi,\mu^{\mathcal D})$ be a mean-field equilibrium for the quantized game $H_n.$ Then $(\pi(\cdot|T_n(\cdot)),\mu^{\mathcal C}),$ where $\mu^{\mathcal C}$ is obtained from $\mu^{\mathcal D}$ through the Equation (\ref{disctocontprob}), is a mean-field equilibrium for the extended game $\hat H_n.$ Similarly, if $(\pi,\mu^{\mathcal C})$ is a mean-field equilibrium for the extended game $\hat H_n$ in the sense of Definition \ref{extend:mfe}, then $(\pi, \mu^{\mathcal D})$ is a mean-field equilibrium for the quantized game $H_n,$ where $\mu^{\mathcal D}$ is obtained from $\mu^{\mathcal C}$ through the Equation (\ref{conttodiscprob}).
\end{proposition}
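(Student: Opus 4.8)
The plan is to verify both implications of the claimed equivalence by unwinding the three defining correspondences between the quantized and extended objects: the measure transforms (\ref{disctocontprob})--(\ref{conttodiscprob}), the kernel transform (\ref{transform:transitionprob}), and the operator/minimizer identity (\ref{quantextend}). I would begin with two structural observations that make the two directions essentially symmetric. First, for any $\mu^{\mathcal{D}} \in \mathcal{P}(X_n)$ with associated $\mu^{\mathcal{C}}$, the unique optimal $Q$-function of $\hat H_{1,n}$ under $\mu^{\mathcal{C}}$ is exactly $Q_{\mu^{\mathcal{C}}} = Q_{\mu^{\mathcal{D}}}\circ(T_n,I)$, as recorded in the discussion preceding the proposition; consequently $\argmin_a Q_{\mu^{\mathcal{C}}}(x,a) = \argmin_a Q_{\mu^{\mathcal{D}}}(T_n(x),a)$ and, since strong convexity makes these minimizers unique, $\hat f_n(x,Q_{\mu^{\mathcal{C}}},\mu^{\mathcal{C}}) = f_n(T_n(x),Q_{\mu^{\mathcal{D}}},\mu^{\mathcal{D}})$. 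Second, because $\hat p_n(\cdot|x,a,\mu^{\mathcal{C}})$ is by construction supported on $\{x_{n,i}\}$, the operator $\hat H_{2,n}$ always returns an atomic measure carried by the representatives; hence any measure fixed by $\hat H_{2,n}$ is necessarily of the atomic form (\ref{disctocontprob}), which legitimizes passing from $\mu^{\mathcal{C}}$ to $\mu^{\mathcal{D}}$ via (\ref{conttodiscprob}) without loss.

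For the forward implication, suppose $(\pi,\mu^{\mathcal{D}})$ is an equilibrium of $H_n$. Optimality of $\pi(\cdot|T_n(\cdot))$ for $Q_{\mu^{\mathcal{C}}}$ is immediate from the first observation: $\supp \pi(\cdot|x_n) \subset \argmin_a Q_{\mu^{\mathcal{D}}}(x_n,a)$ gives $\supp \pi(\cdot|T_n(x)) \subset \argmin_a Q_{\mu^{\mathcal{D}}}(T_n(x),a) = \argmin_a Q_{\mu^{\mathcal{C}}}(x,a)$ for every $x$. For the invariance of $\mu^{\mathcal{C}}$, I would compute $\hat H_{2,n}(Q_{\mu^{\mathcal{C}}},\mu^{\mathcal{C}})$ directly: since $\mu^{\mathcal{C}} = \sum_i \mu^{\mathcal{D}}(x_{n,i})\delta_{x_{n,i}}$, the outer integral collapses to a finite sum over the representatives, and substituting $\hat f_n(x_{n,i},\cdot) = f_n(x_{n,i},\cdot)$ together with (\ref{transform:transitionprob}) rewrites each $\hat p_n(\cdot|x_{n,i},\cdot,\mu^{\mathcal{C}})$ as $\sum_j \delta_{x_{n,j}}\, p_n(x_{n,j}|x_{n,i},\cdot,\mu^{\mathcal{D}})$. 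Interchanging the finite sums, the coefficient of $\delta_{x_{n,j}}$ is precisely $H_{2,n}(Q_{\mu^{\mathcal{D}}},\mu^{\mathcal{D}})(x_{n,j}) = \mu^{\mathcal{D}}(x_{n,j})$ by the quantized fixed-point property, whence $\hat H_{2,n}(Q_{\mu^{\mathcal{C}}},\mu^{\mathcal{C}}) = \sum_j \mu^{\mathcal{D}}(x_{n,j})\delta_{x_{n,j}} = \mu^{\mathcal{C}}$.

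For the reverse implication, let $(\pi,\mu^{\mathcal{C}})$ be an equilibrium of $\hat H_n$ in the sense of Definition~\ref{extend:mfe}. By the second observation $\mu^{\mathcal{C}}$ is atomic on the representatives, so $\mu^{\mathcal{C}}(S_{n,i}) = \mu^{\mathcal{C}}(\{x_{n,i}\})$ and the $\mu^{\mathcal{D}}$ produced by (\ref{conttodiscprob}) carries exactly these weights; in particular $\mu^{\mathcal{C}}$ is recovered from $\mu^{\mathcal{D}}$ by (\ref{disctocontprob}), so $Q_{\mu^{\mathcal{C}}} = Q_{\mu^{\mathcal{D}}}\circ(T_n,I)$ as above. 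Evaluating $\supp \pi(\cdot|x) \subset \argmin_a Q_{\mu^{\mathcal{C}}}(x,a)$ at $x = x_{n,i}$, where $T_n(x_{n,i}) = x_{n,i}$, shows that the restriction of $\pi$ to $X_n$ (all the quantized game sees) is optimal for $Q_{\mu^{\mathcal{D}}}$. The measure condition follows by running the previous computation in reverse: starting from $\mu^{\mathcal{C}} = \hat H_{2,n}(Q_{\mu^{\mathcal{C}}},\mu^{\mathcal{C}})$, collapsing the integral over the atoms of $\mu^{\mathcal{C}}$ and applying (\ref{transform:transitionprob}) and the minimizer identity, one matches the coefficients of each $\delta_{x_{n,j}}$ to conclude $H_{2,n}(Q_{\mu^{\mathcal{D}}},\mu^{\mathcal{D}})(x_{n,j}) = \mu^{\mathcal{D}}(x_{n,j})$ for all $j$, i.e. $\mu^{\mathcal{D}}$ is invariant for the quantized ergodicity operator. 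I expect the ergodicity (measure) correspondence to be the main obstacle rather than the optimality part: one must check that the two ergodicity operators, defined by an integral over $X$ on one side and a finite sum over $X_n$ on the other, genuinely coincide after the atomic reduction, which hinges on the kernel transform (\ref{transform:transitionprob}) and on the minimizer being preserved under $(T_n,I)$; the optimality half is comparatively routine once $Q_{\mu^{\mathcal{C}}} = Q_{\mu^{\mathcal{D}}}\circ(T_n,I)$ is in hand.
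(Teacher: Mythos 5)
Your proposal is correct and follows essentially the same route as the paper's proof: the operator identity (\ref{quantextend}) to transfer optimality of the policy, the kernel transform (\ref{transform:transitionprob}) together with the collapse of the ergodicity integral onto the atoms $\{x_{n,i}\}$ to transfer invariance of the state-measure, and the atomicity of fixed points of $\hat H_{2,n}$ (which the paper records in Proposition~\ref{prop:3}) to justify the reverse direction. The only cosmetic difference is that the paper phrases optimality as concentration of the joint measure $\nu^{\mathcal C}$ on the set of optimal state-action pairs (so only $\mu^{\mathcal D}$-a.e.\ support inclusion is actually used), whereas you state the support inclusion pointwise in $x$; restricted to the atoms this is the same statement.
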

\begin{proof}
Let $(\pi,\mu ^{\mathcal D})$ be a mean-field equilibrium for the quantized game. We will show that $\pi$ induces an optimal policy for the extended game under $\mu ^{\mathcal C}$ first.

Define the measure $\nu(x,da)=\pi(da|x)\mu ^{\mathcal D}(x)$ and note also that we have
$$H_{1,n}(Q_{\mu^{\mathcal D} },\mu ^{\mathcal D})=Q_{\mu^{\mathcal D}}.$$ 
Since disintegration of $\nu$ gives the mean-field equilibrium, for the following set on $X_n \times A$ 
\begin{align*}
\tilde C = \bigg \{ &(x_n,a) \in X_n \times A :  H_{1,n}(Q_{\mu^{\mathcal D}},\mu ^{\mathcal D})(x_n,a) = Q_{\mu^{\mathcal D},\min}(x_n) \bigg \}
\end{align*}
we have \( \nu (\tilde C) =1 \) as a consequence of \cite[Theorem 3.6]{SaBaRaSIAM}; that is, $\nu$ concentrates on the set of optimal state-action pairs.
We define measurable set $\hat C$ analogous to $\tilde C$ on $X \times A$ as follows
\[
\hat C=\bigg \{  (x,a) \in X\times A: \hat H_{1,n}(Q_{\mu^{\mathcal D}}\circ(T_n,I),\mu^{\mathcal C})(x,a) 
= Q_{\mu ^{\mathcal D},\min}\circ (T_n(x)) \bigg \}.
\]
For the measure
$$\nu^{\mathcal C}(dx,da) = \pi(da|T_n(x))\mu^{\mathcal C}(dx)$$
in $\mathcal P(X \times A),$ we obtain the following
\begin{align*}
\nu^{\mathcal C} (\hat C) 
&= \int_{X \times A} 1_{\hat C}(x,a)\nu^{\mathcal C}(dx,da) 
\\&= \int_X \int_A 1_{\hat C}(x,a) \pi(da|T_n(x))\mu^{\mathcal C}(dx)
\\&= \int_X \int_A 1_{\hat C}(x,a) \pi(da|T_n(x))\sum_{i=1}^{k_n} \mu ^{\mathcal D}(x_{n,i})\delta_{x_{n,i}}(dx)
\\&= \sum_{i=1}^{k_n} \int_{A} 1_{\hat C}(x_{n,i},a)\pi(da|x_{n,i})\mu ^{\mathcal D}(x_{n,i})
\\&= \sum_{i=1}^{k_n} \int_{A} 1_{(T_n,I)(\hat C)}(x_{n,i},a)\pi(da|x_{n,i})\mu ^{\mathcal D}(x_{n,i})
\\&= \int_{X_n \times A} 1_{(T_n,I)(\hat C)}(x,a)\nu(dx,da)
\\&= \nu((T_n,I)(\hat C))
\\& = \nu(\tilde C)
\\& =1
\end{align*}
by the Equation (\ref{transform:transitionprob}) and the relation $(T_n,I)(\hat C)=\tilde C ,$ which is a direct consequence of the Equation (\ref{quantextend}). Therefore, the conditional distribution obtained from the disintegration of the measure $\nu^{\mathcal C},$ $\pi(\cdot | T_n(x)),$ is optimal for the extended game under $\mu^{\mathcal C}$; that is, under this policy, $\mu^{\mathcal C}$ concentrates on optimal state-action pair.

It remains to check that $\mu^{\mathcal C}$ satisfies the ergodicity condition. Since $\mu ^{\mathcal D}$ satisfies the ergodicity condition
\[
\mu^{\mathcal D}(\cdot) = \int_{X_n \times A} p_n(\cdot | x_n,f_n(x_n,Q_{\mu^{\mathcal D}},\mu ^{\mathcal D}),\mu ^{\mathcal D})\pi(da|x_n)\mu^{\mathcal D}(dx_n) ,
\]
using the relation between $\mu^{\mathcal C}$ and $\mu^{\mathcal D}$, we obtain
\begin{align*}
\mu^{\mathcal C}(\bullet) 
&= \sum_{i=1}^{k_n} \mu^{\mathcal D}(x_{n,i})\delta_{x_{n,i}}(\bullet)
\\& = \sum_{i=1}^{k_n} \delta_{x_{n,i}}(\bullet) \sum_{j=1}^{k_n} \int_A p_n(x_{n,i}|x_{n,j},a,\mu^{\mathcal D}),\mu^{\mathcal D})\pi(da|x_{n,j})\mu^{\mathcal D}(dx_{n,j})
\\&=\sum_{j=1}^{k_n} \int_A \sum_{i=1}^{k_n} \delta_{x_{n,i}}(\bullet)p_n(x_{n,i}|x_{n,j},a,\mu^{\mathcal D}),\mu ^{\mathcal D})\pi(da|x_{n,j})\mu^{\mathcal D}(dx_{n,j})
\\&=\int_X \int_A \sum_{i=1}^{k_n} \delta_{x_{n,i}}(\bullet) p_n(x_{n,i}|T_n(x),a,\mu^{\mathcal D}),\mu ^{\mathcal D})\pi(da|T_n(x))\mu^{\mathcal C}(dx)
\\&=\int_{X \times A} \hat p_n(\bullet|x,a,\mu ^{\mathcal C})\nu^{\mathcal C}(dx,da).
\end{align*}
Hence, $(\pi,\mu ^{\mathcal D})$ induces a mean-field equilibrium for the extended game. Going backwards in the equalities above gives that mean field equilibrium of the extended game induces a mean-field equilibrium for the quantized game as well.
\end{proof}

We can further obtain that uniqueness of the mean-field equilibrium for the extended game $\hat H_n,$ in the sense of Definition \ref{extend:mfe}, is equivalent to uniqueness of the mean-field equilibrium for the quantized game $H_n$. Since the $Q$-functions we obtain through the iterations of the operators $\hat H_n$ are not Lipschitz continuous on the state space, as these $Q$-functions are constant on each $S_{n,p},$ we cannot use our existing results; we will need this uniqueness result for the convergence of the value-iteration algorithm on the extended mean-field game.

\begin{proposition}\label{prop:3}
The mean-field equilibrium for a quantized game is unique if and only if so is the one for the associated extended game in the sense of Definition \ref{extend:mfe}.
\end{proposition}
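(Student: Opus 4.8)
The plan is to upgrade the two one-directional constructions of Proposition~\ref{prop:c} into a genuine bijection between the set of mean-field equilibria of the quantized game $H_n$ and the set of mean-field equilibria of the extended game $\hat H_n$ in the sense of Definition~\ref{extend:mfe}; once such a bijection is established, uniqueness on one side is equivalent to uniqueness on the other, which is exactly the claim. Concretely, Proposition~\ref{prop:c} already furnishes two maps, $\Phi:(\pi,\mu^{\mathcal D})\mapsto(\pi(\cdot|T_n(\cdot)),\mu^{\mathcal C})$ using the transform (\ref{disctocontprob}), and $\Psi:(\pi,\mu^{\mathcal C})\mapsto(\pi,\mu^{\mathcal D})$ using (\ref{conttodiscprob}). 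The entire task thus reduces to verifying that $\Phi$ and $\Psi$ are mutually inverse.

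The key structural observation, which I expect to be the main point requiring care, concerns the measure component. I would first show that the state-measure of \emph{every} equilibrium of the extended game is atomic and supported on the grid $X_n$. This is forced by construction: by (\ref{transform:transitionprob}) the extended kernel $\hat p_n(\cdot|x,a,\mu^{\mathcal C})=\sum_{i=1}^{k_n}\delta_{x_{n,i}}(\cdot)\,p_n(x_{n,i}|T_n(x),a,\mu^{\mathcal D})$ is concentrated on $\{x_{n,i}\}_{i=1}^{k_n}$, hence so is the output of the ergodicity operator $\hat H_{2,n}(Q,\mu^{\mathcal C})$, and therefore any fixed point $\mu^{\mathcal C}=\hat H_{2,n}(Q_{\mu},\mu^{\mathcal C})$ must itself be grid-supported. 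On this class of measures the two transforms are literally inverse to one another: since the cells partition $X$ with $x_{n,i}\in S_{n,i}$, we have $S_{n,i}\cap X_n=\{x_{n,i}\}$, so a grid-supported $\mu^{\mathcal C}$ obeys $\mu^{\mathcal C}(S_{n,i})=\mu^{\mathcal C}(\{x_{n,i}\})$. Consequently applying (\ref{conttodiscprob}) and then (\ref{disctocontprob}) returns $\mu^{\mathcal C}$, while applying (\ref{disctocontprob}) and then (\ref{conttodiscprob}) returns any $\mu^{\mathcal D}\in\mathcal P(X_n)$, which pins down the measure halves of $\Phi\circ\Psi$ and $\Psi\circ\Phi$ as the identity.

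For the policy component I would invoke strong convexity. By Assumption~\ref{assumption:b}-(ii) the averaged map $\hat H_{1,n}(Q,\mu^{\mathcal C})(x,\cdot)$ inherits $\rho$-strong convexity, so it has a \emph{unique} minimizer, and an optimal policy in the sense of Definition~\ref{extend:mfe} is necessarily the Dirac $\delta_{\hat f_n(x,Q,\mu^{\mathcal C})}$. The identity $f_n(T_n(x),Q,\mu^{\mathcal D})=\hat f_n(x,Q\circ(T_n,I),\mu^{\mathcal C})$ recorded after (\ref{quantextend}) shows this minimizer depends on $x$ only through $T_n(x)$; hence every equilibrium policy of the extended game is of the form $\pi(\cdot|T_n(\cdot))$ for the quantized policy $\pi(\cdot|x_{n,i}):=\hat\pi(\cdot|x_{n,i})$, and the assignment $\pi\mapsto\pi(\cdot|T_n(\cdot))$ is injective because evaluating at the representatives recovers $\pi$. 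Combining the measure and policy halves shows $\Phi$ and $\Psi$ are mutual inverses, so the two equilibrium sets are in bijection; uniqueness therefore transfers in both directions, which completes the proof. The one subtlety to flag is that policies are compared only up to $\mu$-a.e.\ equivalence, but the uniqueness of the minimizer makes the equilibrium policy canonical on the support of $\mu^{\mathcal C}$, i.e.\ on the grid, so the identification is unambiguous.
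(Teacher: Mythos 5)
Your proposal is correct and follows essentially the same route as the paper: establish that every extended-game equilibrium has a grid-supported (atomic) state measure because $\hat p_n$ is concentrated on $X_n$ and $\hat H_{1,n}$ depends on $x$ only through $T_n(x)$, then use the correspondence of Proposition~\ref{prop:c} together with uniqueness of the optimal $Q$-function (and of its minimizer) to transfer uniqueness in both directions. Your explicit bijection framing and the remark about $\mu$-a.e.\ identification of policies are slightly more careful than the paper's terse argument, but the underlying ideas are identical.
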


\begin{proof}

To show that uniqueness of the MFE of the quantized game implies that of the extended game, we first observe that the state measures of the MFE for the extended game are also atomic. Indeed, since $\hat H_{1,n}(\cdot,\cdot)(x,\cdot) = \hat H_{1,n}(\cdot,\cdot)(y,\cdot)$ for any $x,y \in S_{n,i},$ optimal policy is constant on each partition $S_{n,i}.$ Thus, if $(\hat \pi, \hat \mu^{\mathcal C})$ is a mean field equilibrium for the extended game, due to the invariance condition, $\hat \mu^{\mathcal C}$ is atomic and completely determined by its values on the partition $(S_{n,i})_{i=1}^{k_n}.$ Thus, if the quantized game has a unique mean field equilibrium, any two mean field equilibria $(\hat \pi_1,\hat \mu^{\mathcal C}_1)$ and $(\hat \pi_2,\hat \mu^{\mathcal C}_2)$ of the extended game induce the same discrete state-measures $\hat \mu^{\mathcal D}_1=\hat \mu^{\mathcal D}_2$ that satisfies (\ref{conttodiscprob}). In turn, this means that $\hat \mu^{\mathcal C}_1=\hat \mu^{\mathcal C}_2$ and consequently the optimal policies must also be the same due to the uniqueness of optimal $Q$-functions. Therefore, there is a unique mean-field equilibrium for the extended game.

The reverse implication can be obtained in a similar fashion.
\end{proof}

We are now in position to show that iterations of the extended games converge under the same conditions we have imposed on the quantized games in previous section. Main differences here will be, compared to the quantized games, that the starting point of the algorithm should be an element that we obtain from the quantized model and the convergence of the iterations is not obtained under contraction on the extended space but through the convergence of the iterations of the quantized games.

\begin{proposition}\label{prop:d}
Suppose $k_1<k_2$ and $k<1$ holds. Under Assumption \ref{assumption:b}, assuming that $n$ is sufficiently large, any iteration of the operator $\hat H_n$ that starts from an initial datum $(Q  \circ (T_n,I),\mu^{\mathcal C}) \in B(X\times A) \times \mathcal P(X),$ where $Q \in \mathcal C(n)$ and $\mu^{\mathcal C}$ is obtained from some $\mu^{\mathcal D} \in \mathcal P(X_n)$  as described before, converges to a fixed point in $((B(X \times A),\|\cdot\|_{\infty}),(\mathcal P(X),W_1))$ where $\| \cdot \|_{\infty}$ is the uniform norm over $B(X \times A).$
\end{proposition}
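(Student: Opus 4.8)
The plan is to show that the iterations of $\hat H_n$ on the extended game are exactly the ``continuous lifts'' of the iterations of $H_n$ on the quantized game, and then to import the convergence of the quantized value iteration already guaranteed by Lemma~\ref{lemma:e}. Concretely, I would prove by induction that if the $m$-th iterate has the special form $(Q_m\circ(T_n,I),\mu_m^{\mathcal C})$ with $Q_m\in\mathcal C(n)$ and $\mu_m^{\mathcal C}$ induced from $\mu_m^{\mathcal D}\in\mathcal P(X_n)$ via \eqref{disctocontprob}, then the next iterate has the same form, with underlying discrete pair exactly $(Q_{m+1},\mu_{m+1}^{\mathcal D})=H_n(Q_m,\mu_m^{\mathcal D})$. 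The base case holds by the hypothesis on the initial datum.

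For the $Q$-component the induction step is immediate from \eqref{quantextend}: $\hat H_{1,n}(Q_m\circ(T_n,I),\mu_m^{\mathcal C})=H_{1,n}(Q_m,\mu_m^{\mathcal D})\circ(T_n,I)=Q_{m+1}\circ(T_n,I)$, and $Q_{m+1}\in\mathcal C(n)$ by Proposition~\ref{prop:a}, so the induction remains inside the admissible class. The substantive step is the measure component, which is the single-iteration analogue of the fixed-point computation in Proposition~\ref{prop:c}: expanding $\hat H_{2,n}(Q_m\circ(T_n,I),\mu_m^{\mathcal C})$, substituting $\mu_m^{\mathcal C}=\sum_i\mu_m^{\mathcal D}(x_{n,i})\delta_{x_{n,i}}$, using the minimizer identity $\hat f_n(x,Q_m\circ(T_n,I),\mu_m^{\mathcal C})=f_n(T_n(x),Q_m,\mu_m^{\mathcal D})$ (valid since $\hat H_{1,n}(\cdot,\cdot)(x,\cdot)$ inherits strong convexity, hence a unique minimizer, from $H_{1,n}$), and finally the pushforward relation \eqref{transform:transitionprob}, one collapses the integral over $X$ to a finite sum and recognizes the inner sum as $H_{2,n}(Q_m,\mu_m^{\mathcal D})(x_{n,j})=\mu_{m+1}^{\mathcal D}(x_{n,j})$. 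This yields $\hat H_{2,n}(Q_m\circ(T_n,I),\mu_m^{\mathcal C})=\mu_{m+1}^{\mathcal C}$, closing the induction.

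Granting the commutation, I would invoke Lemma~\ref{lemma:e}: under $k_1<k_2$, $k<1$, Assumption~\ref{assumption:b}, and $n$ large enough, $H_n$ is a contraction, so the quantized value iteration $(Q_m,\mu_m^{\mathcal D})$ converges to its unique fixed point $(Q_*,\mu_*^{\mathcal D})$, with $Q_m\to Q_*$ in $\|\cdot\|_\infty$ (equivalently in the $w$-norm, these being equivalent on the finite set $X_n$ since $w\geq 1$ is bounded there) and $\mu_m^{\mathcal D}\to\mu_*^{\mathcal D}$ in $W_1$. These two convergences transfer to the extended game through the identities just established: since $T_n$ is onto $X_n$, $\|Q_m\circ(T_n,I)-Q_*\circ(T_n,I)\|_\infty=\|Q_m-Q_*\|_\infty\to 0$, while the Wasserstein domination $W_1(\mu_m^{\mathcal C},\mu_*^{\mathcal C})\le W_1(\mu_m^{\mathcal D},\mu_*^{\mathcal D})$ proved within Lemma~\ref{lemma:c} gives $\mu_m^{\mathcal C}\to\mu_*^{\mathcal C}$ in $W_1$. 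Hence the $\hat H_n$-iterates converge to $(Q_*\circ(T_n,I),\mu_*^{\mathcal C})$ in $((B(X\times A),\|\cdot\|_\infty),(\mathcal P(X),W_1))$, and applying the commutation relation once more (or directly Proposition~\ref{prop:c}) shows this limit is a fixed point of $\hat H_n$.

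I expect the measure-component commutation to be the main obstacle, since it is the only place where one must combine three structural identities at once — the disintegration of $\mu^{\mathcal C}$, the minimizer identity, and the pushforward relation \eqref{transform:transitionprob} — whereas the $Q$-component and the final convergence transfer are essentially bookkeeping once the norm and Wasserstein comparisons from Lemma~\ref{lemma:c} are in hand. A secondary point to keep clean is that convergence is asserted in $\|\cdot\|_\infty$ on $B(X\times A)$ even though the extended $Q$-iterates are discontinuous (constant on each $S_{n,i}$); this is harmless precisely because the comparison reduces everything to the finite set $X_n$, where $\|\cdot\|_\infty$ and the $w$-norm coincide.
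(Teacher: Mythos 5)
Your proposal is correct and follows essentially the same route as the paper: both establish that the $\hat H_n$-iterates started from a lifted datum are exactly the lifts $(Q_m\circ(T_n,I),\mu_m^{\mathcal C})$ of the quantized iterates (via the identity \eqref{quantextend}, the minimizer correspondence, and the pushforward relation \eqref{transform:transitionprob}), then import the contraction-based convergence of $H_n$ from Lemma~\ref{lemma:e} and transfer it using the sup-norm identity over $X_n$ and the domination $W_1(\mu^{\mathcal C}_m,\mu^{\mathcal C}_\infty)\le W_1(\mu^{\mathcal D}_m,\mu^{\mathcal D}_\infty)$ from Lemma~\ref{lemma:c}. The only cosmetic difference is that the paper identifies the limit as the unique fixed point of the restricted operator via Proposition~\ref{prop:3}, whereas you close by applying the commutation once more, which amounts to the same thing.
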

\begin{proof}
Let 
\[
\hat {\mathcal C}(n) = \{ Q\circ (T_n,I) \in B(X\times A): Q \in \mathcal C(n) \}.
\] Since $Q \in \mathcal C(n)$ is Borel measurable, for any $Q \in \mathcal C(n)$, we have $Q \circ (T_n,I) \in \hat {\mathcal C}(n)$. By $ \hat H_n |_{\hat {\mathcal C}(n)\times \mathcal P(X)}$ we denote the restriction of $\hat H_n$ to the domain $\hat{\mathcal C}(n) \times \mathcal P(X) \subset B(X\times A) \times \mathcal P(X).$
Then, for $\mu^{\mathcal C} \in \mathcal P(X)$ that is obtained from some $\mu^{\mathcal D} \in \mathcal P(X_n)$ as in (\ref{disctocontprob}), we have 
\[ \hat H_{1,n} |_{\hat { \mathcal C}(n)\times \mathcal P(X)} (Q,\mu^{\mathcal C}) = H_{1,n}(\tilde Q,\mu^{\mathcal D})\circ(T_n,I) \]
and \[ \hat H_{2,n} |_{\hat {\mathcal C}(n)\times \mathcal P(X)}(Q,\mu^{\mathcal C}) = H_{2,n}(\tilde Q,\mu^{\mathcal D}), \] 
where $\tilde Q \circ (T_n,I) = Q$. By applying Proposition \ref{prop:3} on $\hat H_n |_{\hat {\mathcal C}(n)\times \mathcal P(X)}$, we see that there exists a unique fixed point for the operator $\hat H_n |_{\hat {\mathcal C}(n)\times \mathcal P(X)},$ as under our assumptions the operator $H_n$ has a unique fixed point by Lemma \ref{lemma:e}. 

Let $(Q^{\mathcal C}_0,\mu^{\mathcal C}_0) \in \hat {\mathcal C}(n) \times \mathcal P(X)$ be an initial datum such that $\mu^{\mathcal C}_0$ is obtained from some $\mu^{\mathcal D}_0 \in \mathcal P(X_n)$ as in (\ref{disctocontprob}). Then, there exists $Q^{\mathcal D}_0 \in \mathcal C(n)$ such that $Q^{\mathcal D}_0\circ (T_n,I) = Q^{\mathcal C}_0.$ Recursively, for $n \geq 1,$ define $(Q^{\mathcal D}_n,\mu^{\mathcal D}_n) = H_n(Q^{\mathcal D}_{n-1},\mu^{\mathcal D}_{n-1}).$ Consequently, for the family
$((Q^{\mathcal C}_n,\mu^{\mathcal C}_n))_n$ defined by the recursion $(Q^{\mathcal C}_n,\mu^{\mathcal C}_n) =\hat H_n(Q^{\mathcal C}_{n-1},\mu^{\mathcal C}_{n-1}),$ $n \geq 1,$ we have that $\mu^{\mathcal C}_n$ are obtained by $\mu^{\mathcal D}_n$ via (\ref{disctocontprob}) and $Q^{\mathcal D}_n\circ (T_n,I) = Q^{\mathcal C}_n.$ If $(Q^{\mathcal D}_{\infty},\mu^{\mathcal D}_{\infty}) = H_n(Q^{\mathcal D}_{\infty},\mu^{\mathcal D}_{\infty}),$ then for $Q^{\mathcal C}_{\infty}=Q^{\mathcal D}_{\infty}\circ (T_n,I)$ and $\mu^{\mathcal C}_{\infty} \in \mathcal P(X)$ obtained from $\mu^{\mathcal D}_{\infty}$, the following relations hold:
\[ 
\sup_{(x,a) \in X \times A} | Q^{\mathcal C}_{\infty}(x,a)-Q^{\mathcal C}_{m}(x,a) | = \sup_{(x_n,a) \in X_n \times A} | Q^{\mathcal D}_{\infty} (x_n,a) - Q^{\mathcal D}_{m}(x_n,a)|
\]
and
\[
W_1(\mu^{\mathcal C}_{\infty},\mu^{\mathcal C}_{n}) \leq W_1(\mu^{\mathcal D}_{\infty},\mu^{\mathcal D}_n)
\]
as in Lemma \ref{lemma:c}.
These relations give that the operator $\hat H_{1,n}(Q^{\mathcal C}_m,\mu^{\mathcal C}_m)$ converges to $Q^{\mathcal C}_{\infty}$ in uniform norm and the operator $\hat H_{2,n}(Q^{\mathcal C}_m, \mu^{\mathcal C}_m)$ converges weakly to $\mu^{\mathcal C}_{\infty}.$ Since the fixed point of the operator $\hat H_n |_{\hat {\mathcal C}(n)\times \mathcal P(X)}$ is unique, iterations that start from an element in $\hat {\mathcal C}(n) \times \mathcal P(X)$ will converge to this unique fixed point, so we are done.
\end{proof}

\subsection{Approximation of the Nominal Model Through Quantization}

In this subsection, we will show that through the quantized games, we can find an approximate mean-field equilibrium for the nominal model explicitly via the value-iteration algorithm by using the extended mean-field games.

We will start by showing the continuous convergence of the extended cost function $\hat c_n$ and the extended transition probability $\hat p_n$ to $c$ and $p$ respectively, in their respective topologies, which is required for our robustness result.

\begin{lemma}\label{lemma:d}
For any $(x,\mu) \in X \times \mathcal P(X),$ whenever we have a family $(x_n,\mu^{\mathcal C}_n)_n \subset X \times \mathcal P(X)$ that converges  to $(x,\mu)$ in $X \times \mathcal P(X)$, the following results hold:
\begin{itemize}
\item[i)] The cost function $\hat c_n(x_n,\cdot,\mu^{\mathcal C}_n)$ converges uniformly to $c(x,\cdot,\mu)$ on $A.$
\item[ii)] The weakly continuous family of transition probabilities $\hat p_n(\cdot|x_n,a,\mu^{\mathcal C}_n)$ converges to $p(\cdot | x,a,\mu)$ weakly.
\end{itemize}
\end{lemma}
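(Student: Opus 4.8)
The plan is to reduce both claims to the single geometric fact that the partition cells shrink as the quantization is refined: $\diam(S_{n,i})<2/n$ and $d_X(z,T_n(z))<1/n$ for every $z\in X$. Consequently, averaging a quantity over the cell $S_{n,i_n}$ that contains $x_n$ differs from evaluating it at $x_n$ only by an error of order $1/n$, and the convergence $(x_n,\mu^{\mathcal C}_n)\to(x,\mu)$ then closes the argument through the hypotheses on the nominal model. Throughout, I write $i_n$ for the index with $x_n\in S_{n,i_n}$, and recall that on the compact space $X$ weak convergence in $\mathcal P(X)$ coincides with $W_1$-convergence.

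For (i), since $\hat c_n(x_n,a,\mu^{\mathcal C}_n)=\int_{S_{n,i_n}}c(y,a,\mu^{\mathcal C}_n)\,\nu_{n,i_n}(dy)$, I would split
$$
\sup_a\bigl|\hat c_n(x_n,a,\mu^{\mathcal C}_n)-c(x,a,\mu)\bigr|\leq\underbrace{\sup_a\int_{S_{n,i_n}}\!\bigl|c(y,a,\mu^{\mathcal C}_n)-c(x_n,a,\mu^{\mathcal C}_n)\bigr|\,\nu_{n,i_n}(dy)}_{\mathrm{(I)}}+\underbrace{\sup_a\bigl|c(x_n,a,\mu^{\mathcal C}_n)-c(x,a,\mu)\bigr|}_{\mathrm{(II)}}.
$$
Because $y,x_n\in S_{n,i_n}$ forces $d_X(y,x_n)<2/n$, the state-Lipschitz bound of Assumption~\ref{assumption:b}-(i) gives $\mathrm{(I)}\leq 2L_2/n\to0$. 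A further triangle inequality bounds $\mathrm{(II)}$ by $L_1W_1(\mu^{\mathcal C}_n,\mu)+L_2 d_X(x_n,x)$ via the measure- and state-Lipschitz bounds of the same assumption, and both summands vanish since $\mu^{\mathcal C}_n\to\mu$ and $x_n\to x$.

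For (ii), fix $a$ and a bounded continuous $g:X\to\mathbb R$. Identifying $\sum_p\delta_{x_{n,p}}\,p(S_{n,p}|y,a,\mu^{\mathcal C}_n)$ with the pushforward $T_n\star p(\cdot|y,a,\mu^{\mathcal C}_n)$, I would write
$$
\int_X g\,d\hat p_n(\cdot|x_n,a,\mu^{\mathcal C}_n)=\int_{S_{n,i_n}}\!\int_X g(T_n(z))\,p(dz|y,a,\mu^{\mathcal C}_n)\,\nu_{n,i_n}(dy).
$$
Compactness of $X$ makes $g$ uniformly continuous, and $d_X(z,T_n(z))<1/n$ yields $\sup_z|g(T_n(z))-g(z)|\to0$, so $g\circ T_n$ may be replaced by $g$ at a vanishing cost. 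It then remains to show $\int_{S_{n,i_n}}\bigl(\int_X g\,dp(\cdot|y,a,\mu^{\mathcal C}_n)\bigr)\nu_{n,i_n}(dy)\to\int_X g\,dp(\cdot|x,a,\mu)$, and this follows from the weak continuity of $p$ (Assumption~\ref{assumption:b}-(i)): every $y\in S_{n,i_n}$ satisfies $d_X(y,x_n)<2/n$, hence $y\to x$, so jointly with $\mu^{\mathcal C}_n\to\mu$ the inner integrals converge to their nominal value.

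The main obstacle is uniformizing this pointwise weak continuity over the shrinking averaging cell $S_{n,i_n}$, since the hypothesis only provides convergence along fixed sequences. I would handle it by contradiction: if $\sup_{y\in S_{n,i_n}}\bigl|\int_X g\,dp(\cdot|y,a,\mu^{\mathcal C}_n)-\int_X g\,dp(\cdot|x,a,\mu)\bigr|$ did not tend to $0$, one could extract points $y_{n_k}\in S_{n_k,i_{n_k}}$ violating the bound; but $y_{n_k}\to x$ and $\mu^{\mathcal C}_{n_k}\to\mu$, so weak continuity forces the expression to $0$, a contradiction. Since $\nu_{n,i_n}$ is a probability measure, controlling this supremum controls the average, and the same device makes $\mathrm{(I)}$ uniform in $a$ while reconciling the simultaneous limits $y\to x$ and $\mu^{\mathcal C}_n\to\mu$; everything else reduces to bookkeeping with the diameter bounds and the fixed Lipschitz constants.
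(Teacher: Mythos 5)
Your proposal is correct, and its overall skeleton matches the paper's: both split off the cell-averaging error (controlled by $\diam(S_{n,i})<2/n$), both replace $g\circ T_n$ by $g$ at a cost of $\sup_t|g(T_n(t))-g(t)|\to 0$ via uniform continuity on the compact $X$, and both then pass to the nominal model. The genuine difference is in how the remaining transition term is estimated. The paper uses the quantitative total-variation Lipschitz bounds of Assumption~\ref{assumption:b}-(i), namely $\|p(\cdot|z,a,\mu^{\mathcal C}_n)-p(\cdot|x,a,\mu)\|_{TV}\le K_2\,d_X(z,x)+K_1\,W_1(\mu^{\mathcal C}_n,\mu)$, which immediately dominates the averaged integrand by $\sup|g|\bigl(2K_2/n+K_1W_1(\mu^{\mathcal C}_n,\mu)\bigr)$; this gives an explicit rate and, at no extra cost, uniformity over all of $(x,a)\in X\times A$ rather than along the single sequence $x_n\to x$. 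You instead invoke only the qualitative weak continuity of $p$ and uniformize over the shrinking cell by a subsequence-extraction/contradiction argument; this is valid (any $y_{n_k}\in S_{n_k,i_{n_k}}$ indeed converges to $x$, so joint weak continuity applies), and it has the merit of not leaning on the TV-Lipschitz hypotheses for this step, but it produces no convergence rate and requires the extra compactness device to turn pointwise continuity into the needed uniform statement. Since the lemma feeds into Theorem~\ref{thrm} only as a qualitative continuous-convergence hypothesis (and weak convergence coincides with $W_1$-convergence on the compact $X$), either estimate suffices; the paper's version is simply sharper and shorter given that the TV bounds are assumed anyway. Your treatment of the cost function in part (i) is essentially identical to the paper's (which only sketches it).
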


\begin{proof}
We start by showing that $\hat p_n(\bullet | x,a,\mu^{\mathcal C}_n)$ is a weakly continuous transition probability on $X \times A \times \mathcal P(X).$ For an arbitrary bounded continuous function $g$ on $X$, using the definition of $\hat p_n$ and Assumption \ref{assumption:b}, we obtain 
\begin{align*}
& \sup_{(x,a) \in X \times A \,} \bigg | \int_X g(t)\hat p_n(dt|x,a,\mu_m)-\int _X g(t) \hat p_n(dt|x,a,\mu) \bigg |
\\& =\sup_{(x,a) \in X \times A} \, \bigg | \sum_{t\in X_n} g(t) p(S_{n,i(T_n(t))}|x,a,\mu_m) - \sum_{t\in X_n} g(t)  p(S_{n,i(T_n(t))}|x,a,\mu) \bigg |
\\& \leq \sup_{t'\in X_n}|g(t')| \, \sup_{(x,a) \in X \times A} \, \| p(\cdot|x,a,\mu_m)-p(\cdot|x,a,\mu) \|_{TV} 
\\& \leq \sup_{t'\in X_n}|g(t')|\bigg(K_1 W_1(\mu_m,\mu) \bigg).
\end{align*}
Since $X$ and $A$ are compact it follows that, for each $n$, the transition probability $\hat p_n$ is weakly continuous since continuous convergence can be characterized by convergence on the compact sets.

Note that using the definition of $\hat p_n$ and the bounds in Assumption \ref{assumption:b}, it follows that
\begin{align*}
& \sup_{ (x,a) \in X \times A} \bigg | \int_X g(t)\hat p_n(dt|x,a,\mu^{\mathcal C}_n) - \int_X g(t)p(dt|x,a,\mu) \bigg |
\\&=\sup_{(x,a) \in X \times A}\bigg | \int_{S_{n,i(T_n(x))}}	\sum_{t' \in X_n} g(t')p(S_{n,i(T_n(t'))}|z,a,\mu^{\mathcal C}_n)\nu_{n,i(T_n(x))}(dz) 
\\& \qquad \qquad \qquad - \int_X g(t)p(dt|x,a,\mu) \bigg |
\\& =\sup_{(x,a) \in X \times A} \bigg | \int_{S_{n,i(T_n(x))}}\int_X g\circ T_n(t)p(dt|z,a,\mu^{\mathcal C}_n)\nu_{n,i(T_n(x))}(dz) 
\\& \qquad \qquad \qquad - \int_{S_{n,i(T_n(x))}} \int_X g(t)p(dt|x,a,\mu)\nu_{n,i(T_n(x))}(dz) \bigg |
\\&\leq \sup_{t \in X} |g(t)| \sup_{(x,z)\in X \times S_{n,i(T_n(x))}} \bigg( K_2d_X(z,x) + K_1W_1(\mu^{\mathcal C}_n,\mu)\bigg)+\sup_{t \in X} | g\circ T_n(t)-g(t) |
\\& \leq \sup_{t \in X} |g(t)| \bigg( \frac{K_2}{n} + K_1W_1(\mu^{\mathcal C}_n,\mu) \bigg) + \sup_{t \in X} |g\circ T_n(t)-g(t)|. 
\end{align*}
Since $d_X(T_n(t),t)\leq 1/n$ for any $t \in X,$ we have that $\sup_{t \in X}|g\circ T_n(t)-g(t)| \rightarrow 0$ as $n \rightarrow \infty$ by continuity. The first term above also vanishes as $n \rightarrow \infty $ hence $\hat p_n(\cdot | x,a,\mu^{\mathcal C}_n)$ converges weakly to $p(\cdot|x,a,\mu)$ uniformly in $(x,a).$ 

In a similar fashion, one can show that $\hat c_n(\cdot,\cdot,\mu^{\mathcal C}_n)$ converges uniformly to $c(\cdot,\cdot,\mu)$ uniformly on $X \times A.$ 
\end{proof}

We are in position to state our main results of this section.

\begin{theorem}\label{thrm4}
Suppose $k_1<k_2$ and $k<1$ holds. Under Assumption  \ref{assumption:b}, mean-field equilibria for the quantized games induce mean-field equilibria in the extended setting that converge to that of the nominal game.
\end{theorem}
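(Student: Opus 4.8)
The plan is to obtain Theorem~\ref{thrm4} by recognizing the extended games $\hat H_n$ as a family of perturbed models converging to the nominal one, invoking the robustness result Theorem~\ref{thrm}, and finally reinterpreting its conclusion through Proposition~\ref{prop:c}. Concretely, one treats $(\hat c_n,\hat p_n)$ as the perturbed data of Theorem~\ref{thrm} (using the remark preceding that theorem, which permits the cost to vary provided $c_n\to c$ continuously) and checks its five hypotheses.

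First I would verify conditions (i)--(iv). Since we are in the compact setting of Section~\ref{q-mfgs}, $X$ is compact, hence locally compact, which gives (i). Lemma~\ref{lemma:d} supplies exactly the two convergences needed: $\hat p_n(\cdot|x_n,a,\mu^{\mathcal C}_n)\to p(\cdot|x,a,\mu)$ weakly and $\hat c_n(x_n,\cdot,\mu^{\mathcal C}_n)\to c(x,\cdot,\mu)$ uniformly whenever $(x_n,\mu^{\mathcal C}_n)\to(x,\mu)$. On compact $X$ the $W_1$ metric metrizes the weak topology \cite{Vil09}, so the weak convergence of $\hat p_n$ is the $W_1$-continuous convergence demanded in (ii), and the continuous bounded weight $w_{\max}$ makes (iii) a consequence of the same weak convergence. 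Condition (iv), namely Assumption~\ref{assump1} for the nominal model, follows from Assumption~\ref{assumption:b} through the compact-space comparison $W_1(\mu,\nu)\le \diam(X)\|\mu-\nu\|_{TV}$ recorded after Assumption~\ref{assumption:b}.

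The crux is condition (v): convergence of the value iteration to fixed points for each extended game, together with convergence to a \emph{unique} fixed point for the nominal model. For the nominal model this is delivered by Theorem~\ref{thrm1} and its subsequent corollary, which yield a unique fixed point and convergence of the iterations. For the extended games one cannot appeal to Theorem~\ref{thrm1} directly, because the $Q$-functions produced by iterating $\hat H_n$ are piecewise constant on the cells $S_{n,i}$ and hence not Lipschitz on $X$; this is precisely the obstacle that motivated Propositions~\ref{prop:3} and~\ref{prop:d}, and it is where the extra machinery of Section~\ref{e-mfgs} is indispensable. Instead, I would invoke Proposition~\ref{prop:d}, which guarantees (for $n$ large, under $k_1<k_2$, $k<1$, and Assumption~\ref{assumption:b}) that every iteration of $\hat H_n$ started from a datum of the form $(Q\circ(T_n,I),\mu^{\mathcal C})$ converges in $(\|\cdot\|_\infty,W_1)$ to a fixed point, with uniqueness inherited from Lemma~\ref{lemma:e} via Proposition~\ref{prop:3}.

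With all five hypotheses in force, Theorem~\ref{thrm} yields that the fixed points $(\hat Q^n_*,\hat\mu^n_*)$ of the extended value iteration converge to the unique fixed point $(Q_*,\mu_*)$ of the nominal model, and in particular $\hat\mu^n_*\otimes\hat\pi^n_*\to\mu_*\otimes\pi_*$ weakly, where $\pi_*$ is the nominal equilibrium policy. Because both the extended and the nominal iterations converge to unique fixed points, the limit is independent of the shared starting datum, so the common-initial-datum hypothesis of Theorem~\ref{thrm} is harmless and can be met, for instance, by a constant $Q_0$ (which lies in every $\hat{\mathcal C}(n)$) together with any admissible atomic $\mu_0$. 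Finally, Proposition~\ref{prop:c} identifies each extended equilibrium $(\hat\pi^n_*,\hat\mu^n_*)$ as the one induced by the corresponding quantized equilibrium of $H_n$, so the displayed convergence is exactly the assertion that the quantized equilibria induce extended equilibria converging to that of the nominal game.
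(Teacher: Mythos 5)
Your proposal is correct and follows essentially the same route as the paper: the paper's own proof simply notes that Theorem~\ref{thrm1} gives uniqueness of the nominal equilibrium and that Lemma~\ref{lemma:d} together with Proposition~\ref{prop:d} verify the hypotheses of the robustness result Theorem~\ref{thrm} for the extended games. Your write-up is a more detailed version of that argument, additionally making explicit the role of Proposition~\ref{prop:c} and the choice of common initial datum, which the paper leaves implicit.
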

\begin{proof}
When the conditions $k_1 < k_2$ and $k < 1$ hold, as stated in Theorem \ref{thrm1}, the original game possesses a unique mean-field equilibrium. Utilizing Lemma \ref{lemma:d} and Proposition \ref{prop:d}, we find that the conditions outlined in Theorem \ref{thrm} are satisfied by the extended games derived from the quantized ones. This completes the proof.
\end{proof}

Now, let us try to summarize what we did so far in this section.  Since in case of compact Polish spaces, weak convergence of probability is metrized by the $1$-Wasserstein metric, as a consequence of Theorem \ref{thrm4}, we see that the mean-field equilibrium of the quantized game has an extension that converges to that of the nominal one. Under the assumptions of the Theorem \ref{thrm4}, we can obtain $\epsilon$-near mean-field equilibrium to the original mean-field equilibrium $(\pi_*,\mu_*)$ by picking large enough $n$ such that we can obtain mean-field equilibrium, $(\pi^n,\mu^n)$ for the $n^{th}$ quantized problem which satisfies the relations
\[
 W_1( \pi^n(\cdot|T_n(x_n)),\pi_*(\cdot|x))< \epsilon, 
\]
and
\[
W_1(\mu^n,\mu_*) < \epsilon ,
\]
when $n$ is large enough, where $x_n \rightarrow x$ in $X$. Here the second relation is due to the fact that \( 1 \)-Wasserstein metric metrizes the weak convergence topology when the space \( X \) is compact. We will call $(\pi^n(\cdot|T_n(\cdot)),\mu^n)$ as an $\epsilon$-near mean-field equilibrium for the nominal model.
	
We have the following analogue of Theorem \ref{thrm:a} in case of quantization:
\begin{corollary}
Assume the conditions of Theorem \ref{thrm4} holds. Suppose $n$ is sufficiently large so that $n^{th}$ quantized game induces an $\epsilon/2$-near mean-field equilibrium to that of the nominal model. Let $(\pi_*,\mu_*)$ be the mean-field equilibrium of the nominal model. Then sufficiently large iteration of the value iteration algorithm of the $n^{th}$ quantized game induces an $\epsilon$-near mean-field equilibrium to the nominal model, i.e. if $\pi^{m,n}(\cdot|T_n(\cdot))$ is optimal policy for the extended game under the state measure that is obtained as $m^{th}$ iteration of $n^{
th}$ quantized game, $\mu^{m,n},$ then 
\[
 W_1(\pi^{m,n}(\cdot|T_n(x)),\pi_*(\cdot|x)) \leq \epsilon,
\]
and
\[
 W_1(\mu^{m,n},\mu_*) \leq \epsilon,
\]
whenever $m$ and $n$ are sufficiently large.
\end{corollary}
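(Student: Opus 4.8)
The plan is to run the same triangle-inequality argument as in the proof of Theorem~\ref{thrm:a}, inserting the extended mean-field equilibrium of the $n$th quantized game as an intermediary between the value-iteration iterate and the nominal equilibrium. Write $(\pi^n,\mu^n)$ for the mean-field equilibrium of the $n$th quantized game extended to $X$ (so $\pi^n(\cdot|T_n(\cdot))$ is the induced Borel policy and $\mu^n$ its atomic invariant measure on $X$). For the policy component, the triangle inequality for $W_1$ gives
\begin{align*}
W_1(\pi^{m,n}(\cdot|T_n(x)),\pi_*(\cdot|x))
&\leq W_1(\pi^{m,n}(\cdot|T_n(x)),\pi^n(\cdot|T_n(x))) \\
&\qquad + W_1(\pi^n(\cdot|T_n(x)),\pi_*(\cdot|x)),
\end{align*}
and analogously $W_1(\mu^{m,n},\mu_*) \leq W_1(\mu^{m,n},\mu^n) + W_1(\mu^n,\mu_*)$.

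First I would fix $n$. By Theorem~\ref{thrm4} the extended equilibria converge to the nominal one, so $W_1(\pi^n(\cdot|T_n(x)),\pi_*(\cdot|x)) \to 0$ and $W_1(\mu^n,\mu_*) \to 0$ as $n \to \infty$; this is precisely the hypothesis that the $n$th quantized game induces an $\epsilon/2$-near equilibrium, and it bounds the second summand in each display by $\epsilon/2$ once $n$ is large enough (note $T_n(x) \to x$ since $d_X(T_n(x),x) < 1/n$).

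Next, with $n$ held fixed, I would invoke the convergence of value iteration on the extended game. By Lemma~\ref{lemma:e} the $n$th quantized game satisfies the hypotheses of Theorem~\ref{thrm1}, and by Proposition~\ref{prop:d} the iterates $(Q^{m,n},\mu^{m,n})$ of $\hat H_n$, started from an admissible datum, converge as $m \to \infty$ to the unique fixed point $(Q^n,\mu^n)$, with $\|Q^{m,n}-Q^n\|_{\infty} \to 0$ and $W_1(\mu^{m,n},\mu^n) \to 0$. Since $\hat H_{1,n}(Q^n,\mu^n)(x,\cdot)$ is $\rho$-strongly convex with a unique minimizer, uniform convergence of the $Q$-functions at the fixed state $T_n(x)$ lets me apply Lemma~\ref{keylemma} (with the constant sequence $T_n(x)$) to conclude that the minimizers, and hence the induced Dirac policies, converge: $W_1(\pi^{m,n}(\cdot|T_n(x)),\pi^n(\cdot|T_n(x))) \to 0$ as $m \to \infty$. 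Thus the first summand in each display can be made $\leq \epsilon/2$ by taking $m$ large enough (depending on the already-chosen $n$). Adding the two bounds yields the claimed $\epsilon$-estimates.

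The argument is essentially bookkeeping layered on top of Theorem~\ref{thrm4} and Proposition~\ref{prop:d}, so the one point requiring care is the order of quantifiers: one must first commit to an $n$ large enough for the $\epsilon/2$-approximation of the nominal equilibrium by the extended quantized equilibrium, and only then choose $m$ large enough for the value-iteration iterates to come within $\epsilon/2$ of that fixed point. I expect the main (mild) obstacle to be verifying that the uniform convergence of the iterated $Q$-functions transfers to $W_1$-convergence of the \emph{argmin} policies along $T_n(x)$; this is exactly what Lemma~\ref{keylemma} supplies via the strong convexity of Assumption~\ref{assumption:b}, so no genuinely new estimate is needed.
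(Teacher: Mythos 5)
Your proposal is correct and follows essentially the same route as the paper, whose proof is the one-line "Follows from Theorem \ref{thrm4} and Proposition \ref{prop:d}"; you simply make explicit the triangle-inequality decomposition through the extended quantized equilibrium and the use of Lemma \ref{keylemma} to pass from uniform convergence of the $Q$-iterates to $W_1$-convergence of the argmin policies, exactly as in the paper's proof of the analogous Theorem \ref{thrm:a}. Your remark on the order of quantifiers (fix $n$ first, then choose $m$) is the right point of care and is consistent with the statement's phrasing.
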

\begin{proof}
Follows from Theorem \ref{thrm4} and Proposition \ref{prop:d}.
\end{proof}

\section{Robustness of MFGs under General Setting}\label{g-robust}

In this section, we demonstrate that when dealing with MFGs with converging cost functions, transition probabilities, and discounted costs, their mean-field equilibria will converge to those of the limiting case.

Our result in this section represents an improvement over Theorem \ref{thrm} since we no longer require an iteration algorithm to obtain mean-field equilibria for approximate models. In place of that assumption, we assume that the weight function $w_{\max}$ on $X$ is a moment function. This assumption ensures that the set of possible mean-field terms, a subset of $\mathcal P(X)$, is tight.

Since we cannot ascertain in advance that iterations of the $Q$-functions in value iteration algorithms for approximate models and the nominal model will converge to fixed points, the nature of this result implies that obtaining suitably close approximations will be computationally unfeasible  in this setting. This is in contrast to Theorem \ref{thrm}, where we can find approximations of the limiting mean-field equilibrium with a finite number of iterations of the value iteration of the approximating models, as illustrated in Theorem \ref{thrm:a}. Therefore, we can consider the following result as the most general robustness result, but it may lack practical application unless certain conditions ensuring the convergence of the value iteration algorithm for both the approximating and nominal models are imposed.

\begin{theorem}\label{thrm2}
Suppose the following hold:
\begin{itemize}
\item[i)] The state space $X$ is locally compact.
\item[ii)]We have a family of transition probabilities $(p_n)_n$ on $X \times A \times \mathcal P(X)$ that converges to a transition probability $p$ continuously.
\item[iii)]We have a family of cost functions $(c_n)_n$ on $X \times A \times \mathcal P(X)$ that converges to the cost function $c$ continuously.
\item[iv)]We have a family of discounted costs, $(\beta _n)_n,$ that converges to $\beta.$ 
\item[v)] We also have 
\[
\int_X w_{max}(y)p_n(dy|x_n,a_n,\mu _n) \rightarrow \int_X w_{max}(y)p(dy|x,a,\mu)
\]
whenever $(x_n,a_n,\mu_n) \rightarrow (x,a,\mu)$ in $X \times A \times \mathcal P(X).$
\item[vi)] The function $w_{\max}$ is a moment function, and moreover, we have the following bounds: for each $n$
\begin{align*}
 c_n(x,a,\mu) &\leq Mw(x,a),
\\  \int_X w_{\max}(y) p_n(dy|x,a,\mu) &\leq \alpha w(x,a)
\end{align*}
such that $\alpha \beta_n <1$.
\end{itemize}
Moreover, for each $n,$ we define
\[
G_n(x,Q,\mu,a):X \times \mathcal D \times \mathcal P(X) \times A \rightarrow \mathbb R: c_n(x,a,\mu)+\beta_n \int_X Q_{\min}(y)p_n(dy|x,a,\mu)
\]
where $\mathcal D \subset B_{b,w_{\max}}(X \times A)$ is such that for each $Q \in \mathcal D,$ we have $G_n(\cdot,Q,\mu,\cdot) \in \mathcal D.$  Suppose that
the nominal MFG with the components 
$
\bigl( X, A, p, c, \beta \bigr) \nonumber
$
has an unique mean-field equilibrium $(\pi_0,\mu_0)$ and 
\[
G(x,Q,\mu,a):X \times \mathcal D \times \mathcal P(X) \times A \rightarrow \mathbb R: c(x,a,\mu)+\beta \int _X Q_{\min}(y)p(dy|x,a,\mu)
\]
has a unique minimizer for any tuple $(x,Q,\mu).$ For each $n$, let $(\pi_n,\mu_n)_{n}$ be a mean-field equilibrium for the MFG with the components 
\begin{align}
\bigl( X, A, p_n, c_n, \beta_n \bigr)_{n} \nonumber
\end{align}
Then $(\pi_n,\mu_n) \rightarrow (\pi_0,\mu_0)$, where $\pi_n \rightarrow \pi_0$ means that for any $x_n \rightarrow x$, we have $\pi_n(\cdot|x_n) \rightarrow \pi_0(\cdot|x)$ weakly. 
\end{theorem}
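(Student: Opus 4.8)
The plan is to combine a tightness/subsequence argument with the continuous dependence of the optimal $Q$-functions on the mean-field term, and then to close the loop using uniqueness of the nominal equilibrium, so that the whole analysis reduces to the convergence machinery already developed in Lemma~\ref{keylemma} and in Step~1 of Theorem~\ref{thrm}. Throughout, write $\eta_n \coloneqq \mu_n \otimes \pi_n \in \mathcal P(X\times A)$ and recall that, since $\pi_n$ is optimal for $\mu_n$, the policy satisfies $\supp \pi_n(\cdot|x) \subset \argmin_a Q_n(x,a)$, where $Q_n$ is the unique fixed point of the contraction $Q \mapsto G_n(\cdot,Q,\mu_n,\cdot)$ (contraction modulus $\alpha\beta_n<1$ by the drift bound in (vi), exactly as in the first estimate of Lemma~\ref{lemma:b}); similarly $\mu_n$ is invariant, $\mu_n(\cdot)=\int_{X\times A} p_n(\cdot|x,a,\mu_n)\,\eta_n(dx,da)$.

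First I would establish that $\{\mu_n\}$ is tight. This is precisely where the moment-function hypothesis on $w_{\max}$ together with the uniform bound in (vi) is used: it forces the candidate mean-field measures to have uniformly controlled $w_{\max}$-mass, and since the sublevel sets of a moment function are precompact, Markov's inequality yields tightness of $\{\mu_n\}$ in $\mathcal P(X)$; as $A$ is compact, $\{\eta_n\}$ is then tight in $\mathcal P(X\times A)$. Passing to a subsequence (not relabelled), assume $\mu_n \to \hat\mu$ weakly. It then suffices to prove that every such subsequential limit furnishes a nominal mean-field equilibrium, since uniqueness of $(\pi_0,\mu_0)$ will force $\hat\mu=\mu_0$ and, the limit being independent of the subsequence, will upgrade convergence to the full sequence.

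The core step is to show that the fixed-point $Q$-functions converge continuously, i.e. $Q_n(x_n,\cdot)\to Q_{\hat\mu}(x,\cdot)$ uniformly on $A$ whenever $x_n\to x$, where $Q_{\hat\mu}$ is the fixed point of $Q\mapsto G(\cdot,Q,\hat\mu,\cdot)$ for the nominal model. I would run value iteration at fixed measures: from a common $Q^{(0)}$ set $Q_n^{(k+1)}=G_n(\cdot,Q_n^{(k)},\mu_n,\cdot)$ and $Q_{\hat\mu}^{(k+1)}=G(\cdot,Q_{\hat\mu}^{(k)},\hat\mu,\cdot)$. For each fixed $k$, an induction identical to Step~1 of Theorem~\ref{thrm} gives $Q_n^{(k)}\to Q_{\hat\mu}^{(k)}$ continuously as $n\to\infty$; here the continuous convergences $c_n\to c$, $p_n\to p$, $\beta_n\to\beta$, together with assumption (v) and the generalized weak-convergence theorem \cite[Theorem 3.3]{MR705462}, are exactly what let one pass to the limit through the unbounded integrand $Q_{\min}^{(k)}$, which grows like $w_{\max}$. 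A uniform contraction estimate $\|Q_n-Q_n^{(k)}\|_w\le(\alpha\bar\beta)^k R$, with $\bar\beta=\sup_n\beta_n$ (so $\alpha\bar\beta<1$) and $R=\sup_n\|Q_n-Q^{(0)}\|_w<\infty$ (finite since $\|Q_n\|_w\le M/(1-\alpha\bar\beta)$), and the analogous bound for $Q_{\hat\mu}$, then permit an $\epsilon/3$ interchange of the limits $k\to\infty$ and $n\to\infty$, yielding continuous convergence $Q_n\to Q_{\hat\mu}$. I expect this double-limit interchange, and in particular the uniform control of the $w_{\max}$-weighted integrand, to be the main obstacle.

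With continuous convergence of the $Q$-functions in hand, the rest is routine. Since $G$ has a unique minimizer, $Q_{\hat\mu}(x,\cdot)=G(x,Q_{\hat\mu},\hat\mu,\cdot)$ has a unique minimizer for each $x$, so Lemma~\ref{keylemma} applies: with $\nu_n=\mu_n\to\hat\mu$ it gives $\eta_n\to\hat\mu\otimes\pi_{\hat\mu}$ weakly, where $\pi_{\hat\mu}(\cdot|x)=\delta_{\argmin_a Q_{\hat\mu}(x,a)}(\cdot)$, and with $\nu_n=\delta_{x_n}\to\delta_x$ it gives the pointwise statement $\pi_n(\cdot|x_n)\to\pi_{\hat\mu}(\cdot|x)$ weakly. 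To identify $(\pi_{\hat\mu},\hat\mu)$ as a nominal equilibrium I would pass to the limit in the invariance relation: for bounded continuous $g$ on $X$, the map $(x,a)\mapsto\int_X g\,dp_n(\cdot|x,a,\mu_n)$ converges continuously to $(x,a)\mapsto\int_X g\,dp(\cdot|x,a,\hat\mu)$ and is uniformly bounded, so \cite[Theorem 3.3]{MR705462} combined with $\eta_n\to\hat\mu\otimes\pi_{\hat\mu}$ gives $\int g\,d\hat\mu=\lim_n\int g\,d\mu_n=\int_{X\times A}\int_X g\,dp(\cdot|x,a,\hat\mu)\,(\hat\mu\otimes\pi_{\hat\mu})(dx,da)$, i.e. $\hat\mu$ is invariant under $\pi_{\hat\mu}$; optimality of $\pi_{\hat\mu}$ against $\hat\mu$ is immediate from its definition as the greedy policy of $Q_{\hat\mu}$. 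Hence $(\pi_{\hat\mu},\hat\mu)$ is a mean-field equilibrium for the nominal model, so by uniqueness $\hat\mu=\mu_0$ and $\pi_{\hat\mu}=\pi_0$. As the limit is the same along every subsequence, $\mu_n\to\mu_0$ weakly, and rerunning the continuous $Q$-convergence along the full sequence yields $\pi_n(\cdot|x_n)\to\pi_0(\cdot|x)$ weakly for every $x_n\to x$, which is the claim.
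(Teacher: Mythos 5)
Your proposal is correct and follows essentially the same route as the paper's proof: tightness of $\{\mu_n\}$ from the moment-function bound, passage to a weakly convergent subsequence, continuous convergence of the fixed-point $Q$-functions via value iteration at fixed measures combined with a three-term triangle inequality, application of Lemma~\ref{keylemma} to get convergence of $\mu_n\otimes\pi_n$, passage to the limit in the invariance relation via \cite[Theorem 3.3]{MR705462}, and closure by uniqueness of the nominal equilibrium. If anything, your explicit uniform contraction estimate $\|Q_n-Q_n^{(k)}\|_w\le(\alpha\bar\beta)^kR$ justifying the interchange of the limits in $k$ and $n$ is spelled out more carefully than in the paper, which only asserts that the corresponding terms vanish ``due to the value iteration.''
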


\begin{proof}
We first show that $G_n(x_n,Q_n,\mu _n, a)$ converges to $G(x ,Q,\mu , a)$ uniformly in $A$ whenever $(x_n, \mu _n) \rightarrow (x,\mu)$ in $X \times \mathcal P(X)$ and $Q_{n,\min}(x) \rightarrow Q_{\min}(x)$ continuously.

Using triangle inequality we have the following bound
\begin{align*}
&\sup_{a \in A}|G_n(x_n,Q_n,\mu_n,a)-G(x,Q,\mu,a)| 
\\& \leq \sup_{a \in A}|c_n(x_n,a_n,\mu_n)-c(x,a,\mu)|+|\beta_n-\beta|\sup_{a \in A} \bigg|\int_X Q_{\min}(y)p(dy|x,a,\mu) \bigg|
\\& \qquad +\sup_m \beta _m  \sup_{a \in A} \bigg | \int_X Q_{n,\min}(y)p_n(dy|x_n,a,\mu_n) - \int_X Q_{n,\min}(y)p(dy|x_n,a,\mu_n) \bigg |
\\& \qquad + \sup_m \beta _m \sup _{a \in A} \bigg | \int_X Q_{n,\min}(y,a)p(dy|x_n,a,\mu_n) - \int _X Q_{\min}(y)p(dy|x,a,\mu) \bigg |.
\end{align*}
Since $c_n$ converges to $c$ continuously, we have 
$\sup_a|c_n(x_n,a,\mu_n) - c(x,a,\mu)| \rightarrow 0$ as $n \rightarrow \infty .$ Furthermore, as $Q_{\min}(\cdot)$ is bounded by $w_{\max}(\cdot)$ and 
\[
\int_X w_{\max}(y)p(dy|x',a',\mu ') \leq \sup_{a \in A}w(x',a) < \infty,
\] 
we have
\[
|\beta _n - \beta|\sup_{a \in A} \bigg | \int_X Q_{\min}(y)p(dy|x,a,\mu) \bigg | \rightarrow 0.
\] 
Finally, since \( Q_{n,\min}(\cdot) \rightarrow Q_{\min}(\cdot) \) continuously, the inequality
\[
|Q_{n,\min} (\cdot)| \leq | Q_{n,\min}(\cdot)-Q_{\min}(\cdot) | + | Q_{\min}(\cdot)| \leq (\epsilon_n+M)w_{\max}(\cdot),
\]
where \( \epsilon_n \) is a sequence of positive numbers that converge to \( 0 \) and \( M \) is the bound of \( Q_{\min}(\cdot) \) under \( w_{\max} ,\) gives us that the family \( (Q_{n,\min}(\cdot))_n  \) is uniformly bounded by \( w_{\max} \). Thus, since
\[
\int_X w_{\max}(y)p(dy|x_n,a_n,\mu_n) \rightarrow \int_X w_{\max}(y)p(dy|x,a,\mu),
\]
we have
\[
\sup_m \beta _m \sup_a \bigg | \int_X Q_{n,\min}(y)p_n(dy|x_n,a,\mu_n) - \int_X Q_{n,\min}(y)p(dy|x_n,a,\mu_n) \bigg | \rightarrow 0,
\]
and
\[
\sup_m \beta _m \sup _a \bigg | \int_X Q_{n,\min}(y)p(dy|x_n,a,\mu_n) - \int _X Q_{\min}(y)p(dy|x,a,\mu) \bigg | \rightarrow 0 
\]
from \cite[Theorem 3.3]{MR705462}. Thus, a triangle inequality yields that $G_n(x_n,Q_n,a,\mu_n)$ converges to $G(x,Q,a,\mu)$ uniformly on $A.$

Using assumption vi) in the theorem, it is straightforward to prove that the sequence $(\mu_n)_n$ lives in a tight family of probability measures on $X$, and so, has an accumulation point. Let $\mu \in \mathcal P(X)$ be such limit point. Perhaps by switching to a subsequence, we will assume that $(\mu_n)_n$ itself  converges to an accumulation point $\mu$, which will not break the generality as we will show that the limit is state-measure of the mean equilibrium under the limiting game, which is unique by assumption. For each $n,$ it is well known in stochastic control theory that the value iteration operator $G_n(x,Q,\mu_n,a)$ has a unique fixed point $G_n(x,Q_n,\mu_n,a)=Q_n(x,a)$ such that $\supp \pi_n(\cdot|x) \subset \argmin_aQ_n(x,a)$ for all $x$. Furthermore, this fixed point can be obtained as a consequence of mere value iteration for given probability measure $\mu_n$, which is known to exist under the condition $\alpha \beta_n <1.$ 

Now, starting from a common initial $Q$-function, for simplicity say $Q_0=0,$ as a consequence of the argument above $G_n(x_n,Q_0,a,\mu_n)=Q^1_n(x_n,a)$ converges uniformly on $A$ to $G(x,Q_0,a,\mu)=Q^1(x,a)$, and so, by Lemma~\ref{keylemma}, $Q^1_{n,\min} \rightarrow Q^1_{\min}$ continuously. By induction then the very same argument gives that $Q^{k+1}_n=G_n(x_n,Q^k_n,a,\mu _n)$ converges uniformly on $A$ to $Q^{k+1}=G(x,Q^k,a,\mu).$ Let $Q(x,a) = G(x,Q,a,\mu)$ be the fixed point of the operator $G.$ Then we have

\begin{align*}
\sup_a | Q_n(x_n,a) - Q(x,a) | &\leq \sup_a |Q^k_n(x_n,a)-Q_n(x_n,a)| \\& \quad +\sup_a | Q^k_n(x_n,a)-Q^k(x,a) | 
\\& \quad +\sup_a | Q(x,a)-Q^k(x,a)|.
\end{align*}
The first and last terms converge to zero due to the value iteration, and the second term converges to zero from the argument above. Hence, $Q_n$ converges to $Q$ uniformly on $A$ and continuously on $X$. 

By Lemma~\ref{keylemma}, since 
$$\supp \pi_n(\cdot|x) \subset \argmin_a G_n(x_n,Q_n,\mu_n,a)= \argmin_a Q_n(x_n,a)$$ 
for all $x$, for any sequence $\nu_n$ converging to $\nu$ weakly in $\P(X)$, we have $\nu_n \otimes \pi_n \rightarrow \nu \otimes \delta_{f(x,Q,\mu)} = \nu \otimes \pi$ weakly. Hence, if we pick $\nu_n = \delta_{x_n}$ and $\nu = \delta_x$, where $x_n \rightarrow x$ in $X$, we have $\pi_n(\cdot|x_n) \rightarrow \pi(\cdot|x)$ weakly. To complete the proof, we need to prove that $(\pi,\mu)$ is a mean-field equilibrium for the nomimal model as it is unique, i.e., $(\pi,\mu) = (\pi_0,\mu_0)$. But since $\pi(\cdot|x) = \delta_{f(x,Q,\mu)}(\cdot)$ and $G(\cdot,Q,\mu,\cdot) = Q(\cdot,\cdot)$, the policy $\pi$ is optimal for $\mu$. Moreover, for any continuous bounded function $g$, we have
\begin{align*}
&\bigg | \int_X g(y)p(dy|x,f(x,Q,\mu),\mu)\mu(dx) - \int_X g(y)\mu(dy) \bigg | 
\\& \quad \leq \bigg | \int_X g(y)p(dy|x,f(x,Q,\mu),\mu)\mu(dx) 
\\& \qquad \qquad \qquad - \int_{X\times A} g(y)p_n(dy|x,a,\mu _n) \pi_n(da|x)\mu(dx) \bigg |
\\& \quad + \bigg | \int_{X\times A} g(y)p_n(dy|x,a,\mu _n) \pi_n(da|x) \mu(dx) 
\\& \qquad \qquad \qquad - \int_{X\times A} g(y)p_n(dy|x,a,\mu _n) \pi_n \mu _n(dx) \bigg |
\\& \quad + \bigg| \int_{X\times A} g(y)p_n(dy|x,a,\mu _n) \pi_n(da|x) \mu _n(dx) - \int_X g(y)\mu _n(dy) \bigg |
\\& \quad + \bigg| \int_X g(y)\mu _n(dy) - \int_X g(y)\mu (dy) \bigg |
\end{align*}
it follows that 
$$ \bigg | \int_X g(y)p(dy|x,f(x,Q,\mu),\mu)\mu(dx) - \int_X g(y)\mu(dy) \bigg | = 0 $$
since the first and second term converges to \( 0 \) due to Lemma~\ref{keylemma} and \cite[Theorem 3.3]{MR705462}, the third term is identically zero as $\mu_n$'s are state-measure in mean-field equilibria, and the last term converges to \( 0 \) due to the weak convergence. Hence, $(\pi,\mu)$, where $\pi(\cdot|x)=\delta_{\argmin_aQ(x,a)}(\cdot)$, is a mean-field equilibrium for the nominal model. Uniqueness of the mean-field equilibrium completes the proof.
\end{proof}

\begin{remark}
The reason behind rather vague definition of the space $\mathcal D$ is that, depending on the problem at hand one may need to consider handcrafted domain as the space of bounded Borel measurable functions might be too small or too big, depending on the situation.
\end{remark}

\section{Conclusion}

In summary, this paper delved into the robustness of stationary mean-field equilibria when confronted with model uncertainties, particularly in the context of infinite-horizon discounted cost functions. Our approach began by establishing conditions for the convergence of value iteration-based algorithms within mean-field games. Building upon these foundational results, we demonstrated that the mean-field equilibrium obtained through this iterative approach remains robust, even when the system dynamics exhibit misspecifications. Extending the implications of this robustness, we applied these findings to address the finite model approximation problem in mean-field games, revealing that by employing finely tuned state space quantization, we can achieve a remarkably close approximation of the mean-field equilibrium compared to the nominal model.


\end{document}